\newcommand{\algparbox}[1]{\parbox[t]{\dimexpr\linewidth-\algorithmicindent}{#1\strut}} 
\algnewcommand{\LineComment}[1]{\Statex \(\triangleright\) #1} 
\newcommand*\circled[1]{\tikz[baseline=(char.base)]{\node[shape=circle,draw,inner sep=1.6pt] (char) {#1};}}
\theoremstyle{plain}
\newtheorem{theorem}{Theorem}[section]
\newtheorem{proposition}[theorem]{Proposition}
\newtheorem{corollary}[theorem]{Corollary}
\newtheorem{lemma}[theorem]{Lemma}
\newtheorem{fact}[theorem]{Fact}
\theoremstyle{definition}
\newtheorem{definition}[theorem]{Definition}
\newtheorem{assumption}{Assumption}
\theoremstyle{remark}
\newcommand{\R}{\mathbb{R}}
\newcommand{\ra}{\rightarrow}
\newcommand{\rn}{\{0,1\}}
\newcommand{\eps}{\epsilon}
\newcommand\super[1]{^{(#1)}}
\DeclareMathOperator*{\argmin}{\arg\!\min}
\DeclareMathOperator*{\argmax}{\arg\!\max}
\DeclareMathOperator{\sgn}{sgn}
\DeclarePairedDelimiterX{\inp}[2]{\langle}{\rangle}{#1, #2}
\DeclarePairedDelimiterX{\norm}[1]{\lVert}{\rVert}{#1}
\DeclarePairedDelimiterX{\abs}[1]{|}{|}{#1}
\DeclarePairedDelimiterX{\parentheses}[1]{\lparen}{\rparen}{#1}
\newcommand{\bo}{O\parentheses}
\newcommand{\so}{\widetilde{O}\parentheses}
\newcommand{\zo}{O^{*}\parentheses}
\newcommand{\spa}[1]{\textup{Span}\parentheses}
\DeclarePairedDelimiterX{\bra}[1]{\langle}{|}{#1}
\DeclarePairedDelimiterX{\ket}[1]{|}{\rangle}{#1}
\DeclarePairedDelimiterX{\bracket}[3]{\langle}{\rangle}{#1|#2|#3}
\DeclarePairedDelimiterX{\ip}[2]{\langle}{\rangle}{#1|#2}
\DeclarePairedDelimiterX{\expectarg}[1]{[}{]}{\activatebar#1} 
\newcommand{\innermid}{\nonscript\;\delimsize\vert\allowbreak\nonscript\;}
\newcommand{\activatebar}{\begingroup\lccode`\~=`\| \lowercase{\endgroup\let~}\innermid \mathcode`|=\string"8000}
\newcommand{\pb}{\Pr\expectarg} 
\newcommand{\ex}{\mathbb{E}\expectarg} 
\newcommand{\fl}{f} 
\newcommand{\gl}{g} 
\newcommand{\eo}{\mathrm{EO}}
\newcommand{\lv}{Lov\'asz}
\newcommand{\dis}[1]{\mathcal{D}_{#1}}
\newcommand{\xopt}{x^{\star}}
\newcommand{\wg}{\widetilde{g}}
\newcommand{\hg}{\widehat{g}}
\newcommand{\wwg}{\widetilde{\raisebox{0pt}[0.85\height]{$\widetilde{g}$}}{}}
\newcommand{\wdi}{\widetilde{d}}
\newcommand{\gd}{\mathbf{V}}
\newcommand{\ps}{_+}
\newcommand{\ms}{_-}
\newcommand{\samplg}{\textup{\textsf{GSample}}}
\newcommand{\sampld}{\textup{\textsf{GDSample}}}
\newcommand{\cgs}{\mathcal{C}\text{-}\textup{\textsf{GS}}}
\newcommand{\cgds}{\mathcal{C}\text{-}\textup{\textsf{GDS}}}
\newcommand{\qgs}{\mathcal{Q}\text{-}\textup{\textsf{GS}}}
\newcommand{\qgds}{\mathcal{Q}\text{-}\textup{\textsf{GDS}}}
\newcommand{\costg}{c_{\textup{\tiny\textsf{GS}}}}
\newcommand{\costd}{c_{\textup{\tiny\textsf{GDS}}}}
\newcommand{\ds}{\textup{\textsf{D}}} 
\title{Quantum and Classical Algorithms for \\ Approximate Submodular Function Minimization}
\author{Yassine Hamoudi\thanks{Universit\'e de Paris, IRIF, CNRS, F-75013 Paris, France.}
   \and Patrick Rebentrost\thanks{Centre for Quantum Technologies, National University of Singapore, Singapore 117543.}
   \and Ansis Rosmanis\footnotemark[2]
   \and Miklos Santha\thanks{Universit\'e de Paris, IRIF, CNRS, F-75013 Paris, France;  and Centre for Quantum Technologies and  MajuLab UMI 3654, National University of Singapore, Singapore 117543.}}
\date{}
\begin{document}

\maketitle
\thispagestyle{empty}
\setcounter{page}{0}


\begin{abstract}
  Submodular functions are set functions mapping every subset of some ground set of size $n$ into the real numbers and satisfying the diminishing returns property. Submodular minimization is an important field in discrete optimization theory due to its relevance for various branches of mathematics, computer science and economics. The currently fastest strongly polynomial algorithm for exact minimization~\cite{LSW15c} runs in time $\so{n^3 \cdot \eo + n^4}$ where $\eo$ denotes the cost to evaluate the function on any set. For functions with range $[-1,1]$, the best $\eps$-additive approximation algorithm~\cite{CLSW17c} runs in time $\so{n^{5/3}/\eps^{2} \cdot \eo}$.

  In this paper we present a classical and a quantum algorithm for approximate submodular minimization. Our classical result improves on the algorithm of \cite{CLSW17c} and runs in time $\so{n^{3/2}/\eps^2 \cdot \eo}$. Our quantum algorithm is, up to our knowledge, the first attempt to use quantum computing for submodular optimization. The algorithm runs in time $\so{n^{5/4}/\eps^{5/2} \cdot \log(1/\eps) \cdot \eo}$. The main ingredient of the quantum result is a new method for sampling with high probability $T$ independent elements from any discrete probability distribution of support size $n$ in time $\bo{\sqrt{Tn}}$. Previous quantum algorithms for this problem were of complexity $\bo{T\sqrt{n}}$.
\end{abstract}

\newpage


\section{Introduction}

  \subsection{Submodular Minimization}

A submodular function $F$ is a function mapping every subset of some finite set $V$ of size $n$ into the real numbers and satisfying the diminishing returns property: for every $A \subseteq B \subseteq V$ and for every $i \not\in B$, the inequality $F(A \cup \{i\}) - F(A) \geq F(B \cup \{i\}) - F(B)$ holds. In words, given two sets where one of them contains the other, adding a new item to the smaller set increases the function value at least as much as adding that element to the bigger set. Many classical functions in mathematics, computer science and economics are submodular, the most prominent examples include entropy functions, cut capacity functions, matroid rank functions and utility functions. Applications of submodular functions, or slight variants of them, occur in areas as far reaching as machine learning~\cite{Bac13j, KC10c, NKA11c}, operations research~\cite{IB13c, QS95c}, electrical networks~\cite{Nar09b}, computer vision~\cite{Hoc01j}, pattern analysis~\cite{BVZ01j} and speech analysis~\cite{LB10p}.

Submodular functions show analogies both with concavity and convexity. The diminishing returns property makes them akin to concave functions, but they have algorithmic properties similar to convex functions. In particular, while it follows from the NP-hardness of maximum cut that submodular maximization is NP-hard, submodular minimization can be solved in polynomial time, in fact even in strongly polynomial time. The link between submodular functions and convex analysis is made explicit through the \lv\ extension~\cite{Lov82c}. There are various approaches to solve submodular minimization. The foundational work of  Gr\"otschel, \lv\ and Schrijver~\cite{GLS81j} gave the first polynomial time algorithm using the ellipsoid method. The first pseudo-polynomial algorithm using a combinatorial method appeared in the influential paper of Cunningham~\cite{Cun85j}. In a later work Gr\"otschel, \lv\ and Schrijver~\cite{GLS88b} were the first to design a strongly polynomial time algorithm, and the first strongly polynomial time combinatorial algorithms were given by Schrijver~\cite{Sch00j} and by Iwata, Fleischer and Fujishige~\cite{IFF01j}. Many of these works assume an access to an evaluation oracle for the function $F$, where the time of a query is denoted by $\eo$.

The current fastest submodular minimization algorithm is by Lee, Sidford and Wong~\cite{LSW15c}. Their weakly polynomial algorithm runs in time $\so{n^2 \log M \cdot \eo + n^3 \log^{O(1)}M}$ and their strongly polynomial algorithm runs in time $\so{n^3 \cdot \eo + n^4}$, where $M$ is an upper bound on the integer valued function and the notation $\so{}$ hides polylogarithmic factors in $n$. Both algorithms apply a new cutting plane method given in the same paper and use the \lv\ extension. Our work is most closely related to the recent paper of Chakrabarty et al.~\cite{CLSW17c} who gave an $\so{nM^3 \cdot \eo}$ algorithm in the setting of~\cite{LSW15c}, and an $\eps$-additive approximation algorithm that runs in time $\so{n^{5/3}/\eps^2 \cdot \eo}$ for real valued submodular functions with range $[-1,1]$. These algorithms were the first to run in subquadratic time in $n$, by going beyond the direct use of the subgradients of the \lv\ extension. Indeed, it is proven in~\cite{CLSW17c} that any algorithm accessing only subgradients of the \lv\ extension has to make $\Omega(n^2)$ queries. Such algorithms include~\cite{LSW15c} and the Fujishige-Wolfe algorithm~\cite{Fuj80j, Wol76j}. Subquadratic approximate algorithms (such as in \cite{CLSW17c} or our work) can potentially lead to insights to the exact case. They are also more practical when the scaling in $n$ is more important than whether or not the algorithm is exact.

\subsection{Quantum Algorithms for Optimization}

A quite successful recent trend in quantum computing is to design fast quantum algorithms for various optimization and machine learning problems. At a high level, these algorithms are constructed in various subtly different input/output models \cite{Aar15j,BWP+17j,CHI+18j,vAG19c}. In the model that we are working with in this paper, the input is given by an oracle that can be accessed in quantum superposition, and the output is classical. The algorithms in this model are often hybrid, that is partly of classical and partly of quantum nature, and designed in a modular way so that the quantum part of the algorithm can be treated as a separate building block. In fact, a standard feature of these algorithms is that they make a quantum improvement on some part of the (best) available classical algorithm, but they keep its overall structure intact. In most cases the quantum versus classical speed-up is at most polynomial, usually at most quadratic. While we expect the quantum algorithm to deliver some speed-up at least in one of the input parameters, sometimes it might be worse than the best classical algorithm in some other parameters.

We mention here some of the fastest optimization algorithms in the quantum oracle model. In this paragraph we use the notation $\zo{}$ to hide polylogarithmic factors in any of the arguments. For solving an SDP with $m$ constraints involving $n \times n$ matrices, van Apeldoorn and Gily\'en~\cite{vAG19c} gave an algorithm that runs in time $\zo[\big]{(\sqrt{m} + \frac{\sqrt{n}}{\gamma})s / \gamma^4}$ where $s$ is the row-sparsity of the input matrices and $\gamma = \eps/Rr$ is the additive error $\eps$ of the algorithm scaled down with the upper bounds $R$ and $r$ on the respective sizes of the primal and dual solutions. This result builds on the classical Arora-Kale framework \cite{AK16j} that runs in time $\zo{mns/\gamma^4 + ns/\gamma^7}$, which was first quantized by Brand\~ao and Svore \cite{BS17c}. In~\cite{LCW19c} Li, Chakrabarti and Wu gave an $\zo{\sqrt{n}/ \eps^4 + \sqrt{d}/ \eps^8}$ time quantum algorithm for the classification of $n$ data points in dimension $d$ with margin $\eps$. Their design is the quantization of the work of Clarkson, Hazan and Woodruff~\cite{CHW12j} that runs in time $\zo{(n + d) / \eps^2}$. The same paper contains  similar quadratic quantum improvements over the classical constructions of~\cite{CHW12j} for kernel based classification, minimum enclosing ball and $\ell^2$-margin SVM, as well as an $\zo{\sqrt{n}/ \eps^4}$ time algorithm for zero-sum games with $n \times n$ payoff matrices. A similar result for zero-sum games, but with a better dependence on the error parameter, was obtained by van Apeldoorn and Gily\'en~\cite{vAG19p} whose algorithm runs in time $\zo{\sqrt{n}/ \eps^3}$. Both quantum algorithms for zero-sum games are based on the classical work of Grigoriadis and Khachiyan~\cite{GK95j} whose complexity is $\zo{n/ \eps^2}$. Finally, there is a series of quantum algorithms \cite{Jor05j,GAW19c,vAGGdW20j,CCLW18p} for fast gradient computation, which are typically combined with classical first order methods such as gradient descent.

  \subsection{Previous Work}
  The previous work on approximate submodular minimization \cite{HK12j,Bac13j,CLSW17c} is based on the subgradient descent method applied to the \lv\ extension. We review these results below, as it will help to present our contributions in the next section. From now on, we restrict our attention to submodular functions $F$ with range $[-1,1]$ and we seek for a set $\bar{S}$ such that $F(\bar{S}) \leq \min_S F(S) + \eps$.

\paragraph{Subgradient descent in \cite{HK12j}.}
Submodular minimization can be translated into a convex optimization problem by considering the so-called \lv\ extension $f$. This makes it possible to apply standard gradient algorithms. Since $f$ is not differentiable, one can rely on the subgradient descent method that computes a sequence of iterates $x\super{t}$ converging to a minimum of $f$. At each step, the next iterate $x\super{t+1}$ is obtained by moving into the negative direction of a subgradient $g\super{t}$ at $x\super{t}$. In the case of submodular functions, there exists a natural choice for $g\super{t}$, sometimes called the \lv\ subgradient, that requires $\bo{n / \eps^2}$ steps to converge to an $\eps$-approximate of the minimum. Since the \lv\ subgradient can be computed in time $\bo{n \cdot \eo + n \log n}$, the complexity of this approach is $\so{n^2 / \eps^2 \cdot \eo}$ \cite{HK12j}. The question arises if it is possible to find algorithms that scale better than $n^2$, i.e. that are subquadratic in the dimension.

\paragraph{Stochastic subgradient descent in \cite{CLSW17c}.}
In the above method, the subgradient $g\super{t}$ can equally be replaced with a stochastic subgradient, that is a low-variance estimate $\wg\super{t}$ satisfying $\ex{\wg\super{t} | x\super{t}} = g\super{t}$. One possible choice for $\wg\super{t}$ is the \emph{subgradient direct estimate} $\hg\super{t}$ defined as $\hg\super{t} = \norm{g\super{t}}_1 \sgn(g\super{t}_i) \cdot \vec{1}_i$ where $i \in [n]$ is sampled with probability $\abs{g\super{t}_i}/\norm{g\super{t}}_1$. The $\ell_1$-norm of the \lv\ subgradient being small, this is a low-variance $1$-sparse estimate of $g\super{t}$. However, it is unknown how to sample $\hg\super{t}$ faster than $\bo{n \cdot \eo + n \log n}$. Thus, using $\hg\super{t}$ at each step of the descent would not be more efficient than using the actual subgradient $g\super{t}$. Instead, the approach suggested in \cite{CLSW17c} is to use $\wg\super{t} = \hg\super{t}$ only once every $T = n^{1/3}$ steps, and in between to use $\wg\super{t} = \wg\super{t-1} + \wdi\super{t}$ where $\wdi\super{t}$ is an estimate of the subgradient difference $d\super{t} = g\super{t} - g\super{t-1}$. The crucial result in \cite{CLSW17c} is to show how to construct $\wdi\super{t}$ in time only proportional to the sparsity of $x\super{t} - x\super{t-1}$. This process has to be reset every $T$ steps since the variance and the sparsity of $\wg\super{t}$ increase over time (and therefore the sparsity of $x\super{t} - x\super{t-1}$ too). The amortized cost per step for constructing $\wg\super{t}$ is $\so{n^{2/3} \cdot \eo}$, leading to an $\so{n^{5/3}/\eps^2 \cdot \eo}$ algorithm.

It seems to us that there is a slight error in the construction of \cite{CLSW17c} because the estimate $\wg\super{t} = \wg\super{t-1} + \wdi\super{t}$ is not a valid stochastic subgradient, that is $\ex{\wg\super{t} | x\super{t}} \neq g\super{t}$. Indeed, even if $\wg\super{t-1}$ is an unbiased estimate of $g\super{t-1}$ conditioned on $x\super{t-1}$, in general that is not true conditioned on $x\super{t}$ (see a counterexample in Appendix \ref{App:counter}), thus $\ex{\wg\super{t} | x\super{t}} = \ex{\wg\super{t-1} | x\super{t}} + g\super{t} - g\super{t-1} \neq g\super{t}$. Nonetheless, this problem can be easily solved by sampling a second estimate $\wwg\super{t-1}$ of $g\super{t-1}$ such that, when conditioned on $x\super{t-1}$, it becomes independent of $\wg\super{t-1}$. Then $\wg\super{t}$ is redefined as $\wg\super{t} = \wwg\super{t-1} + \wdi\super{t}$. In this case, $x\super{t}$ does not convey any information about $\wwg\super{t-1}$ when conditioned on $x\super{t-1}$, implying that $\ex{\wwg\super{t-1} | x\super{t}} = g\super{t-1}$. However, in order to make $\wwg\super{t-1}$ independent of $\wg\super{t-1}$, \emph{all the previous} estimates involved in the computation of $\wg\super{t-1}$ have to be resampled. Since the construction of $\wg\super{t}$ was decomposed into batches of $T$ steps, it means that there are between $1$ and $T$ estimates to resample at each step. Nevertheless, a straightforward analysis shows that asymptotically the time complexity remains as stated in \cite{CLSW17c}.

  \subsection{Our Contributions}
  Our method also consists of minimizing the \lv\ extension of the submodular function under consideration by using the stochastic subgradient descent algorithm. We differ from \cite{HK12j,CLSW17c} by constructing a new subgradient oracle that is faster to evaluate. In the quantum model, our further speed-up is based on two new results that might be of independent interest. One is a simple proof of robustness for the (classical) stochastic subgradient descent method when the subgradient oracle has some biased noise. The other one is a new quantum algorithm for sampling multiple independent elements from discrete probability distributions. These results are detailed below.

\paragraph{Classical algorithm.} Similarly to \cite{CLSW17c}, we construct our subgradient oracle $\wg\super{t}$ by combining two kinds of estimates (Section \ref{Sec:frame}). Our construction is reset every $T = n^{1/2}$ steps, which turns out to be the optimal resetting time in our case. We explain how to compute the first $T$ terms $\wg\super{0}, \dots, \allowbreak \wg\super{T-1}$. First, we obtain $T$ independent samples $\hg\super{0,0}, \cdots, \hg\super{0,T-1}$ from the subgradient direct estimate at $x\super{0}$. Using a standard sampling method (Lemma \ref{Lem:samplClass}), this can be done in time $\so{n \cdot \eo + T}$ (Proposition \ref{Prop:samplgCl}). Then, the first subgradient estimate is chosen to be $\wg\super{0} = \hg\super{0,0}$, and the other ones are obtained at step $t$ by combining $\hg\super{0,t}$ with an estimate $\wdi\super{t}$ of the \lv\ subgradient difference $d\super{t} = g\super{t} - g\super{0}$, that is $\wg\super{t} = \hg\super{0,t} + \wdi\super{t}$. Notice that the difference is not taken between two consecutive iterates $x\super{t-1}$ and $x\super{t}$ as in \cite{CLSW17c}, but between the first iterate $x\super{0}$ and the current one $x\super{t}$. This has the advantage of keeping the variance under control since we add up two terms instead of $t+1$. Moreover, the sparsity increases only linearly, instead of quadratically, in $t$. Our procedure for constructing $\wdi\super{t}$ is directly adapted from \cite{CLSW17c}, with time complexity $\so{t \cdot \eo}$ (Proposition \ref{Prop:sampldCl}). Consequently, the first $T$ estimates are obtained in time $\so[\big]{(n \cdot \eo + T) + \sum_{t=1}^{T-1} t \cdot \eo} = \so{n \cdot \eo}$. Since the $\bo{n / \eps^2}$ steps of the subgradient descent are split into $\bo{\sqrt{n} / \eps^2}$ batches of length $T$, it follows that the total time complexity is $\so{n^{3/2}/\eps^2 \cdot \eo}$.

\vspace*{10pt}
{\noindent \bf Statement of Theorem \ref{Thm:clSub}.} {\itshape There is a classical algorithm that, given a submodular function $F : 2^V \ra [-1,1]$ and $\eps > 0$, computes a set $\bar{S}$ such that $\ex{{F}(\bar{S})} \leq \min_{S \subseteq V} F(S) + \eps$ in time $\so{n^{3/2} /\eps^{2} \cdot \eo}$.}

\paragraph{Quantum algorithm.} We first note that there is a simple $\so{n^{3/2}/\eps^3 \cdot \eo}$ quantum algorithm using only the subgradient direct estimate $\hg\super{t}$. The latter was defined as $\hg\super{t} = \norm{g\super{t}}_1 \sgn(g\super{t}_i) \cdot \vec{1}_i$ where $i \in [n]$ is sampled from the probability distribution $\bigl(\abs{g\super{t}_1}/\norm{g\super{t}}_1, \allowbreak\dots, \abs{g\super{t}_n}/\norm{g\super{t}}_1\bigr)$. It is a standard result that one sample from any discrete probability distribution $(p_1, \dots, p_n)$ (given as an evaluation oracle) can be obtained in time $\bo{\sqrt{n \cdot \max_i p_i} \cdot \eo} = \bo{\sqrt{n} \cdot \eo}$ by quantum state preparation of $\sum_{i \in [n]} \sqrt{p_i} \ket{i}$ (Lemma \ref{Lem:samplOne}). Moreover, the $\ell_1$-norm of any $n$-coordinates vector can be estimated with accuracy $\eps$ in time $\bo{\sqrt{n}/\eps \cdot \eo}$ using the Amplitude Estimation algorithm (Lemma \ref{Lem:amplEst}). A straightforward combination of these two results leads to an $\eps$-biased estimate $\hg_{\eps}\super{t}$, satisfying $\norm{\ex{\hg_{\eps}\super{t} | x\super{t}} - g\super{t}}_1 \leq \eps$, that can be computed in time $\so{\sqrt{n}/\eps \cdot \eo}$. This does not meet the usual requirement of an unbiased estimate for the stochastic subgradient descent method. However, we prove that the latter is robust to such a noise (Proposition \ref{Prop:gradDescent}). This leads to an $\so{n^{3/2}/\eps^3 \cdot \eo}$ algorithm for approximate submodular minimization.

We now describe our quantum algorithm achieving time complexity $\so{n^{5/4}/\eps^{5/2} \cdot \log(1/\eps) \cdot \eo}$, based on our enhanced classical algorithm. Similarly to the simple result described above, we accelerate the construction of the subgradient estimate $\wg\super{t}$ using quantum sampling. A first attempt would be to apply the quantum state preparation method to sample each estimate individually. However, the computation of the first $T$ estimates of $g\super{0}$, for instance, would incur a cost of $\so{T \sqrt{n}/\eps \cdot \eo}$, which is worse than classically when $T = n^{1/2}$ (we could change the value of $T$ but that does not improve the overall complexity). We overcome this issue by using a new quantum multi-sampling algorithm for sampling $T$ independent elements from any discrete probability distribution $(p_1, \dots, p_n)$ in time $\bo{\sqrt{Tn} \cdot \eo}$, instead of $\bo{T\sqrt{n} \cdot \eo}$. This algorithm is described in the next paragraph. It leads to our second main result.

\vspace*{10pt}
{\noindent \bf Statement of Theorem \ref{Thm:quSub}.} {\itshape There is a quantum algorithm that, given a submodular function $F : 2^V \ra [-1,1]$ and $\eps > 0$, computes a set $\bar{S}$ such that $\ex{{F}(\bar{S})} \leq \min_{S \subseteq V} F(S) + \eps$ in time $\so{n^{5/4} /\eps^{5/2} \cdot \log(\frac{1}{\eps}) \cdot \eo}$.}

\paragraph{Quantum multi-sampling algorithm.} We now sketch the algorithm for sampling $T$ elements from $p = (p_1, \dots, p_n)$ in time $\bo{\sqrt{Tn} \cdot \eo}$ (here $\eo$ is the time complexity of a quantum evaluation oracle $\mathcal{O}_p$ satisfying $\mathcal{O}_p(\ket{i} \ket{0}) = \ket{i} \ket{p_i}$ for all $i$). First, we find the set $S$ of all the coordinates $i \in [n]$ where $p_i$ is larger than $1/T$. Since there are at most $T$ values to find, $S$ can be computed in time $\bo{\sqrt{Tn} \cdot \eo}$ using Grover search. Then, we load in time $\bo{T \cdot \eo}$ the conditional distribution $(p_i/p_S)_{i \in S}$, where $p_S = \sum_{i \in S} p_i$, into a classical data structure \cite{Vos91j} that supports fast sampling from $(p_i/p_S)_{i \in S}$ in time $\bo{1}$. On the other hand, we can sample from the complement distribution $(p_i/(1-p_S))_{i \notin S}$ using quantum state preparation of $\frac{1}{\sqrt{1-p_S}} \sum_{i \notin S} \sqrt{p_i} \ket{i}$ in time $\bo{\sqrt{n \cdot \max_{i \notin S} p_i/(1-p_S)} \cdot \eo} = \bo{\sqrt{n/(T(1-p_S))} \cdot \eo}$. Now, each of the $T$ samples is obtained by first flipping a coin that lands head with probability $p_S$, and then sampling $i \in S$ from the classical data structure (head case) or $i \notin S$ by quantum state preparation (tail case). The total expected time is $\bo[\big]{T p_S \cdot 1 + T(1-p_S) \cdot \sqrt{n/(T(1-p_S))} \cdot \eo} = \bo{\sqrt{Tn} \cdot \eo}$ (assuming $T < n$). Additional technicalities, arising from the fact that $(p_1, \dots, p_n) = (u_1/\norm{u}_1,\dots,u_n/\norm{u}_1)$ may be given as an unormalized vector $(u_1,\dots,u_n)$, are also discussed in this paper (Section \ref{Sec:Sampling}).

\vspace*{10pt}
{\noindent \bf Statement of Theorem \ref{Thm:setupEasy}.} {\itshape There is a quantum algorithm that, given an integer $1 < T < n$, a real $0 < \delta < 1$, and an evaluation oracle to a discrete probability distribution $\dis{} = (p_1, \dots, p_n)$, outputs $T$ independent samples from $\dis{}$ in expected time $\bo[\big]{\sqrt{Tn} \log(1/\delta) \cdot \eo}$ with probability $1-\delta$.}

  \subsection{Organization of the Paper}

  Our algorithms are presented in a modular way. In Section \ref{Sec:frame}, we give the common framework to the classical and quantum algorithms. Then, we specialize it to each setting in Sections \ref{Sec:classApprox} and \ref{Sec:quantApprox} respectively. A data structure, which is common to both models, is described in Section \ref{Sec:datastruc}. The robustness of the stochastic subgradient descent (Proposition \ref{Prop:gradDescent}), and the quantum algorithm for sampling from discrete probability distributions (Section \ref{Sec:Sampling}) can be read independently from the rest of the paper. The reader interested only in the classical algorithm can skip Sections \ref{Sec:Sampling} and \ref{Sec:quantApprox}.

  \subsection{Recent Improvement}

  After having finished this paper, we have been informed through personal communication that Axelrod, Liu and Sidford have discovered a classical nearly linear time algorithm for approximate submodular function minimization \cite{ALS20c}. Their result, like ours, improves on the work of Chakrabarty et al.~\cite{CLSW17c}, and it outperforms both our algorithms.

  \subsection{Open Questions}

  It seems to us that, in order to achieve a quantum speed-up over the best classical algorithms for approximate \cite{ALS20c} or exact \cite{LSW15c} submodular function minimization, one would most likely have to speed-up the gradient descent or cutting plane methods respectively. This latter problem is notoriously open in the quantum setting. Another more amenable question is  whether the $\Omega(n)$ lower bound for exact minimization \cite{Har08d} carries over to the quantum oracle model, and whether $\Omega(n/\eps^2)$ is a lower bound in the approximate case. Finally, what can be other applications of our quantum multi-sampling algorithms beyond submodular function minimization?

\section{Preliminaries}
\label{Sec:Prelim}
\paragraph{Notations.}
Let $[n] = \{1,\dots,n\}$. Given a vector $u \in \R^n$ and a positive integer $p$, we let $\norm{u}_p = \bigl(\sum_{i \in [n]} |u_i|^p\bigr)^{1/p}$ be the $\ell_p$-norm of $u$, and $\norm{u}_{\infty} = \max_{i \in [n]} |u_i|$ be the largest entry (in absolute value). We say that $u$ is $k$-sparse if it has at most $k$ non-zero entries. We denote by $u\ps \in \R^n$ (resp. $u\ms \in \R^n$) the vector obtained from $u$ by replacing its negative (resp. positive) entries with $0$ (thus, $u = u\ps + u\ms$). Given two vectors $u, u' \in \R^n$, we use $u \geq u'$ (resp. $u \leq u'$) to denote that $u - u' \in \R^n\ps$ (resp. $u - u' \in \R^n\ms$). We also let $\sgn(u)$ to be $1$ if $u \geq 0$, and $-1$ otherwise. Given a set $S \subseteq [n]$, we denote by $u_S \in \R^{|S|}$ the subvector $(u_i)_{i \in S}$ of $u$ made of the values at coordinates $i \in S$. If $u$ is a non-zero vector, we define $\dis{u}$ to be the probability distribution $\left(\frac{\abs{u_1}}{\norm{u}_1},\dots,\frac{\abs{u_n}}{\norm{u}_1}\right)$ on $[n]$. Finally, we let $\vec{1}_i \in \R^n$ be the indicator vector with a $1$ at position $i \in [n]$ and $0$ elsewhere.

\paragraph{\lv\ extension.}
A submodular function $F$ is a set function $F : 2^V \ra \R$, over some ground set $V$ of size $n$, that satisfies the diminishing returns property: for every $A \subseteq B \subseteq V$ and for every $i \not\in B$, the inequality $F(A \cup \{i\}) - F(A) \geq F(B \cup \{i\}) - F(B)$ holds. For convenience, and without loss of generality, we assume that $V = [n]$ and $F(\varnothing) = 0$ (this can be enforced by observing that $S \mapsto F(S) - F(\varnothing)$ is still a submodular function). The \lv\ extension $f : [0,1]^n \ra \R$ is a convex relaxation of $F$ to the hypercube $[0,1]^n$. Before describing it, we present a canonical way to associate a permutation $P$ with each $x \in [0,1]^n$.

\begin{definition}
  Given a permutation $P = (P_1, \dots, P_n)$ of $[n]$, we say that $P$ is \emph{consistent} with $x \in \R^n$ if $x_{P_1} \geq x_{P_2} \geq \dots \geq x_{P_n}$, and $P_{i+1} > P_i$ when $x_{P_i} = x_{P_{i+1}}$ for all $i$. We also denote $P[i] = \{P_1,\dots,P_i\} \subseteq [n]$ the set of the first $i$ elements of $P$, and $P[0] = \varnothing$.
\end{definition}

As an example, the permutation $P$ consistent with $x = (0.3,0.2,0.3,0.1)$ is $P = (1,3,2,4)$.

\begin{definition}
  Given a submodular function $F : 2^V \ra \R$ over $V = [n]$, the \emph{\lv\ extension} $f : [0,1]^n \ra \R$ of $F$ is defined for all $x \in [0,1]^n$ by $f(x) = \sum_{i \in [n]} (F(P[i]) - F(P[i-1])) \cdot x_{P_i}$ where $P$ is the permutation consistent with $x$. The \emph{\lv\ subgradient} $g(x) \in \R^n$ at $x \in [0,1]^n$ is defined by $g(x)_{P_i} = F(P[i]) - F(P[i-1])$ for all $i \in [n]$.
\end{definition}

The following standard properties of the \lv\ extension \cite{Lov82c,Bac13j,JB11c} will be used in this paper.

\begin{proposition}
  \label{Prop:lv}
  The \lv\ extension $f$ of a submodular function $F$ is a convex function. Moreover, given $x \in [0,1]^n$ and the permutation $P$ consistent with $x$, we have
  \begin{enumerate}
    \item \emph{\textbf{(Subgradient)}} For all $y \in [0,1]^n$, $\inp{g(x)}{x-y} \geq f(x) - f(y)$.
    \item \emph{\textbf{(Minimizers)}} $\min_{i \in [n]} F(P[i]) \leq f(x)$ and $\min_{S \subseteq V} F(S) = \min_{y \in [0,1]^n} f(y)$.
    \item \emph{\textbf{(Boundedness)}} If the range of $F$ is $[-1,1]$ then $\norm{g(x)}_2 \leq \norm{g(x)}_1 \leq 3$.
  \end{enumerate}
\end{proposition}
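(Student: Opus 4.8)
The plan is to derive all four claims from a single engine: the greedy (Edmonds) characterization of the \lv\ extension as the support function of the base polytope. Let $B(F) = \{w \in \R^n : \sum_{i \in S} w_i \leq F(S) \text{ for all } S \subseteq [n], \text{ and } \sum_{i \in [n]} w_i = F([n])\}$. I would first prove the key lemma: for every $y \in [0,1]^n$ with consistent permutation $P$, the subgradient $g(y)$ lies in $B(F)$ and maximizes $w \mapsto \inp{w}{y}$ over $B(F)$, so that $f(y) = \inp{g(y)}{y} = \max_{w \in B(F)} \inp{w}{y}$. The identity $\inp{g(y)}{y} = f(y)$ is immediate from the definitions, so the content is entirely in the two claims about $g(y)$.

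For membership, I would fix $S$ and list its elements in the order they occur in $P$, say at positions $i_1 < \dots < i_k$, so that $\sum_{i \in S} g(y)_i = \sum_{j=1}^k \bigl(F(P[i_j]) - F(P[i_j-1])\bigr)$. Since $\{P_{i_1},\dots,P_{i_{j-1}}\} \subseteq P[i_j-1]$, diminishing returns gives $F(P[i_j]) - F(P[i_j-1]) \leq F(\{P_{i_1},\dots,P_{i_j}\}) - F(\{P_{i_1},\dots,P_{i_{j-1}}\})$, and these telescope to $F(S)$; for $S = [n]$ the chain is exact, giving the equality constraint. I expect this telescoping-with-submodularity step to be the main obstacle, since it is the only place the diminishing-returns hypothesis is genuinely used and the index bookkeeping must be handled carefully. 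For optimality, I would abbreviate $W_i = \sum_{j \leq i} w_{P_j}$ and use Abel summation to write $\inp{w}{y} = \sum_{i=1}^n (y_{P_i} - y_{P_{i+1}}) W_i$ with the convention $y_{P_{n+1}} = 0$; the coefficients are nonnegative because $y \in [0,1]^n$ is sorted decreasingly along $P$, and $W_i \leq F(P[i])$ with equality when $w = g(y)$, so $\inp{w}{y} \leq \inp{g(y)}{y} = f(y)$.

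Given the lemma, convexity is immediate, since $f$ is a pointwise maximum of the linear maps $y \mapsto \inp{w}{y}$ over $w \in B(F)$. The subgradient inequality then follows by combining $f(x) = \inp{g(x)}{x}$ with $f(y) \geq \inp{g(x)}{y}$ (the latter because $g(x) \in B(F)$), which yields $f(x) - f(y) \leq \inp{g(x)}{x} - \inp{g(x)}{y} = \inp{g(x)}{x-y}$.

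For the minimizer claims I would pass to the threshold representation $f(x) = \int_0^1 F(\{i : x_i \geq \theta\})\,d\theta$, obtained from the definition by Abel summation: as $\theta$ sweeps $[0,1]$ the level set ranges over the chain $P[0] = \varnothing, P[1], \dots, P[n]$, so $f(x)$ is an average of the values $F(P[i])$ and hence $\min_{0 \leq i \leq n} F(P[i]) \leq f(x)$ (the minimum understood to include $P[0] = \varnothing$). Since each $P[i] \subseteq V$, this gives $\min_S F(S) \leq \min_y f(y)$, and the reverse inequality follows from $f(\vec{1}_S) = F(S)$ for every $S$, where $\vec{1}_S = \sum_{i \in S} \vec{1}_i$, proved by evaluating the definition at that $0/1$ point. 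Finally, for boundedness, membership $g(x) \in B(F)$ lets me bound the two parts of the $\ell_1$-norm separately: with $S^+ = \{i : g(x)_i > 0\}$ and $S^- = \{i : g(x)_i < 0\}$, the positive part is $\sum_{i \in S^+} g(x)_i \leq F(S^+) \leq 1$, while the negative part is $-\sum_{i \in S^-} g(x)_i = \sum_{i \notin S^-} g(x)_i - F([n]) \leq F([n] \setminus S^-) - F([n]) \leq 2$; summing gives $\norm{g(x)}_1 \leq 3$, and $\norm{g(x)}_2 \leq \norm{g(x)}_1$ is the standard norm comparison.
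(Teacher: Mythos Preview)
Your proof is correct and follows the classical Edmonds--\lv\ approach via the base polytope, which is precisely the argument found in the references~\cite{Lov82c,Bac13j} that the paper cites; the paper itself does not prove Proposition~\ref{Prop:lv} but simply invokes it as standard. One minor remark: you correctly include $P[0]=\varnothing$ in the minimum for the first minimizer claim, which is in fact necessary (take $x=0$ and $F(S)>0$ for all nonempty $S$); the paper's statement with $i\in[n]$ tacitly relies on $F(\varnothing)=0$ being available as a candidate, and your formulation is the more careful one.
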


Observe that the second property gives an explicit way to convert any $\bar{x} \in [0,1]^n$ such that $f(\bar{x}) \leq \min_{x \in [0,1]^n} f(x) + \eps$ into a set $\bar{S} \subseteq V$ such that $F(\bar{S}) \leq \min_{S \subseteq V} F(S) + \eps$. Consequently, we can focus on $\eps$-additive minimization of the \lv\ extension in the rest of the paper.

\subparagraph{Models of Computation.}
We describe the two models of computation used in this paper. Although the \lv\ extension is a continuous function, given $x \in [0,1]^n$ it is sufficient to evaluate $F$ on the sets $P[1], \dots, P[n]$ to compute $f(x)$, where $P$ is the permutation consistent with $x$. The same holds for the \lv\ subgradient. Consequently, given $P$, it is natural to define an evaluation oracle that given $i$ returns $F(P[i])$. The input $i$ to this oracle is encoded over $\bo{\log n}$ bits, whereas representing each of the sets $P[i]$ as an indicator vector over $\rn^n$ would require $n$ bits.
  \begin{itemize}
    \item \emph{Classical Model.} We use the same model as described in \cite{CLSW17c}. The submodular function $F$ can be accessed via an evaluation oracle that takes as input an integer $i \in [n]$ and a linked list storing a permutation $P$ of $[n]$, and returns the value of $F(P[i])$. We denote by $\eo$ the cost of one evaluation query to the oracle.
    \item \emph{Quantum Model.} We extend the above model to the quantum setting in a standard way. Given a permutation $P$ of $[n]$ stored in a linked list, we assume that we have access to a unitary operator $\mathcal{O}_P$ that, given $i \in [n]$, satisfies $\mathcal{O}_P(\ket{i} \ket{0}) = \ket{i} \ket{F(P[i])}$, where the second register holds a binary representation of $F(P[i])$ with some finite precision. We denote by $\eo$ the cost of one evaluation query to $\mathcal{O}_P$.
  \end{itemize}
The \lv\ extension $f(x)$ at $x$ can be evaluated in time $\bo{n \log n + n \cdot \eo}$ in the above models.

\section{Quantum Multi-Sampling for Discrete Probability Distributions}
\label{Sec:Sampling}
We study the problem of generating $T$ independent samples from a discrete probability distribution $\dis{u} = \bigl(\frac{\abs{u_1}}{\norm{u}_1},\dots,\frac{\abs{u_n}}{\norm{u}_1}\bigr)$ on $[n]$, where $u = (u_1,\dots,u_n) \in \R^n$ is a non-zero vector given as an evaluation oracle. This task is a fundamental part of Monte Carlo methods and discrete events simulation \cite{Dev86b,BFS87b}. Here, it will be used to construct randomized estimators of the \lv\ subgradient in Sections \ref{Sec:classApprox} and \ref{Sec:quantApprox}. In this section, $\eo$ denotes the time complexity of an evaluation oracle to $u$. In the classical setting, this oracle must return $u_i$ given $i \in [n]$, whereas in the quantum setting it is a unitary operator $\mathcal{O}_u$ satisfying $\mathcal{O}_u(\ket{i} \ket{0}) = \ket{i} \ket{u_i}$ for all $i$.

The above problem has been thoroughly investigated in the classical setting \cite{Dev86b,BFS87b}, where it can be solved in time $\bo{n \cdot \eo + T}$ using the alias method. We present this result below, as it will be part of our quantum algorithm later.

\begin{lemma}[\cite{Wal74j,Vos91j}]
  \label{Lem:samplClass}
  There is a classical algorithm that, given an evaluation oracle to a non-zero vector $u \in \R^n$, constructs in time $\bo{n \cdot \eo}$ a data structure from which one can output as many independent samples from $\dis{u}$ as desired, each in time $\bo{1}$.
\end{lemma}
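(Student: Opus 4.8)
The plan is to prove Lemma~\ref{Lem:samplClass} by exhibiting Walker's \emph{alias method}, which preprocesses the distribution $\dis{u}$ into two length-$n$ tables that together allow each sample to be drawn in constant time. First I would compute $\norm{u}_1 = \sum_{i \in [n]} \abs{u_i}$ in time $\bo{n \cdot \eo}$ by querying the oracle on each coordinate, thereby obtaining the probabilities $p_i = \abs{u_i}/\norm{u}_1$. The goal of preprocessing is to construct a probability table $(q_1,\dots,q_n)$ with $q_i \in [0,1]$ and an alias table $(a_1,\dots,a_n)$ with $a_i \in [n]$, such that sampling proceeds as follows: draw $j \in [n]$ uniformly, draw $r \in [0,1]$ uniformly, and output $j$ if $r \leq q_j$ and $a_j$ otherwise. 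The correctness requirement is that this procedure returns each $i$ with probability exactly $p_i$; the per-sample cost is clearly $\bo{1}$ since it involves two random draws and one comparison, and the tables can be shared across arbitrarily many samples.

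The key combinatorial step is the construction of the two tables in time $\bo{n}$. The idea is that scaling by $n$ gives values $n p_i$ with average $1$, so I would maintain two worklists: a list $L_{\text{small}}$ of indices $i$ with $n p_i < 1$ (underfull) and a list $L_{\text{large}}$ of indices with $n p_i \geq 1$ (overfull). Repeatedly I take an index $\ell$ from $L_{\text{small}}$ and an index $h$ from $L_{\text{large}}$, set $q_\ell = n p_\ell$ and $a_\ell = h$ (so the ``leftover'' mass $1 - q_\ell$ of cell $\ell$ is supplied by $h$), then decrement the remaining mass of $h$ by $1 - q_\ell$ and re-file $h$ into $L_{\text{small}}$ or $L_{\text{large}}$ according to whether its remaining mass has dropped below $1$. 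Each iteration finalizes one index and touches only $\bo{1}$ list operations, so after at most $n$ iterations every cell is assigned, giving total preprocessing time $\bo{n \cdot \eo + n} = \bo{n \cdot \eo}$.

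The step I expect to require the most care is the \emph{correctness} argument for the alias construction, namely the invariant that every finalized cell holds total mass exactly $1/n$ distributed between its primary index and its alias, and that the loop terminates with both worklists empty. I would argue that the sum of remaining masses always equals the number of unfinalized cells times $1/n$, so the average remaining mass stays $1$; hence $L_{\text{small}}$ is nonempty whenever any cell is unfinalized (an empty $L_{\text{small}}$ would force every remaining cell to have mass $\geq 1/n$ with average exactly $1/n$, so all remaining masses equal $1/n$ and they can be finalized trivially). Summing the contributions, index $i$ receives probability $\tfrac{1}{n} q_i$ from being drawn as the primary plus $\tfrac{1}{n}\sum_{j : a_j = i}(1 - q_j)$ from being an alias, and the invariant guarantees this equals $p_i$. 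Finally I would note that the independence of successive samples follows because each sample uses fresh uniform random draws and never modifies the tables, which completes the proof; the only subtlety worth flagging is the handling of the degenerate case $u_i = 0$ (equivalently $p_i = 0$), where the corresponding cell is placed in $L_{\text{small}}$ with $q_i = 0$ and is simply never output as a primary.
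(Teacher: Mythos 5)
Your proposal is correct and is precisely the approach the paper relies on: the paper states Lemma~\ref{Lem:samplClass} by citation to Walker's and Vose's alias method rather than proving it, and your reconstruction (one $\bo{n \cdot \eo}$ pass to get $\norm{u}_1$, the small/large worklist construction of the probability and alias tables in $\bo{n}$ time, and the $\tfrac{1}{n} q_i + \tfrac{1}{n}\sum_{j : a_j = i}(1 - q_j) = p_i$ accounting) is exactly that method, including the right termination invariant. The only blemish is a harmless notational slip between the scaled masses $n p_i$ (average $1$) and the unscaled masses $p_i$ (average $1/n$) in your invariant paragraph; fixing the units there makes the argument fully clean.
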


In the quantum setting, it is a well-known result that one sample from $\dis{u}$ can be obtained by preparing the state $\sum_{i \in [n]} \sqrt{\frac{\abs{u_i}}{\norm{u}_1}} \ket{i}$ with Amplitude Amplification and measuring the $\ket{i}$ register.

\begin{lemma}[\cite{Gro00j}]
  \label{Lem:samplOne}
  There is a quantum algorithm that, given an evaluation oracle to a non-zero vector $u \in \R^n$ and a value $M \geq \norm{u}_{\infty}$, outputs one sample from $\dis{u}$ in expected time $\bo[\Big]{\sqrt{\frac{n M}{\norm{u}_1}} \cdot \eo}$.
\end{lemma}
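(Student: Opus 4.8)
The plan is to realize the distribution $\dis{u}$ by preparing the quantum state $\ket{\psi_u} = \sum_{i \in [n]} \sqrt{\abs{u_i}/\norm{u}_1} \, \ket{i}$ and measuring the index register in the computational basis, since the outcome $i$ then occurs with probability exactly $\abs{u_i}/\norm{u}_1$. The heart of the matter is thus to prepare $\ket{\psi_u}$ efficiently using only the evaluation oracle $\mathcal{O}_u$ together with the bound $M \geq \norm{u}_\infty$. I would obtain $\ket{\psi_u}$ as the normalized projection of an easy-to-build state onto a ``good'' subspace, and then boost the weight of that subspace by amplitude amplification.

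Concretely, I would first prepare the uniform superposition $\frac{1}{\sqrt{n}} \sum_{i \in [n]} \ket{i}$ over the index register, then call $\mathcal{O}_u$ to write $u_i$ into a work register, and finally perform a controlled single-qubit rotation on a fresh ancilla sending $\ket{0}$ to $\sqrt{\abs{u_i}/M}\,\ket{0} + \sqrt{1 - \abs{u_i}/M}\,\ket{1}$. Here the hypothesis $M \geq \norm{u}_\infty$ guarantees $\abs{u_i}/M \leq 1$, so the rotation angle is well defined. After uncomputing the work register, the ``good'' branch flagged by the ancilla being $\ket{0}$ reads $\frac{1}{\sqrt{nM}} \sum_{i \in [n]} \sqrt{\abs{u_i}}\,\ket{i}$, whose squared norm is $\frac{1}{nM}\sum_i \abs{u_i} = \norm{u}_1/(nM)$; after normalization the index register is exactly in the target state $\ket{\psi_u}$. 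Calling this whole unitary $A$, it uses $\bo{1}$ queries to $\mathcal{O}_u$, hence costs $\bo{\eo}$.

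It remains to amplify the good branch, whose amplitude is $a = \sqrt{\norm{u}_1/(nM)}$. The reflection about the good subspace is merely a phase flip conditioned on the ancilla being $\ket{0}$, requiring no knowledge of $\norm{u}_1$, while the reflection about the initial state $A\ket{0}$ costs two applications of $A$. The only subtlety — and the step I expect to be the main obstacle — is that $\norm{u}_1$, and hence $a$, is not known in advance, so one cannot directly choose the optimal number $\bo{1/a}$ of amplification rounds. This is resolved by the standard amplitude-amplification variant for an unknown marked weight \cite{Gro00j}: one runs the amplification with a geometrically increasing, randomly chosen number of iterations until a measurement of the ancilla reveals the good branch. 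This succeeds with constant probability per attempt and yields a sample in expected $\bo{1/a} = \bo[\big]{\sqrt{nM/\norm{u}_1}}$ rounds, each costing $\bo{\eo}$, giving the claimed expected running time $\bo[\Big]{\sqrt{\frac{nM}{\norm{u}_1}} \cdot \eo}$. A final measurement of the index register in the good branch returns the desired sample from $\dis{u}$.
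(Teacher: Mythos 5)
Your proposal is correct and follows exactly the route the paper intends for this cited result: preparing $\sqrt{\abs{u_i}/M}$-amplitudes via one oracle call and a controlled rotation (the same unitary the paper calls $V_{u,M}$ in the proof of Lemma \ref{Lem:amplEst}), then using amplitude amplification with unknown success amplitude $a = \sqrt{\norm{u}_1/(nM)}$ to get expected cost $\bo[\big]{\sqrt{nM/\norm{u}_1} \cdot \eo}$. You also correctly flag and resolve the one genuine subtlety, that $\norm{u}_1$ is unknown, via the standard exponential-search variant of amplitude amplification.
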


Note that the maximum $M = \norm{u}_{\infty}$ of any vector $u \in \R^n$ can be computed with high probability using D\"urr-H{\o}yer's algorithm \cite{DH96p} in time $\bo{\sqrt{n} \cdot \eo}$, in which case we have $\sqrt{nM/\norm{u}_1} \leq \sqrt{n}$. Then, by simply repeating the above algorithm $T$ times, one can obtain $T$ samples in time $\bo{T \sqrt{n} \cdot \eo}$. Our main contribution (Algorithm \ref{Algo:KSampl}) is to improve this time complexity to $\bo{\sqrt{Tn} \cdot \eo}$. If the normalization factor $\norm{u}_1$ is unknown, we will only be able to sample from a distribution $\dis{u}(\Gamma,S)$ close to $\dis{u}$ that is defined below. Here, $\Gamma > 0$ acts as a placeholder for an estimate of $\norm{u}_1$ and $S \subseteq [n]$ is meant to contain the indices $i$ where $\abs{u_i}$ is larger than $\Gamma/T$.

\begin{definition}
  Consider a non-zero vector $u \in \R^n$. Fix a real number $\Gamma > 0$ and a set $S \subseteq [n]$ such that $\Gamma \geq \norm{u_S}_1$. We define $\dis{u}(\Gamma,S)$ to be the distribution that outputs $i \in [n]$ with probability
  \[
    \left\{
        \begin{array}{ll}
            \frac{|u_i|}{\Gamma} & \mbox{if } i \in S \\ [3pt]
            \frac{|u_i|}{\Gamma} + \left(1 - \frac{\norm{u}_1}{\Gamma}\right) \frac{|u_i|}{\norm{u_{[n] \setminus S}}_1} = \left(1 - \frac{\norm{u_S}_1}{\Gamma}\right) \frac{|u_i|}{\norm{u_{[n] \setminus S}}_1} & \mbox{if $i \in [n] \setminus S$.}
        \end{array}
    \right.
  \]
  Note that if $\Gamma = \norm{u}_1$ then $\dis{u}(\norm{u}_1,S) = \dis{u}$, which is independent of $S$.
\end{definition}

We now prove that Algorithm \ref{Algo:KSampl} runs in time $\bo{\sqrt{Tn} \cdot \eo}$ when $\Gamma$ is sufficiently close to $\norm{u}_1$ and $S = \{i \in [n] : |u_i| \geq \Gamma/T\}$. We will explain later how to find such parameters in time $\bo{\sqrt{Tn} \cdot \eo}$.

\begin{algorithm}[htbp]
\caption{Sampling $T$ elements from $\dis{u}(\Gamma,S)$.}
\label{Algo:KSampl}
\normalsize
\textbf{Input:} a non-zero vector $u \in \R^n$, an integer $1 < T < n$, a real $\Gamma > 0$ and a set $S \subseteq [n]$ such that $\Gamma \geq \norm{u_S}_1$, the value $M = \norm{u_{[n] \setminus S}}_{\infty}$. \\
\textbf{Output:} a sequence $(i_1, \dots, i_T) \in [n]^T$. \vspace{0.1cm}

\begin{algorithmic}[1]
\State Construct the data structure associated with $u_S = (u_i)_{i \in S}$ in Lemma \ref{Lem:samplClass}, and compute $\norm{u_S}_1$.
\For{$t = 1, \dots, T$}
  \State Sample $b_t \in \{0,1\}$ from the Bernoulli distribution of parameter $p = \frac{\norm{u_S}_1}{\Gamma}$.
  \State If $b_t = 1$, sample $i_t \sim \dis{u_S}$ using the data structure built at step 1.
  \State If $b_t = 0$, sample $i_t \sim \dis{u_{[n] \setminus S}}$ using Lemma \ref{Lem:samplOne} with input $u_{[n] \setminus S}$ and $M$.
\EndFor
\State Output $(i_1, \dots, i_T)$.
\end{algorithmic}
\end{algorithm}

\begin{theorem}
  \label{Thm:KSampl}
  The output $(i_1, \dots, i_T) \in [n]^T$ of Algorithm \ref{Algo:KSampl} consists of $T$ independent samples from the distribution $\dis{u}(\Gamma,S)$. Moreover, if $|\Gamma - \norm{u}_1| \leq \norm{u}_1/\sqrt{T}$ and $S = \{i \in [n] : |u_i| \geq \Gamma/T\}$ then the expected run-time of the algorithm is $\bo{\sqrt{Tn} \cdot \eo}$.
\end{theorem}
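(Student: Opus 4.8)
The plan is to prove the two assertions of the theorem separately, the correctness being routine and the run-time bound carrying the real content. For correctness I would fix one iteration and condition on the coin $b_t$: since $p=\norm{u_S}_1/\Gamma\in[0,1]$ (using $\Gamma\geq\norm{u_S}_1$), an index $i\in S$ is returned with probability $p\cdot\frac{\abs{u_i}}{\norm{u_S}_1}=\frac{\abs{u_i}}{\Gamma}$, and an index $i\notin S$ with probability $(1-p)\frac{\abs{u_i}}{\norm{u_{[n]\setminus S}}_1}=\bigl(1-\frac{\norm{u_S}_1}{\Gamma}\bigr)\frac{\abs{u_i}}{\norm{u_{[n]\setminus S}}_1}$; both match the definition of $\dis{u}(\Gamma,S)$. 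Since the coins and the conditional samples are drawn independently across $t$, the $T$ outputs are independent.

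For the run-time I would first bound Step 1. Because $S=\{i:\abs{u_i}\geq\Gamma/T\}$, each of its elements contributes at least $\Gamma/T$ to $\norm{u}_1$, so $\abs{S}\leq T\norm{u}_1/\Gamma$; and the hypothesis $\abs{\Gamma-\norm{u}_1}\leq\norm{u}_1/\sqrt{T}$ with $T>1$ forces $\norm{u}_1/\Gamma=\bo{1}$ and $\Gamma/\norm{u}_1=\bo{1}$, whence $\abs{S}=\bo{T}$ and Step 1 costs $\bo{T\cdot\eo}$ by Lemma \ref{Lem:samplClass}. In the loop, the head branch costs $\bo{1}$, while the tail branch ($b_t=0$) costs, by Lemma \ref{Lem:samplOne} applied to $u_{[n]\setminus S}\in\R^{n-\abs{S}}$ with $M=\norm{u_{[n]\setminus S}}_\infty$, an expected $\bo{\sqrt{(n-\abs{S})M/B}\cdot\eo}$ where $B=\norm{u_{[n]\setminus S}}_1$. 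Summing the per-iteration expectation $p\cdot\bo{1}+(1-p)\cdot\bo{\sqrt{(n-\abs{S})M/B}\cdot\eo}$ over the $T$ iterations, the claim reduces to the single inequality $(1-p)\sqrt{(n-\abs{S})M/B}=\bo{\sqrt{n/T}}$.

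This inequality is the main obstacle, because $B$ sits in a denominator and can be arbitrarily small, so a single tail sample may be expensive; the resolution is that a small $B$ is compensated by a small tail probability $1-p$. I would exploit this by writing $1-p=\frac{\Gamma-\norm{u_S}_1}{\Gamma}=\frac{(\Gamma-\norm{u}_1)+B}{\Gamma}$ and bounding its two contributions with two different estimates of $M$. For the $B$-contribution I would use $M<\Gamma/T$ (valid since every $i\notin S$ has $\abs{u_i}<\Gamma/T$), giving $\frac{B}{\Gamma}\sqrt{(n-\abs{S})M/B}<\sqrt{nB/(T\Gamma)}=\bo{\sqrt{n/T}}$ from $B\leq\norm{u}_1=\bo{\Gamma}$; for the remaining contribution I would use the crude $M\leq B$, giving $\frac{\abs{\Gamma-\norm{u}_1}}{\Gamma}\sqrt{n-\abs{S}}\leq\frac{\norm{u}_1/\sqrt{T}}{\Gamma}\sqrt{n}=\bo{\sqrt{n/T}}$. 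Adding the two pieces yields the desired bound, so the loop costs $\bo{T+\sqrt{Tn}\cdot\eo}=\bo{\sqrt{Tn}\cdot\eo}$ (using $T<n$), and with Step 1 the total expected run-time is $\bo{\sqrt{Tn}\cdot\eo}$. I would finally note that the degenerate case $B=0$ with $\Gamma>\norm{u}_1$, where $\dis{u}(\Gamma,S)$ is not a genuine distribution and the tail branch is ill-defined, falls outside the intended regime and can be excluded or handled by convention.
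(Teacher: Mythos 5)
Your proof is correct and follows essentially the same route as the paper's: both bound the tail probability by $1-p \leq \norm{u_{[n]\setminus S}}_1/\Gamma + O(1/\sqrt{T})$ and pair the two contributions with the two bounds $\norm{u_{[n]\setminus S}}_\infty \leq \min\bigl(\Gamma/T, \norm{u_{[n]\setminus S}}_1\bigr)$, which the paper compresses into a single $\min$ while you make the pairing explicit. Your closing remark on the degenerate case $\norm{u_{[n]\setminus S}}_1 = 0$ is a fair observation that the paper's proof leaves implicit, but it does not change the argument.
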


\begin{proof}
  \setlength{\abovedisplayskip}{2.5pt}
  \setlength{\belowdisplayskip}{2.5pt}
  At each execution of lines 2-5, the probability to sample $i \in S$ is $\frac{\norm{u_S}_1}{\Gamma} \cdot \frac{|u_i|}{\norm{u_S}_1} = \frac{|u_i|}{\Gamma}$ and the probability to sample $i \in [n] \setminus S$ is $\bigl(1 - \frac{\norm{u_S}_1}{\Gamma}\bigr) \frac{|u_i|}{\norm{u_{[n] \setminus S}}_1}$. This is the distribution $\dis{u}(\Gamma,S)$.

  We now analyze the time complexity. Line 1 takes time $\bo{|S| \cdot \eo}$. Each execution of line 4 takes time $\bo{1}$, and each execution of line 5 takes time $\bo[\Big]{\sqrt{n \cdot \norm{u_{[n] \setminus S}}_{\infty}/\norm{u_{[n] \setminus S}}_1} \cdot \eo}$ (according to Lemma \ref{Lem:samplOne}). Thus, the expected run-time of the algorithm is
    \[\bo*{|S| \cdot \eo + T \frac{\norm{u_S}_1}{\Gamma} \cdot 1 + T \left(1 - \frac{\norm{u_S}_1}{\Gamma}\right) \cdot \sqrt{\frac{n \cdot \norm{u_{[n] \setminus S}}_{\infty}}{\norm{u_{[n] \setminus S}}_1}} \cdot \eo}.\]
  Assume that $|\Gamma - \norm{u}_1| \leq \norm{u}_1/\sqrt{T}$ and $S = \{i \in [n] : |u_i| \geq \Gamma/T\}$. Since $T \geq 2$, it follows that $\Gamma \geq \norm{u}_1/4$ and $|S| \leq 4T$. Consequently, $1 - \frac{\norm{u_S}_1}{\Gamma} \leq \bigl(1 + \frac{1}{\sqrt{T}}\bigr) \frac{\norm{u}_1}{\Gamma} - \frac{\norm{u_S}_1}{\Gamma} \leq \frac{\norm{u_{[n] \setminus S}}_1}{\Gamma} + \frac{4}{\sqrt{T}}$. Moreover, $\norm{u_{[n] \setminus S}}_{\infty} \leq \min(\Gamma/T,\norm{u_{[n] \setminus S}}_1)$. Thus, the expected run-time is
    \[\bo[\Bigg]{T \cdot \eo + T \left(\frac{\norm{u_{[n] \setminus S}}_1}{\Gamma} + \frac{1}{\sqrt{T}}\right) \cdot \sqrt{\frac{n \cdot \min(\Gamma/T,\norm{u_{[n] \setminus S}}_1)}{\norm{u_{[n] \setminus S}}_1}} \cdot \eo} = \bo*{\sqrt{Tn} \cdot \eo}.\]
\end{proof}

The above result is optimal, as can be shown by a simple reduction from the $T$-search problem. We now explain how to find the values $\Gamma$, $S$ and $\norm{u_{[n] \setminus S}}_{\infty}$ needed by Algorithm \ref{Algo:KSampl}. First, if $\norm{u}_1$ is known, we can assume without loss of generality that $\norm{u}_1 = 1$. In this case, we obtain $T$ samples from $\dis{u} = (p_1, \dots, p_n)$ as follows.

\begin{theorem}
  \label{Thm:setupEasy}
  There is a quantum algorithm that, given an integer $1 < T < n$, a real $0 < \delta < 1$, and an evaluation oracle to a discrete probability distribution $\dis{} = (p_1, \dots, p_n)$, outputs $T$ independent samples from $\dis{}$ in expected time $\bo[\big]{\sqrt{Tn} \log(1/\delta) \cdot \eo}$ with probability $1-\delta$.
\end{theorem}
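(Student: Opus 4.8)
The plan is to reduce Theorem \ref{Thm:setupEasy} to Algorithm \ref{Algo:KSampl} and its guarantee (Theorem \ref{Thm:KSampl}). The crucial simplification is that, since $\dis{} = (p_1,\dots,p_n)$ is a genuine probability distribution, its normalization $\norm{u}_1 = \sum_i p_i = 1$ is known exactly. I would therefore set $\Gamma = 1$. This makes the hypothesis $|\Gamma - \norm{u}_1| \leq \norm{u}_1/\sqrt{T}$ hold trivially (the left-hand side is $0$), and by the remark following the definition of $\dis{u}(\Gamma,S)$ it forces $\dis{u}(\Gamma,S) = \dis{u} = \dis{}$ independently of $S$. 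Hence, as soon as the remaining inputs of Algorithm \ref{Algo:KSampl} are provided, Theorem \ref{Thm:KSampl} guarantees that its output is exactly $T$ independent samples from $\dis{}$, produced in time $\bo{\sqrt{Tn} \cdot \eo}$.

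It remains to supply the set $S = \{i \in [n] : p_i \geq 1/T\}$ and the value $M = \norm{u_{[n]\setminus S}}_{\infty}$. For the latter, every $i \notin S$ satisfies $p_i < 1/T$, so $M = 1/T$ is a valid upper bound on $\norm{u_{[n]\setminus S}}_{\infty}$; plugging this into Lemma \ref{Lem:samplOne} reproduces exactly the $\Gamma/T$ term already used in the time analysis of Theorem \ref{Thm:KSampl}, so the $\bo{\sqrt{Tn}}$ bound is preserved. Thus the only nontrivial step is to compute $S$. Since $1 = \sum_i p_i \geq \sum_{i \in S} p_i \geq |S|/T$, we know $|S| \leq T$, i.e. there are at most $T$ \emph{heavy} coordinates.

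The heart of the argument is therefore to find all (at most $T$) coordinates satisfying the predicate $p_i \geq 1/T$ among the $n$ indices, using the quantum evaluation oracle, in time $\bo{\sqrt{Tn} \cdot \eo}$. I would do this with a search-and-remove loop: repeatedly run a Grover-type search for a marked coordinate (with an unknown number of marks) over the indices not yet identified, append each newly found heavy index to $S$, and halt once a search reports that no marked coordinate remains. If $m$ heavy coordinates are still unfound, locating the next one costs $\bo{\sqrt{n/m}}$ queries, and summing $\sqrt{n}\sum_{k=1}^{|S|} k^{-1/2} = \bo{\sqrt{|S|\,n}} = \bo{\sqrt{Tn}}$ yields the target complexity. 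The main obstacle is controlling the error probability without inflating this cost: each individual search has constant failure probability, and the loop must also reliably detect termination, since a single missed heavy index would drop an element of $S$ and thereby bias the output samples. I would resolve this by boosting each search's confidence through $\bo{\log(1/\delta)}$ repetitions, so that a union bound over the $\bo{T}$ iterations guarantees that $S$ is computed correctly with probability at least $1-\delta$; this is the source of the $\log(1/\delta)$ factor in the claimed run-time. With $\Gamma = 1$, $M = 1/T$, and this $S$ in hand, invoking Algorithm \ref{Algo:KSampl} completes the proof.
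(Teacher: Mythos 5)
Your proposal is correct and follows essentially the same route as the paper: set $\Gamma = 1$ (so $\dis{u}(1,S) = \dis{u}$ exactly and Theorem \ref{Thm:KSampl}'s hypothesis on $\Gamma$ holds trivially), find the heavy set $S = \{i : p_i \geq 1/T\}$ with Grover search in time $\bo{\sqrt{Tn}\log(1/\delta) \cdot \eo}$, and feed everything into Algorithm \ref{Algo:KSampl}. Two small remarks. First, where the paper additionally runs D\"urr--H{\o}yer to compute $M = \norm{p_{[n]\setminus S}}_{\infty}$ exactly, you instead pass the upper bound $M = 1/T$; this is a legitimate mild simplification, since Lemma \ref{Lem:samplOne} only needs an upper bound, and your observation that with exact $\Gamma = 1$ the line-5 branch is taken with probability exactly $\norm{u_{[n]\setminus S}}_1$ (so the $\min(\Gamma/T, \norm{u_{[n]\setminus S}}_1)$ refinement in the proof of Theorem \ref{Thm:KSampl} is not needed) correctly preserves the $\bo{\sqrt{Tn} \cdot \eo}$ bound. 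Second, your error-amplification bookkeeping is slightly off: boosting each of the up-to-$T$ searches to confidence $1-\delta$ and union-bounding gives overall failure $T\delta$, so naively you need $\bo{\log(T/\delta)}$ repetitions per search, yielding $\bo{\sqrt{Tn}\log(T/\delta)}$ rather than the stated $\bo{\sqrt{Tn}\log(1/\delta)}$; to get the cleaner bound one should note that found indices are verified by a single query (no false positives), so only the stopping decision can err, and one can run the whole find-all procedure with constant overall success probability in time $\bo{\sqrt{Tn}}$ and repeat it $\bo{\log(1/\delta)}$ times, taking the union of the outputs. The paper is equally terse on this point, so this is an imprecision rather than a gap in substance.
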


\begin{proof}
  The set $S = \{i \in [n] : |p_i| \geq 1/T\}$ and the value $M = \norm{p_{[n] \setminus S}}_{\infty}$ can be computed with  probability $1-\delta$ using Grover search and D\"urr-H{\o}yer's algorithm \cite{DH96p} in time $\bo{\sqrt{Tn} \log(1/\delta) \cdot \eo}$ and $\bo{\sqrt{n} \log(1/\delta) \cdot \eo}$ respectively. Then, conditioned on these two values to be correct, Algorithm \ref{Algo:KSampl} outputs $T$ independent samples from $\dis{}$ in expected time $\bo{\sqrt{Tn} \cdot \eo}$ (where we use $\Gamma = 1$).
\end{proof}

If $\norm{u}_1$ is unknown (as it will be the case in our applications), we will need the next result about Amplitude Estimation \cite{BHMT02j} to approximate its value.

\begin{lemma}
  \label{Lem:amplEst}
  There is a quantum algorithm that, given an evaluation oracle to a non-zero vector $u \in \R^n$, a value $M \geq \norm{u}_{\infty}$ and two reals $0 < \eps, \delta < 1$, outputs a real $\Gamma$ such that $\abs{\Gamma - \norm{u}_1} \leq \eps \norm{u}_1$ with probability $1-\delta$. The expected run-time of this algorithm is $\bo*{\frac{1}{\eps}\sqrt{\frac{nM}{\norm{u}_1}} \log(1/\delta) \cdot \eo}$.
\end{lemma}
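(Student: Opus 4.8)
The plan is to reduce the estimation of $\norm{u}_1$ to a single application of Amplitude Estimation \cite{BHMT02j}. First I would build a state-preparation unitary $A$ acting on a coordinate register together with one ancilla qubit. Starting from the uniform superposition $\frac{1}{\sqrt{n}}\sum_{i \in [n]} \ket{i}$, I would query the oracle to obtain $u_i$, apply a controlled rotation sending the ancilla to $\sqrt{|u_i|/M}\,\ket{1} + \sqrt{1 - |u_i|/M}\,\ket{0}$ (well-defined since $M \geq \norm{u}_{\infty} \geq |u_i|$), and then uncompute the oracle to disentangle the work registers. Each application of $A$ (and of $A^{-1}$) costs $\bo{\eo}$ up to the polylogarithmic arithmetic needed for the rotation. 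The resulting state has its ``good'' subspace (ancilla equal to $\ket{1}$) with probability exactly
\[
  a = \sum_{i \in [n]} \frac{1}{n}\cdot\frac{|u_i|}{M} = \frac{\norm{u}_1}{nM},
\]
and $0 < a \leq 1$ because $\norm{u}_1 \leq n\norm{u}_{\infty} \leq nM$. Consequently $\Gamma = nM\,\tilde{a}$ inherits exactly the relative error of any estimate $\tilde{a}$ of $a$, so it suffices to estimate $a$ to multiplicative accuracy $\eps$.

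Next I would invoke Amplitude Estimation on $A$. Running it with $k$ Grover-type iterations returns $\tilde{a}$ with additive error bounded by $2\pi\frac{\sqrt{a(1-a)}}{k} + \frac{\pi^2}{k^2} = \bo*{\frac{\sqrt{a}}{k} + \frac{1}{k^2}}$ with constant success probability. To convert this into the multiplicative guarantee $|\tilde{a} - a| \leq \eps a$, I would choose $k = \ta*{\frac{1}{\eps\sqrt{a}}}$, so that the dominant $\sqrt{a}/k$ term is $\bo{\eps a}$ and the $1/k^2$ term is even smaller. This uses $\bo*{\frac{1}{\eps\sqrt{a}}} = \bo*{\frac{1}{\eps}\sqrt{\frac{nM}{\norm{u}_1}}}$ applications of $A$, hence time $\bo*{\frac{1}{\eps}\sqrt{\frac{nM}{\norm{u}_1}} \cdot \eo}$. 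Finally, repeating $\bo{\log(1/\delta)}$ times and taking the median of the returned values boosts the success probability to $1-\delta$, which produces the stated $\log(1/\delta)$ factor.

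The main obstacle is that the optimal iteration count $k = \ta*{\frac{1}{\eps\sqrt{a}}}$ depends on the unknown quantity $a = \norm{u}_1/(nM)$, so $k$ cannot be fixed a priori. I would resolve this by an exponential search over the precision: run Amplitude Estimation with $k = 1, 2, 4, 8, \dots$, stopping once the returned estimate $\tilde{a}$ is large enough that the precision bound $\bo*{\frac{\sqrt{\tilde{a}}}{k} + \frac{1}{k^2}}$ self-certifies relative error below $\eps$. Since the per-round cost grows geometrically, the total cost is dominated by the final round and is $\bo*{\frac{1}{\eps}\sqrt{\frac{nM}{\norm{u}_1}}}$ in expectation, matching the claimed \emph{expected} run-time; combining the geometric-search analysis with the median-based confidence boosting bounds the probability that the search halts with a faithful estimate by $1-\delta$. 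This is precisely where the relative-error variant of amplitude estimation from \cite{BHMT02j} must be applied with care, and where the expected (rather than worst-case) run-time originates.
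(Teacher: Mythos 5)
Your proposal matches the paper's proof essentially step for step: the same state-preparation unitary with good amplitude $a = \norm{u}_1/(nM)$ built from two oracle queries and a controlled rotation, amplitude estimation to relative accuracy $\eps$ costing $\bo{\frac{1}{\eps}\sqrt{nM/\norm{u}_1}}$ applications, rescaling $\Gamma = nM\tilde{a}$, and Chernoff/median boosting to success probability $1-\delta$ at a $\log(1/\delta)$ factor. The only difference is that you unpack the unknown-$a$, relative-error variant of amplitude estimation by hand (the exponential search over the iteration count), whereas the paper invokes it directly as a black box from \cite{BHMT02j}, which is also where its \emph{expected} run-time guarantee originates.
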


\begin{proof}
  Define $V_{u,M}$ to be a unitary operator such that
    \[V_{u,M} (\ket{0}\ket{0})
       = \frac{1}{\sqrt{n}} \sum_{i \in [n]} \ket{i} \left(\sqrt{\frac{|u_i|}{M}} \ket{0} + \sqrt{1 - \frac{|u_i|}{M}} \ket{1}\right)
       = \sqrt{\frac{\norm{u}_1}{n M}} \ket{\psi_u} \ket{0} + \sqrt{1 - \frac{\norm{u}_1}{n M}} \ket{\phi_u} \ket{1}
    \]
  where $\ket{\psi_u} = \sum_{i \in [n]} \sqrt{\frac{|u_i|}{\norm{u}_1}} \ket{i}$, and $\ket{\phi_u}$ is some unit vector.   $V_{u,M}$ can be constructed with two quantum queries to $u$ and a controlled rotation (see also \cite{SLSB19j} for an alternative construction). Now, using the Amplitude Estimation algorithm \cite[Theorem 12]{BHMT02j} on $V_{u,M}$ with accuracy $\eps$, we get an estimate $\gamma$ such that $\abs{\gamma - \norm{u}_1/(n M)} \leq \eps \norm{u}_1/(n M)$ with probability $2/3$ in expected time $\bo*{\frac{1}{\eps}\sqrt{\frac{nM}{\norm{u}_1}} \cdot \eo}$. The success probability can be increased to $1-\delta$ by a standard Chernoff bound argument at an extra cost factor $\log(1/\delta)$. Finally, we take $\Gamma = nM\gamma$.
\end{proof}

The construction of the setup parameters $(\Gamma,S,M)$ is described in Algorithm \ref{Algo:setup}. We need to be careful that $\Gamma \geq \norm{u_S}_1$, otherwise $\dis{u}(\Gamma,S)$ is not a probability distribution. The parameter $\eps$ controls the closeness of $\dis{u}(\Gamma,S)$ to $\dis{u}$. We have $\eps' = \min(1/\sqrt{T},\eps)$ to guarantee that $|\Gamma - \norm{u}_1| \leq (1/\sqrt{T}) \norm{u}_1$. The setup cost is dominated by $\bo{\sqrt{n}/\eps}$ if  $\eps \leq 1/\sqrt{T}$.

\begin{algorithm}[htbp]
\caption{Construction of the setup parameters $(\Gamma,S,M)$.}
\label{Algo:setup}
\normalsize
\textbf{Input:} a non-zero vector $u \in \R^n$, an integer $1 < T < n$, two reals $0 < \eps, \delta < 1$. \\
\textbf{Output:} a real $\Gamma$, a set $S \subseteq [n]$, a value $M$. \vspace{0.1cm}

The subroutines below are run with failure parameter $\delta/4$. The algorithm aborts and outputs $\emph{fail}$ if any step takes time greater than $c \cdot (\sqrt{Tn} + \sqrt{n}/\eps) \log(1/\delta)$ (where $c$ is a constant to be specified in the proof of Proposition \ref{Prop:setupShort}). \vspace{0.1cm}

\begin{algorithmic}[1]
\State Run D\"urr-H{\o}yer's algorithm \cite{DH96p} to compute $\norm{u}_{\infty}$. Denote the result by $L$.
\State Compute an estimate $\hat{\Gamma}$ of $\norm{u}_1$ with relative error $\eps' = \min(1/\sqrt{T},\eps)$ using $L$ and Lemma \ref{Lem:amplEst}.
\State Run the Grover search algorithm \cite{BBHT98j} on $u$ to find all the indices $i$ such that $|u_i| \geq \hat{\Gamma}/T$. Denote the result by $\hat{S} \subseteq [n]$.
\State Compute $\norm{u_{\hat{S}}}_1$ and set $\Gamma = \max\{\norm{u_{\hat{S}}}_1, \hat{\Gamma}\}$. Compute $S = \{i \in \hat{S} : |u_i| \geq \Gamma/T\}$.
\State Run D\"urr-H{\o}yer's algorithm \cite{DH96p} to compute $\norm{u_{[n] \setminus S}}_{\infty}$. Denote the result by $M$.
\State Output $(\Gamma,S,M)$.
\end{algorithmic}
\end{algorithm}

\begin{proposition}
  \label{Prop:setupShort}
  The output $(\Gamma,S,M)$ of Algorithm \ref{Algo:setup} satisfies $\Gamma \geq \norm{u_S}_1$, $|\Gamma - \norm{u}_1| \leq \min(1/\sqrt{T}, \allowbreak \eps) \norm{u}_1$, $S = \{i \in [n] : |u_i| \geq \Gamma/T\}$ and $M = \norm{u_{[n] \setminus S}}_{\infty}$ with probability $1-\delta$. The expected run-time of this algorithm is $\bo[\big]{(\sqrt{Tn} + \sqrt{n}/\eps) \log(1/\delta) \cdot \eo}$.
\end{proposition}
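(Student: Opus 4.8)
The plan is to condition on the correct execution of the four randomized steps (1, 2, 3 and 5) and on the algorithm not aborting, derive the four output guarantees deterministically, and only then account separately for the probability and the run-time. When they succeed, step 1 returns $L = \norm{u}_{\infty}$; step 2, applying Lemma \ref{Lem:amplEst} with $M = L$, returns $\hat{\Gamma}$ with $\abs{\hat{\Gamma} - \norm{u}_1} \leq \eps'\norm{u}_1$ for $\eps' = \min(1/\sqrt{T},\eps)$; step 3 returns $\hat{S} = \{i : \abs{u_i} \geq \hat{\Gamma}/T\}$; and step 5 returns $M = \norm{u_{[n]\setminus S}}_{\infty}$, which is already the last of the four required properties.

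The crux is the choice $\Gamma = \max\{\norm{u_{\hat{S}}}_1, \hat{\Gamma}\}$ in step 4, which I would use twice. On one hand $\norm{u_{\hat{S}}}_1 \leq \norm{u}_1$ and $\hat{\Gamma} \leq (1+\eps')\norm{u}_1$ give $\Gamma \leq (1+\eps')\norm{u}_1$, while $\Gamma \geq \hat{\Gamma} \geq (1-\eps')\norm{u}_1$, so $\abs{\Gamma - \norm{u}_1} \leq \eps'\norm{u}_1$, the accuracy requirement. On the other hand $\Gamma \geq \hat{\Gamma}$ forces $\Gamma/T \geq \hat{\Gamma}/T$, so any $i$ with $\abs{u_i} \geq \Gamma/T$ already satisfies $\abs{u_i} \geq \hat{\Gamma}/T$ and hence lies in $\hat{S}$; therefore the set computed in step 4 satisfies $S = \{i \in \hat{S} : \abs{u_i} \geq \Gamma/T\} = \{i \in [n] : \abs{u_i} \geq \Gamma/T\}$, the threshold identity. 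Finally $S \subseteq \hat{S}$ yields $\norm{u_S}_1 \leq \norm{u_{\hat{S}}}_1 \leq \Gamma$, which is the requirement $\Gamma \geq \norm{u_S}_1$ that makes $\dis{u}(\Gamma,S)$ a genuine distribution. I expect this simultaneous use of the $\max$ — large enough to dominate $\norm{u_S}_1$, yet small enough to preserve the relative approximation of $\norm{u}_1$ — to be the one genuinely delicate point.

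For the run-time I would bound the constantly many steps one by one, each carrying a $\log(1/\delta)$ factor from being run with failure parameter $\delta/4$. Steps 1 and 5 are D\"urr-H{\o}yer minimizations \cite{DH96p} at cost $\bo{\sqrt{n}\log(1/\delta)\cdot\eo}$ each. Step 2 costs $\bo*{\frac{1}{\eps'}\sqrt{nL/\norm{u}_1}\log(1/\delta)\cdot\eo}$ by Lemma \ref{Lem:amplEst}; since $L = \norm{u}_{\infty}\leq\norm{u}_1$ and $1/\eps' = \max(\sqrt{T},1/\eps)\leq \sqrt{T}+1/\eps$, this is $\bo{(\sqrt{Tn}+\sqrt{n}/\eps)\log(1/\delta)\cdot\eo}$. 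The remaining quantitative input is a bound on $\abs{\hat{S}}$ for the Grover find-all step \cite{BBHT98j}: copying the estimate from the proof of Theorem \ref{Thm:KSampl}, from $T\geq 2$ and $\abs{\hat{\Gamma}-\norm{u}_1}\leq \norm{u}_1/\sqrt{T}$ one gets $\hat{\Gamma}\geq\norm{u}_1/4$, and since every counted coordinate contributes at least $\hat{\Gamma}/T$ to $\norm{u}_1$, $\abs{\hat{S}}\leq T\norm{u}_1/\hat{\Gamma}\leq 4T$. Finding all $\abs{\hat{S}}\leq 4T$ marked indices then costs $\bo{\sqrt{n\abs{\hat{S}}}\log(1/\delta)\cdot\eo}=\bo{\sqrt{Tn}\log(1/\delta)\cdot\eo}$, and step 4 is deterministic work of order $\bo{T\cdot\eo}$. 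Summing gives the claimed $\bo{(\sqrt{Tn}+\sqrt{n}/\eps)\log(1/\delta)\cdot\eo}$.

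It remains to collect the failure probability and the role of the abort. I would union-bound over the four randomized steps, each failing with probability at most $\delta/4$, and treat the abort as essentially inert: implementing each primitive with a fixed query budget and $\bo{\log(1/\delta)}$ independent repetitions makes its run-time deterministically $\bo{(\sqrt{Tn}+\sqrt{n}/\eps)\log(1/\delta)}$, so the constant $c$ in the abort threshold can be fixed once and for all to guarantee the abort does not trigger on a correct execution (any residual abort mass is absorbed into $\delta$ after rescaling the per-step parameters). The expected run-time bound then follows immediately because each of the constantly many steps is capped at $\bo{(\sqrt{Tn}+\sqrt{n}/\eps)\log(1/\delta)}$. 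The hardest part is not any individual estimate but the coupling isolated above, namely keeping $\Gamma$ at once a valid scaling for $\dis{u}(\Gamma,S)$ and a tight relative approximation of $\norm{u}_1$, which is exactly what the $\max$ in step 4 is engineered to accomplish.
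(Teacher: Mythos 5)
Your proof is correct and follows essentially the same route as the paper's: condition on all randomized subroutines succeeding, exploit $\Gamma = \max\{\norm{u_{\hat{S}}}_1,\hat{\Gamma}\} \geq \hat{\Gamma}$ to get both the relative-error bound and the containment $\{i : |u_i| \geq \Gamma/T\} \subseteq \hat{S}$, bound $|\hat{S}| \leq 4T$ via $\hat{\Gamma} \geq \norm{u}_1/4$, and fix the abort constant $c$ as the maximum of the per-step constants so the abort never triggers on a successful run. The only cosmetic difference is that you spell out the fixed-budget-with-repetitions implementation behind the abort threshold, which the paper leaves implicit in its choice $c = \max\{c_1,c_2,c_3,4c_4\}$.
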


\begin{proof}
  We first assume that all steps of the algorithm succeed and do not abort. In this case, we have $|\hat{\Gamma} - \norm{u}_1| \leq \eps' \norm{u}_1$. Thus, $\Gamma = \max\{\norm{u_{\hat{S}}}_1, \hat{\Gamma}\} \geq \max\{\norm{u_S}_1, (1-\eps') \norm{u}_1\}$ and $\Gamma \leq (1+\eps') \norm{u}_1$. Moreover, $S = \{i \in [n] : |u_i| \geq \Gamma/T\}$ since $\{i \in [n] : |u_i| \geq \Gamma/T\} \subseteq \{i \in [n] : |u_i| \geq \hat{\Gamma}/T\} = \hat{S}$.

  We now study the time needed by lines 1-5 to succeed with probability $1-\delta$. If we omit the $\log(1/\delta) \cdot \eo$ factors, then there exist four absolute constants $c_1$, $c_2$, $c_3$ and $c_4$ such that lines 1 and 5 need time $c_1 \cdot \sqrt{n}$, line 2 needs time $c_2 \cdot \frac{1}{\eps'}\sqrt{n L / \norm{u}_1} \leq c_2 \cdot (\sqrt{Tn} + \sqrt{n}/\eps)$ (according to Lemma \ref{Lem:amplEst}, and since $L = \norm{u}_{\infty} \leq \norm{u}_1$ if line 1 succeeds), line 3 needs time $c_3 \cdot \sqrt{Tn}$ (since $\hat{S} \leq 4T$ if $\hat{S} = \{i \in [n] : |u_i| \geq \hat{\Gamma}/T\}$ and $\hat{\Gamma} \geq (1-\eps') \norm{u}_1 \geq \norm{u}_1/4$) and line 4 needs time $c_4 \abs{\hat{S}} \leq 4c_4 T$. Consequently, if we take $c = \max\{c_1, c_2, c_3, 4c_4\}$, the algorithm does not abort and succeeds with probability $1-\delta$.
\end{proof}

\section{Framework for Approximate Submodular Minimization}
\label{Sec:frame}
In this section, we construct our new low-variance estimate of the \lv\ subgradient, and we apply the stochastic subgradient descent algorithm on it to minimize the \lv\ extension. The stochastic subgradient descent method is a general algorithm for approximating the minimum value of a convex function $f$ that is not necessarily differentiable (as it is the case for the \lv\ extension). It uses the concept of \emph{subgradients} (or \emph{subderivatives}) of $f$, which is defined as follows.

\begin{definition}
  \label{Def:subg}
  Given a convex function $f : C \ra \R$ over $C \subset \R^n$ and a point $x \in C$, we say that $g \in \R^n$ is a \emph{subgradient} of $f$ at $x$ if $\inp{g}{x-y} \geq f(x) - f(y)$ for all $y \in C$. The set of all subgradients at $x$ is denoted by $ \partial f(x)$.
\end{definition}

Normally, the stochastic subgradient descent method requires to compute a sequence $(\wg\super{t})_t$ of \emph{unbiased} subgradient estimates at certain points $(x\super{t})_t$, which means that $\ex{\wg\super{t} | x\super{t}} \in \partial f(x\super{t})$. In the next proposition, we generalize this method to $\eps$-noisy estimates satisfying only $\norm{\ex{\wg\super{t} | x\super{t}} - g\super{t}}_1 \leq \eps$ for some $g\super{t} \in \partial f(x\super{t})$. In the case $\eps = 0$, our analysis recovers the standard error bound \cite{Duc18b}.

\begin{proposition}
  \label{Prop:gradDescent}
  \emph{\textbf{(Noisy Stochastic Subgradient Descent)}}
  Let $f : C \ra \R$ be a convex function over a compact convex set $C \subset \R^n$, and $\eta > 0$. Consider two sequences of random variables $(x\super{t})_t$ and $(\wg\super{t})_t$ such that $x\super{0} = \argmin_{x \in C} \norm{x}_2$, $x\super{t+1} = \argmin_{x \in C} \norm{x - (x\super{t} - \eta \wg\super{t})}_2$, and
  \begin{center}
    $\norm*{\ex*{\wg\super{t} | x\super{t}} - g\super{t}}_1 \leq \eps$ for some $g\super{t} \in \partial f(x\super{t}),$
  \end{center}
  for all $t \geq 0$. Fix $\xopt \in \argmin_{x \in C} f(x)$ and let $L_2, L_{\infty}, B \in \R$ be such that $\norm{x - \xopt}_2 \leq L_2$, $\norm{x - \xopt}_{\infty} \leq L_{\infty}$ and $\ex*{\norm{\wg\super{t}}_2^2} \leq B^2$, for all $x \in C$ and $t \geq 0$. Then, for any integer $N$, the average point $\bar{x} = \frac{1}{N} \sum_{t=0}^{N-1} x\super{t}$ satisfies
    $\ex*{f(\bar{x})} \leq f(\xopt) + \frac{L_2^2}{2\eta N} + \frac{\eta}{2} B^2 + \eps L_{\infty}$.
\end{proposition}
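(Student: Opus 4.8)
The plan is to carry out the classical potential-function analysis of projected stochastic subgradient descent, using $\norm{x\super{t} - \xopt}_2^2$ as the potential, and to confine the entire effect of the bias to a single term that the $\ell_1$-$\ell_\infty$ H\"older inequality controls. First I would use that $x\super{t+1}$ is the Euclidean projection of $x\super{t} - \eta\wg\super{t}$ onto $C$; since projection onto a convex set is non-expansive and $\xopt \in C$ is its own projection, this gives $\norm{x\super{t+1} - \xopt}_2^2 \leq \norm{x\super{t} - \eta\wg\super{t} - \xopt}_2^2$. Expanding the square on the right and rearranging yields the one-step inequality
\[
  \inp{\wg\super{t}}{x\super{t} - \xopt} \leq \frac{1}{2\eta}\left(\norm{x\super{t} - \xopt}_2^2 - \norm{x\super{t+1} - \xopt}_2^2\right) + \frac{\eta}{2}\norm{\wg\super{t}}_2^2.
\]

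Next I would take the conditional expectation given $x\super{t}$ and write $\bar{g}\super{t} = \ex{\wg\super{t} | x\super{t}}$, splitting $\inp{\bar{g}\super{t}}{x\super{t} - \xopt} = \inp{g\super{t}}{x\super{t} - \xopt} + \inp{\bar{g}\super{t} - g\super{t}}{x\super{t} - \xopt}$. The first term is at least $f(x\super{t}) - f(\xopt)$ by the subgradient inequality of Definition \ref{Def:subg}, since $g\super{t} \in \partial f(x\super{t})$; the second term is the only place the bias enters, and by H\"older it is bounded below by $-\norm{\bar{g}\super{t} - g\super{t}}_1 \cdot \norm{x\super{t} - \xopt}_\infty \geq -\eps L_\infty$, using the noise hypothesis and the diameter bound $\norm{x - \xopt}_\infty \leq L_\infty$. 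Together these give $f(x\super{t}) - f(\xopt) \leq \ex{\inp{\wg\super{t}}{x\super{t} - \xopt} | x\super{t}} + \eps L_\infty$.

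Finally I would take full expectations, insert the one-step inequality, and use $\ex{\norm{\wg\super{t}}_2^2} \leq B^2$ to obtain $\ex{f(x\super{t})} - f(\xopt) \leq \frac{1}{2\eta}\bigl(\ex{\norm{x\super{t} - \xopt}_2^2} - \ex{\norm{x\super{t+1} - \xopt}_2^2}\bigr) + \frac{\eta}{2}B^2 + \eps L_\infty$. Summing over $t = 0, \dots, N-1$ telescopes the potential differences; dropping the non-negative subtracted term $\ex{\norm{x\super{N} - \xopt}_2^2}$ and using $\norm{x\super{0} - \xopt}_2^2 \leq L_2^2$ bounds the telescoped sum by $L_2^2/(2\eta)$. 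Dividing by $N$ and then applying Jensen's inequality, $f(\bar{x}) \leq \frac{1}{N}\sum_{t=0}^{N-1} f(x\super{t})$ by convexity of $f$, gives the claimed bound after taking expectations. The main subtlety is the bias handling: one must condition on $x\super{t}$ so that $\ex{\wg\super{t} | x\super{t}}$ legitimately replaces $\wg\super{t}$ and the (possibly $x\super{t}$-dependent) subgradient $g\super{t} \in \partial f(x\super{t})$ is the correct comparison point, and then cleanly separate the exact-subgradient contribution from the bias before invoking $\ell_1$-$\ell_\infty$ duality; the remainder is the textbook telescoping computation.
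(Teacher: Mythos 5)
Your proposal is correct and follows essentially the same route as the paper's proof: projection non-expansiveness, expansion of the squared distance, splitting off the bias term $\inp{\ex{\wg\super{t} | x\super{t}} - g\super{t}}{x\super{t} - \xopt}$ and bounding it by $\eps L_\infty$ via H\"older's $\ell_1$--$\ell_\infty$ duality, then telescoping and Jensen. The only cosmetic difference is that you condition on $x\super{t}$ before taking full expectations, whereas the paper takes the full expectation directly and invokes the law of total expectation to insert the conditional mean; these are equivalent.
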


\begin{proof}
  Let $(g\super{t})_t$ be such that $g\super{t} \in \partial f(x\super{t})$ and $\norm{\ex{\wg\super{t} | x\super{t}} - g\super{t}}_1 \leq \eps$. Then,
    \begin{align*}
      \norm{x\super{t+1} - \xopt}_2^2
          & = \norm[\Big]{\argmin_{x \in C} \norm{x - (x\super{t} - \eta \wg\super{t})}_2 - \xopt}_2^2 \\
          & \leq \norm{x\super{t} - \eta \wg\super{t} - \xopt}_2^2 \quad \mbox{by property of the projection onto $C$}\\
          & = \norm{x\super{t} - \xopt}_2^2 - 2 \eta \inp{\wg\super{t}}{x\super{t} - \xopt} + \eta^2 \norm{\wg\super{t}}_2^2 \\
          & = \norm{x\super{t} - \xopt}_2^2 - 2 \eta \inp{g\super{t}}{x\super{t} - \xopt} - 2 \eta \inp{\wg\super{t} - g\super{t}}{x\super{t} - \xopt} + \eta^2 \norm{\wg\super{t}}_2^2 \\
          & \leq \norm{x\super{t} - \xopt}_2^2 - 2 \eta (f(x\super{t}) - f(\xopt)) - 2 \eta \inp{\wg\super{t} - g\super{t}}{x\super{t} - \xopt} + \eta^2 \norm{\wg\super{t}}_2^2
    \end{align*}
  where the last line is by the definition of a subgradient. We now take the expectation of the above formula. Using the law of total expectation, we have
    $\ex*{\inp{\wg\super{t} - g\super{t}}{x\super{t} - \xopt}}
    = \ex[\big]{\inp{\ex*{\wg\super{t} | x\super{t}} - g\super{t}}{x\super{t} - \xopt}}$
  and by H\"older's inequality
    $\abs*{\inp{\ex*{\wg\super{t} | x\super{t}} - g\super{t}}{x\super{t} - \xopt}}
    \leq \norm{\ex*{\wg\super{t} | x\super{t}} - g\super{t}}_1 \cdot \norm{x\super{t} - \xopt}_{\infty}
    \leq \eps L_{\infty}$.
  Consequently,
    \[\ex*{\norm{x\super{t+1} - \xopt}_2^2} - \ex*{\norm{x\super{t} - \xopt}_2^2} \leq  - 2 \eta \ex*{f(x\super{t}) - f(\xopt)} + 2 \eta \eps L_{\infty} + \eta^2 B^2\]
  from which we obtain a bound for $\ex{f(x\super{t})}$. Finally, we upper bound the expected value of the function at the average point $\bar{x}$ as
    \begin{align*}
     \ex*{f(\bar{x})}
        & \leq \frac{1}{N} \sum_{t=0}^{N-1} \ex*{f(x\super{t})} \quad \mbox{ by convexity}\\
        & \leq f(\xopt) + \frac{1}{N} \sum_{t=0}^{N-1} \frac{1}{2\eta} \left(\ex*{\norm{x\super{t} - \xopt}_2^2} - \ex*{\norm{x\super{t+1} - \xopt}_2^2}\right) + \frac{\eta}{2} B^2 + \eps L_{\infty} \\
        & = f(\xopt) + \frac{1}{2\eta N} \left(\ex*{\norm{x\super{0} - \xopt}_2^2} - \ex*{\norm{x\super{N} - \xopt}_2^2}\right) + \frac{\eta}{2} B^2  + \eps L_{\infty}\\
        & \leq f(\xopt) + \frac{L_2^2}{2\eta N} + \frac{\eta}{2} B^2 + \eps L_{\infty}
   \end{align*}
  where we have used the telescoping property of the sum in the third line.
\end{proof}

In the rest of the paper, $f$ denotes the \lv\ extension and $g$ denotes the \lv\ subgradient. Our main result of this section (Algorithm \ref{Algo:submDesc}) consists in constructing the sequence of noisy subgradient estimates needed in the above proposition. To trade off the cost of computing the subgradient exactly and decreasing the variance, we rely on two procedures that provide different guarantees on the estimates they return. In this section, we do not explain how to implement these two procedures. Instead, we describe in Assumptions \ref{Assp:one} and \ref{Assp:two} the main properties they must satisfy.

Our first assumption is the existence of a procedure $\samplg$ that can produce a batch of $T$ estimates of the \lv\ subgradient $g(x)$ at any point $x \in [0,1]^n$. This is intended to be a simple but expensive procedure, which can be used only sparingly. Indeed, it will need time $\bo{(n+T) \cdot \eo}$ or $\bo{(\sqrt{nT} + \sqrt{n}/\eps) \cdot \eo}$ to be implemented in the classical or quantum settings respectively (Propositions \ref{Prop:samplgCl} and \ref{Prop:samplgQu}).

\begin{assumption}[Gradient Sampling]
  \label{Assp:one}
  There is a procedure $\samplg(x,T,\eps)$ that, given $x \in [0,1]^n$, an integer $T$ and a real $\eps > 0$, outputs $T$ vectors $\wg^1, \dots, \wg^T$ such that, for all $j$, (1) $\wg^j$ is $1$-sparse, (2) $\big\lVert\ex{\wg^j | \wg^1, \dots, \wg^{j-1},x} - \gl(x)\big\rVert_1 \leq \eps$ and (3) $\norm{\wg^j}_2 \leq 4$. Moreover, the time complexity of this procedure is a function $\costg(T,\eps)$ of $T$ and $\eps$.
\end{assumption}

Our second assumption is the existence of a more subtle procedure $\sampld$ that can estimate the difference $g(y) - g(x)$ between the \lv\ subgradients at two points $x$ and $y$. This procedure will rely on intrinsic properties of submodular functions and require maintaining a particular data structure (Section \ref{Sec:datastruc}). On the other hand, when the difference $e = y - x$ is $k$-sparse, it will need time only $\so{k \cdot \eo}$ or $\so{\sqrt{k}/\eps \cdot \eo}$ to be implemented in the classical or quantum settings respectively (Propositions \ref{Prop:sampldCl} and \ref{Prop:sampldQu}).

\begin{assumption}[Gradient Difference Sampling]
  \label{Assp:two}
  There is a procedure $\sampld(x, e,\eps)$ that, given $x \in [0,1]^n$, a $k$-sparse vector $e$ such that $x + e \in [0,1]^n$ and $e \geq 0$ or $e \leq 0$, and a real $\eps > 0$, outputs a vector $\wdi$ such that, (1) $\wdi$ is $1$-sparse, (2) $\norm{\ex{\wdi | x, e} - (\gl(x + e) - \gl(x))}_1 \leq \eps$ and (3) $\norm{\wdi}_2 \leq 7$. Moreover, the time complexity of this procedure is a function $\costd(k,\eps)$ of $k$ and $\eps$.
\end{assumption}

We combine the two procedures to construct the sequence $(\wg\super{t})_t$ of subgradient estimates (Algorithm \ref{Algo:submDesc}). The construction depends on a ``loop parameter'' $T$ that balances the cost between using $\samplg$ and $\sampld$. Every $T$ steps, when $t = 0 \bmod T$, the procedure $\samplg(x\super{t},T,\eps)$ returns $T$ estimates $\wg\super{t,0}, \dots, \wg\super{t,T-1}$ of the \lv\ subgradient at the current point $x\super{t}$. Each value $\wg\super{t,\tau}$ is combined at time $t + \tau$, where $0 \leq \tau \leq T-1$, with an estimate $\wdi\super{t+\tau}$ of the subgradient difference $g(x\super{t + \tau}) - g(x\super{t})$. The sum $\wg\super{t + \tau} = \wg\super{t,\tau} + \wdi\super{t+\tau}$ is our estimate of $g(x\super{t + \tau})$. The sparsity of $x\super{t+\tau} - x\super{t}$ will increase linearly in $\tau$, which justifies reusing $\samplg$ every $T$ steps to restore it to a small value. Notice that, according to Assumption \ref{Assp:two}, the procedure $\sampld$ can estimate the subgradient difference $d = g(y) - g(x)$ only if $e = y - x$ is either non-negative or non-positive. Thus, in step 2.(c) of the algorithm, we split $e = e\ps + e\ms$ into its positive and negative entries and we estimate $d\ps = g(x + e\ps) - g(x)$ and $d\ms = g(x + e\ps + e\ms) - g(x + e\ps)$ separately. In the next theorem, we show that $(\wg\super{t})_t$ is indeed a sequence of noisy subgradient oracles for $\left(f, (x\super{t})_t\right)$.

\begin{algorithm}[htbp]
\caption{Subgradient descent algorithm for the \lv\ extension $f$.}
\label{Algo:submDesc}
\normalsize
\textbf{Input:} two integers $0 < T < N$, two reals $\eps_0, \eps_1 > 0$. \\
\textbf{Output:} point $\bar{x} \in [0,1]^n$. \vspace{0.1cm}

\begin{algorithmic}[1]
  \State Set $x\super{0} = 0^n \in [0,1]^n$.
  \For{$t = 0, \dots, N$}
    \State Set $\tau = (t \bmod T)$.
    \LineComment{\emph{Computation of the subgradient estimate $\wg\super{t}$:}}
    \State If $\tau = 0$: sample $\wg\super{t,0}, \dots, \wg\super{t,T-1}$ using $\samplg(x\super{t},T,\eps_0)$. Set $\wg\super{t} = \wg\super{t,0}$.
    \State \algparbox{If $\tau \neq 0$: sample $\wdi\super{t}\ps$ using $\sampld\bigl(x\super{t-\tau}, e\ps\super{t-1}, \eps_1\bigr)$ and $\wdi\super{t}\ms$ using $\sampld\bigl(x\super{t-\tau} \allowbreak + \allowbreak e\ps\super{t-1}, \allowbreak e\ms\super{t-1}, \eps_1\bigr)$. Set $\wg\super{t} = \wg\super{t-\tau,\tau} + \wdi\super{t}\ps + \wdi\super{t}\ms$.}
    \LineComment{\emph{Update of the position to $x\super{t+1}$:}}
    \State \algparbox{Compute $x\super{t+1} = \argmin_{x \in [0,1]^n} \|x - (x\super{t} - \eta \wg\super{t})\|_2$, that is $x\super{t+1} = x\super{t} + u\super{t}$ where
        \begin{align*}
        u\super{t}_i =
          \left\{
            \begin{array}{ll}
              - x\super{t}_i & \mbox{if $\eta \wg\super{t}_i > x\super{t}_i$} \\
              1 - x\super{t}_i & \mbox{if $\eta \wg\super{t}_i < - (1 - x\super{t}_i)$} \\
               - \eta \wg\super{t}_i & \mbox{otherwise}
            \end{array}
          \right.
        \end{align*}
        for each $i \in [n]$, and $\eta = \sqrt{\frac{n}{18^2 N}}$.}
    \LineComment{\emph{Update of the difference to $e\super{t} = x\super{t+1} - x\super{t-\tau}$:}}
    \State If $\tau = 0$, set $e\super{t} = u\super{t}$.
    \State If $\tau \neq 0$, set $e\super{t} = e\super{t-1} + u\super{t}$.
  \EndFor
  \State Output $\bar{x} = \frac{1}{N} \sum_{t=0}^{N-1} x\super{t}$.
\end{algorithmic}
\end{algorithm}

\begin{theorem}
  \label{Thm:gradSeq}
  The sequences $(\wg\super{t})_t$ and $(x\super{t})_t$ in Algorithm \ref{Algo:submDesc} satisfy $\norm{\ex*{\wg\super{t} | x\super{t}} - \gl(x\super{t})}_1 \allowbreak \leq \eps_0 + 2\eps_1$, $\norm{\wg\super{t}}_2 \leq 18$ and $x\super{t+1} = \argmin_{x \in [0,1]^n} \|x - (x\super{t} - \eta \wg\super{t})\|_2$.
\end{theorem}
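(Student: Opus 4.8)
The plan is to verify the three assertions separately; the bias bound is the only substantial one, while the other two are bookkeeping. For the third identity I would observe that the position-update step of Algorithm~\ref{Algo:submDesc} \emph{defines} $x\super{t+1}$ by the displayed coordinatewise rule, so it suffices to check that this rule computes the Euclidean projection of $x\super{t} - \eta\wg\super{t}$ onto $[0,1]^n$, i.e.\ the coordinatewise clamp $y \mapsto \max(0,\min(1,y))$. Case by case, $\eta\wg\super{t}_i > x\super{t}_i$ is exactly $x\super{t}_i - \eta\wg\super{t}_i < 0$ (clamp to $0$, giving $u\super{t}_i = -x\super{t}_i$), $\eta\wg\super{t}_i < -(1-x\super{t}_i)$ is exactly $x\super{t}_i - \eta\wg\super{t}_i > 1$ (clamp to $1$, giving $u\super{t}_i = 1-x\super{t}_i$), and otherwise the argument already lies in $[0,1]$. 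For the norm bound, writing $s = t-\tau$ for the batch start: when $\tau = 0$, $\wg\super{t} = \wg\super{t,0}$ and property (3) of Assumption~\ref{Assp:one} gives $\norm{\wg\super{t}}_2 \leq 4$; when $\tau \neq 0$, $\wg\super{t} = \wg\super{s,\tau} + \wdi\super{t}\ps + \wdi\super{t}\ms$ and the triangle inequality with property (3) of both assumptions gives $\norm{\wg\super{t}}_2 \leq 4 + 7 + 7 = 18$.

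For the bias bound I would first record the deterministic identity $e\super{t-1} = x\super{t} - x\super{s}$, proved by an easy induction on the difference-update rule $e\super{t} = e\super{t-1} + u\super{t}$ (with $e\super{t} = u\super{t}$ when $\tau = 0$). This guarantees $e\ps\super{t-1} \geq 0$, $e\ms\super{t-1} \leq 0$, and $x\super{s}+e\ps\super{t-1}, \ x\super{s}+e\ps\super{t-1}+e\ms\super{t-1} = x\super{t} \in [0,1]^n$, so both $\sampld$ calls meet the preconditions of Assumption~\ref{Assp:two}. The case $\tau = 0$ then follows directly from property (2) of Assumption~\ref{Assp:one} applied to the first batch vector. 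For $\tau \neq 0$ I would introduce the $\sigma$-algebra $\mathcal{G}$ generated by all randomness produced strictly before $\wg\super{t}$ is formed, namely $x\super{s}$, the earlier batch samples $\wg\super{s,0},\dots,\wg\super{s,\tau-1}$, and the difference estimates $\wdi\super{s+1},\dots,\wdi\super{t-1}$. The decisive structural fact is that $x\super{t}$, $e\super{t-1}$, and the inputs of both $\sampld$ calls at step $t$ are all $\mathcal{G}$-measurable, whereas the withheld vector $\wg\super{s,\tau}$ is used \emph{only} at step $t$ and hence enters none of them.

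Next I would compute $\ex*{\wg\super{t} | \mathcal{G}}$ term by term. Since the fresh randomness of the two $\sampld$ calls is independent of $\mathcal{G}$, property (2) of Assumption~\ref{Assp:two} gives $\norm{\ex*{\wdi\super{t}\ps | \mathcal{G}} - (\gl(x\super{s}+e\ps\super{t-1})-\gl(x\super{s}))}_1 \leq \eps_1$ and $\norm{\ex*{\wdi\super{t}\ms | \mathcal{G}} - (\gl(x\super{t})-\gl(x\super{s}+e\ps\super{t-1}))}_1 \leq \eps_1$. For the batch term I would argue that conditioning on $\mathcal{G}$ agrees with conditioning on $(x\super{s},\wg\super{s,0},\dots,\wg\super{s,\tau-1})$ alone: the remaining variables in $\mathcal{G}$, the intermediate $\wdi$'s, are functions of independent fresh randomness together with quantities that never reference $\wg\super{s,\tau}$, hence are independent of $\wg\super{s,\tau}$ given the earlier batch samples; property (2) of Assumption~\ref{Assp:one} (for the $(\tau+1)$-st, i.e.\ $0$-indexed $\tau$-th, batch vector) then yields $\norm{\ex*{\wg\super{s,\tau} | \mathcal{G}} - \gl(x\super{s})}_1 \leq \eps_0$. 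Adding the three estimates, the intermediate values $\gl(x\super{s})$ and $\gl(x\super{s}+e\ps\super{t-1})$ cancel telescopically against $\gl(x\super{t})$, so the triangle inequality gives $\norm{\ex*{\wg\super{t} | \mathcal{G}} - \gl(x\super{t})}_1 \leq \eps_0 + 2\eps_1$.

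Finally, because $x\super{t}$ is $\mathcal{G}$-measurable and $\gl(x\super{t})$ is a deterministic function of $x\super{t}$, the tower property followed by conditional Jensen's inequality for $\norm{\cdot}_1$ yields
\[
  \norm*{\ex*{\wg\super{t} | x\super{t}} - \gl(x\super{t})}_1
  \leq \ex*{\norm*{\ex*{\wg\super{t} | \mathcal{G}} - \gl(x\super{t})}_1 | x\super{t}}
  \leq \eps_0 + 2\eps_1 .
\]
I expect the crux — and the place where the gap attributed to \cite{CLSW17c} in the introduction is repaired — to be precisely the batch-term step: the claim that conditioning on $x\super{t}$ does not bias $\wg\super{s,\tau}$ away from $\gl(x\super{s})$ holds \emph{only} because $\wg\super{s,\tau}$ is drawn at the batch start yet withheld until step $t$, so it plays no role in generating $x\super{t}$. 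Making the independence bookkeeping between the batch samples and the later $\wdi$ draws airtight is where most of the care will go.
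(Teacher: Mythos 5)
Your proof is correct and takes essentially the same approach as the paper: the same decomposition $\wg\super{t} = \wg\super{t-\tau,\tau} + \wdi\super{t}\ps + \wdi\super{t}\ms$, the same key observation that the withheld batch sample $\wg\super{t-\tau,\tau}$ remains $\eps_0$-unbiased after conditioning on $x\super{t}$ because $x\super{t}$ is generated without it, and the same triangle-inequality bookkeeping yielding $\eps_0 + 2\eps_1$ and $4+7+7 = 18$. Your formalization via the pre-$t$ $\sigma$-algebra $\mathcal{G} \supseteq \sigma(x\super{t})$ followed by the tower property is merely a cleaner packaging of the paper's two applications of the law of total expectation, and your explicit verifications (the coordinatewise clamp is the Euclidean projection; $e\super{t-1} = x\super{t} - x\super{t-\tau}$, so the $\sampld$ preconditions hold) are details the paper leaves implicit.
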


\begin{proof}
  Fix $t$ and $\tau = (t \bmod T)$. According to lines 4 and 5 of the algorithm, we have
    \[
      \left\{
          \begin{array}{ll}
              \wg\super{t} = \wg\super{t,0} & \mbox{if } \tau = 0 \\
              \wg\super{t} = \wg\super{t-\tau,\tau} + \wdi\super{t}\ps + \wdi\super{t}\ms & \mbox{otherwise.}
          \end{array}
      \right.
    \]
We first study the expectation of the term $\wg\super{t-\tau,\tau}$, which is generated by the $\samplg$ procedure. Using the law of total expectation, it satisfies
    \begin{align*}
      \ex{\wg\super{t-\tau,\tau} | x\super{t}}
        & = \ex*{\ex{\wg\super{t-\tau,\tau} | (\wg\super{t-\tau,k})_{k < \tau}, x\super{t-\tau}, x\super{t} }| x\super{t}} \\
        & = \ex*{\ex{\wg\super{t-\tau,\tau} | (\wg\super{t-\tau,k})_{k < \tau}, x\super{t-\tau}} | x\super{t}}
    \end{align*}
  since $x\super{t}$ does not convey any information about the output of $\samplg(x\super{t-\tau},T)$ when $(\wg\super{t-\tau,k})_{k < \tau}$ and $x\super{t-\tau}$ are known. Consequently,
    \begin{align*}
      \norm*{\ex{\wg\super{t-\tau,\tau} - \gl(x\super{t-\tau})| x\super{t}}}_1
        & \leq \ex*{\norm*{\ex{\wg\super{t-\tau,\tau} | (\wg\super{t-\tau,k})_{k < \tau}, x\super{t-\tau}} - \gl(x\super{t-\tau})}_1 | x\super{t}} \\
        & \leq \eps_0
    \end{align*}
  using the triangle inequality and Assumption \ref{Assp:one}.

  We now study the expectation of the term $\wdi\super{t}\ps + \wdi\super{t}\ms $ generated by the $\sampld$ procedure when $\tau \neq 0$. We have
    \begin{align*}
      \ex{\wdi\super{t}\ps + \wdi\super{t}\ms | x\super{t}}
        & = \ex*{\ex{\wdi\super{t}\ps|x\super{t-\tau},e\ps\super{t-1},x\super{t}} + \ex{\wdi\super{t}\ms|x\super{t-\tau} + e\ps\super{t-1},e\ms\super{t-1},x\super{t}} | x\super{t}} \\
        & = \ex*{\ex{\wdi\super{t}\ps|x\super{t-\tau},e\ps\super{t-1}} + \ex{\wdi\super{t}\ms|x\super{t-\tau} + e\ps\super{t-1},e\ms\super{t-1}} | x\super{t}}
    \end{align*}
  where the first line is by the law of total expectation, and the second line is by independence between random variables. Moreover, according to Assumption \ref{Assp:two}, $\norm{\ex{\wdi\super{t}\ps|x\super{t-\tau},e\ps\super{t-1}} - (\gl(x\super{t-\tau} + e\ps\super{t-1}) - \gl(x\super{t-\tau}))}_1 \leq \eps_1$ and $\norm{\ex{\wdi\super{t}\ms|x\super{t-\tau} + e\ps\super{t-1},e\ms\super{t-1}} - (\gl(x\super{t}) - \gl(x\super{t-\tau} + e\ps\super{t-1}))}_1 \leq \eps_1$ (where we used that $x\super{t} = x\super{t-\tau} + e\ps\super{t-1} + e\ms\super{t-1}$). Thus, by the triangle inequality,
    \[\norm{\ex{\wdi\super{t}\ps + \wdi\super{t}\ms - (\gl(x\super{t}) - \gl(x\super{t-\tau}))| x\super{t}}}_1 \leq 2\eps_1.\]
  This concludes the proof of the first part of the theorem since
    $\norm{\ex{\wg\super{t} | x\super{t}}- \gl(x\super{t})}_1
    = \norm{\ex{\wg\super{t,0} - \allowbreak \gl(x\super{t})| x\super{t}}}_1
    \leq \eps_0$
    when $\tau = 0$, and
    $\norm{\ex{\wg\super{t} | x\super{t}}- \gl(x\super{t})}_1
      \leq \norm{\ex{\wg\super{t-\tau,\tau} - \allowbreak \gl(x\super{t-\tau})| x\super{t}}}_1
          + \norm{\ex{\wdi\super{t}\ps + \wdi\super{t}\ms - (\gl(x\super{t}) - \gl(x\super{t-\tau}))| x\super{t}}}_1
      \leq \eps_0 + 2 \eps_1$
    when $\tau \neq 0$.
  The second part of the theorem is a direct application of the triangle inequality using that $\norm{\wg\super{t-\tau,\tau}}_2 \leq 4$ and $\norm{\wdi\super{t}\ps}_2, \allowbreak \norm{\wdi\super{t}\ms}_2 \leq 7$ (Assumptions \ref{Assp:one} and \ref{Assp:two}). The last part of the theorem is line 6 of the algorithm.
\end{proof}

The above result shows that Algorithm \ref{Algo:submDesc} is a (noisy) subgradient descent for the \lv\ extension. Consequently, the result of Proposition \ref{Prop:gradDescent} can be applied to the output $\bar{x}$ of the algorithm. Since we aim for a subquadratic running time in $n$, we must update the vectors $x\super{t}$, $\wg\super{t}$, $u\super{t}$ and $e\super{t}$ in time less than their dimension. Here, we do not discuss the data structure used for this purpose (see Section \ref{Sec:datastruc}). Nevertheless, we recall that the outputs of $\samplg$ and $\sampld$ are $1$-sparse, thus most of the coordinates do not change between two consecutive steps.

\begin{fact}
  \label{Fac:sparse}
  At step $t$ of the algorithm: $\wg\super{t}$ and $u\super{t}$ are $3$-sparse, $e\ps\super{t}$ and $e\ms\super{t}$ are $3(\tau+1)$-sparse, if $\tau \neq 0$ then $e\ps\super{t-1}$ and $e\ps\super{t}$ (resp. $e\ms\super{t-1}$ and $e\ms\super{t}$) can differ only at positions where $u\super{t}$ is non-zero.
\end{fact}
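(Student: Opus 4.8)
The plan is to establish the four sparsity claims in sequence, deriving each from the update rules of Algorithm \ref{Algo:submDesc} together with the sparsity guarantees of Assumptions \ref{Assp:one} and \ref{Assp:two}. I would begin with $\wg\super{t}$. When $\tau = 0$, line 4 sets $\wg\super{t} = \wg\super{t,0}$, which is $1$-sparse by Assumption \ref{Assp:one}. When $\tau \neq 0$, line 5 sets $\wg\super{t} = \wg\super{t-\tau,\tau} + \wdi\super{t}\ps + \wdi\super{t}\ms$, a sum of one $1$-sparse vector (Assumption \ref{Assp:one}) and two $1$-sparse vectors (Assumption \ref{Assp:two}); its support is contained in the union of the three supports, so $\wg\super{t}$ is $3$-sparse. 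Next I would show $\mathrm{supp}(u\super{t}) \subseteq \mathrm{supp}(\wg\super{t})$, which immediately gives that $u\super{t}$ is $3$-sparse. The key observation is the boundary behaviour of the projection in line 6: if $\wg\super{t}_i = 0$, then since $x\super{t}_i \in [0,1]$ we have $\eta \wg\super{t}_i = 0 \not> x\super{t}_i$ and $\eta \wg\super{t}_i = 0 \not< -(1 - x\super{t}_i)$, so the ``otherwise'' branch applies and $u\super{t}_i = -\eta \wg\super{t}_i = 0$. Hence every coordinate where $u\super{t}$ is non-zero is also one where $\wg\super{t}$ is non-zero.

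For the sparsity of $e\ps\super{t}$ and $e\ms\super{t}$, I would first unfold the recurrence in lines 7--8 by a short induction on $\tau$ to obtain the telescoped form $e\super{t} = \sum_{s = t-\tau}^{t} u\super{s}$: at $\tau = 0$ line 7 gives $e\super{t} = u\super{t}$, and at step $t$ with $\tau \neq 0$ the previous step has residue $\tau - 1$, so line 8 adds exactly one more term $u\super{t}$ to the inductively assumed sum $\sum_{s=t-\tau}^{t-1} u\super{s}$. Since each $u\super{s}$ is $3$-sparse, a sum of $\tau+1$ of them is $3(\tau+1)$-sparse. As $e\ps\super{t}$ and $e\ms\super{t}$ are the positive and negative parts of $e\super{t}$, their supports are contained in $\mathrm{supp}(e\super{t})$, so each is $3(\tau+1)$-sparse.

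Finally, for the incremental claim, I would use that when $\tau \neq 0$ line 8 gives $e\super{t} = e\super{t-1} + u\super{t}$, so $e\super{t}$ and $e\super{t-1}$ agree at every coordinate $i$ with $u\super{t}_i = 0$. Because the positive and negative parts act coordinatewise as $v \mapsto \max(v,0)$ and $v \mapsto \min(v,0)$, equality of the $i$-th coordinate of $e\super{t}$ and $e\super{t-1}$ forces equality of their respective positive parts at $i$ and of their negative parts at $i$. Therefore $e\ps\super{t}$ (resp. $e\ms\super{t}$) can differ from $e\ps\super{t-1}$ (resp. $e\ms\super{t-1}$) only where $u\super{t}$ is non-zero. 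The only step needing any care is the support containment $\mathrm{supp}(u\super{t}) \subseteq \mathrm{supp}(\wg\super{t})$, which hinges on the constraint $x\super{t} \in [0,1]^n$ ruling out the two non-trivial projection branches when $\wg\super{t}_i = 0$; everything else is a direct consequence of the update rules and the coordinatewise nature of the positive/negative decomposition.
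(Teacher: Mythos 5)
Your proof is correct, and it follows exactly the reasoning the paper intends: the paper states this Fact without a written proof, remarking only that the outputs of $\samplg$ and $\sampld$ are $1$-sparse, and your argument (three $1$-sparse outputs give $3$-sparsity of $\wg\super{t}$, the projection preserves the support since $x\super{t} \in [0,1]^n$ rules out the boundary branches when $\wg\super{t}_i = 0$, telescoping $e\super{t} = \sum_{s=t-\tau}^{t} u\super{s}$, and the coordinatewise nature of the positive/negative decomposition) is precisely the bookkeeping being left implicit. No gaps.
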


\begin{corollary}
  \label{Cor:submDesc}
  The output $\bar{x}$ of Algorithm \ref{Algo:submDesc} satisfies $\ex{f(\bar{x})} \leq \min_x \fl(x) + 18 \sqrt{n/N} + \eps_0 + 2\eps_1$. The total run-time of steps 2.(b) and 2.(c) is $\bo*{\frac{N}{T} \Bigl(\costg(T,\eps_0) + \sum_{\tau = 1}^{T} \costd(3\tau,\eps_1)\Bigr)}$.
\end{corollary}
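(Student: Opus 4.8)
The plan is to treat the two assertions independently: the approximation guarantee is an instantiation of the noisy subgradient descent bound of Proposition \ref{Prop:gradDescent}, while the run-time is obtained by summing the costs of the $\samplg$ and $\sampld$ calls over the $\bo{N/T}$ batches.

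For the first claim, I would invoke Proposition \ref{Prop:gradDescent} with the convex body $C = [0,1]^n$, noise parameter $\eps = \eps_0 + 2\eps_1$, and step size $\eta = \sqrt{n/(18^2 N)}$ as fixed in the algorithm. Theorem \ref{Thm:gradSeq} already verifies the three structural hypotheses for $((x\super{t})_t, (\wg\super{t})_t)$: the noise bound $\norm{\ex{\wg\super{t} | x\super{t}} - g(x\super{t})}_1 \leq \eps_0 + 2\eps_1$, the boundedness $\norm{\wg\super{t}}_2 \leq 18$, and the projection update $x\super{t+1} = \argmin_{x \in C}\norm{x - (x\super{t} - \eta\wg\super{t})}_2$. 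I would further note that line 1 sets $x\super{0} = 0^n = \argmin_{x \in C}\norm{x}_2$ as required, and that the \lv\ subgradient $g(x\super{t})$ is a genuine subgradient of $f$ by the subgradient property of Proposition \ref{Prop:lv}, so the $g\super{t} \in \partial f(x\super{t})$ demanded by the proposition can be taken to be $g(x\super{t})$.

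It then remains to fix the constants $L_2, L_\infty, B$ and substitute. For any $x, \xopt \in [0,1]^n$ we have $\norm{x - \xopt}_\infty \leq 1$ and $\norm{x - \xopt}_2 \leq \sqrt{n}$, so $L_\infty = 1$ and $L_2 = \sqrt{n}$; and $\norm{\wg\super{t}}_2 \leq 18$ gives $\ex{\norm{\wg\super{t}}_2^2} \leq 18^2$, i.e. $B = 18$. Plugging $\eta = \frac{1}{18}\sqrt{n/N}$ into the conclusion of Proposition \ref{Prop:gradDescent} makes the first two terms coincide, $\frac{L_2^2}{2\eta N} = 9\sqrt{n/N}$ and $\frac{\eta}{2}B^2 = 9\sqrt{n/N}$, while $\eps L_\infty = \eps_0 + 2\eps_1$. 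Summing and using $f(\xopt) = \min_x f(x)$ (minimizers part of Proposition \ref{Prop:lv}) gives exactly $\ex{f(\bar x)} \leq \min_x f(x) + 18\sqrt{n/N} + \eps_0 + 2\eps_1$.

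For the run-time I would group the iterations into $\bo{N/T}$ consecutive batches of length $T$, with $\tau = (t \bmod T)$ running over $0, 1, \dots, T-1$ within each batch. Steps 2.(b) and 2.(c) consist precisely of the calls to $\samplg$ and $\sampld$. A call to $\samplg(x\super{t}, T, \eps_0)$, of cost $\costg(T,\eps_0)$, is made exactly once per batch (at $\tau = 0$), contributing $\bo*{\frac{N}{T}\costg(T,\eps_0)}$ overall. At each step with $\tau \in \{1, \dots, T-1\}$ two $\sampld$ calls are made, on the difference vectors $e\ps\super{t-1}$ and $e\ms\super{t-1}$ inherited from the previous step. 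The one subtlety, and the only place requiring care, is the sparsity accounting: since the previous step carried local index $\tau - 1$, Fact \ref{Fac:sparse} makes those vectors $3((\tau-1)+1) = 3\tau$-sparse (not $3(\tau+1)$-sparse), so by Assumption \ref{Assp:two} each call costs $\costd(3\tau, \eps_1)$. Summing the two calls over $\tau = 1, \dots, T-1$ within a batch yields $\sum_{\tau=1}^{T-1} 2\costd(3\tau, \eps_1) = \bo*{\sum_{\tau=1}^{T}\costd(3\tau, \eps_1)}$, and multiplying by the $\bo{N/T}$ batches gives the stated bound $\bo*{\frac{N}{T}\bigl(\costg(T,\eps_0) + \sum_{\tau=1}^{T}\costd(3\tau,\eps_1)\bigr)}$; linearity of expectation handles the case where $\costg, \costd$ denote expected costs.
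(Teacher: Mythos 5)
Your proof is correct and takes essentially the same route as the paper's: both instantiate Proposition \ref{Prop:gradDescent} via the hypotheses verified in Theorem \ref{Thm:gradSeq}, with $C = [0,1]^n$, $L_2 = \sqrt{n}$, $L_\infty = 1$, $B = 18$ and $\eta = \sqrt{n/(18^2 N)}$, and derive the run-time from Assumptions \ref{Assp:one} and \ref{Assp:two} together with Fact \ref{Fac:sparse}. Your explicit arithmetic ($9\sqrt{n/N} + 9\sqrt{n/N}$) and the careful $3\tau$-sparsity accounting for $e\ps\super{t-1}$ and $e\ms\super{t-1}$ (since the previous step carried local index $\tau-1$) are details the paper leaves implicit, but they match its argument exactly.
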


\begin{proof}
  According to Theorem \ref{Thm:gradSeq}, $(\wg\super{t})_t$ is a sequence of $\eps$-noisy subgradient oracles for the \lv\ extension $\fl$, where $\eps = \eps_0+2\eps_1$ and $(x\super{t})_t$ obeys the subgradient descent update rule $x\super{t+1} = \argmin_{x \in [0,1]^n} \norm{x - (x\super{t} - \eta \wg\super{t})}_2$. Moreover, $\norm{x - \xopt}_2 \leq \sqrt{n}$, $\norm{x - \xopt}_{\infty} \leq 1$ for all $x \in [0,1]^n$, and $\norm{\wg\super{t}}_2 \leq 18$. Consequently, we obtain from Proposition \ref{Prop:gradDescent} that $\ex*{\fl(\bar{x})} \leq \fl(\xopt) + 18 \sqrt{n/N} + \eps_0 + 2\eps_1$, where we used the step size parameter $\eta = \sqrt{\frac{n}{18^2 N}}$. The time complexity of steps 2.(b) and 2.(c) is a direct consequence of Assumptions \ref{Assp:one} and \ref{Assp:two} and Fact \ref{Fac:sparse}.
\end{proof}

\section{Subquadratic Approximate Submodular Minimization}

We construct two classical (Section \ref{Sec:classApprox}) and two quantum (Section \ref{Sec:quantApprox}) procedures satisfying the Assumptions \ref{Assp:one} and \ref{Assp:two} described in the previous section. These procedures will require a particular data structure, strongly inspired from the work of \cite{CLSW17c}, that is maintained throughout the subgradient descent algorithm at negligible cost (Section \ref{Sec:datastruc}). Our final algorithms solve the approximate submodular minimization problem in time $\so{n^{3/2}/\eps^2 \cdot \eo}$ classically, and $\so{n^{5/4}/\eps^{5/2} \cdot \log(1/\eps) \cdot \eo}$ quantumly.

  \subsection{Data structures and $k$-Covers}
  \label{Sec:datastruc}
  We describe two data structures needed to implement $\samplg$ and $\sampld$ respectively, and we establish their main properties. The first data structure $\ds(x)$ contains standard information about the input point $x$ of $\samplg(x,T,\eps)$. It is defined as follows.

\begin{definition}
  \label{Def:dsSimple}
  Given $x \in \R^n$ and the permutation $P$ consistent with $x$, we define $\ds(x) = (L_x,A_x,T_x)$ to be the data structure made of the following elements: a doubly linked list $L_x$ storing $P$, an array $A_x$ storing at position $i \in [n]$ the value $x_i$ with a pointer to the corresponding entry in $P$, and a self-balancing binary search tree $T_x$ (e.g. red-black tree \cite{GS78c}) with a node for each $i \in [n]$ keyed by the value $x_i$ and containing the size of its subtree.
\end{definition}

The second data structure $\ds(x,y,\mathcal{I})$ is based on a property of submodular functions established in \cite{CLSW17c} that requires the following definition of a $k$-cover.

\begin{definition}
  \label{Def:monot}
  Consider $x, y \in [0,1]^n$ and let $P$ and $Q$ be the permutations consistent with $x$ and $y$ respectively. We say that a partition $\mathcal{I} = \{I_1, \dots, I_k\}$ of $[n]$ is a \emph{$k$-cover} of $(x,y)$ if, for each $s \in [k]$, the preimage of $I_s$ under both $P$ and $Q$ is a set of consecutive numbers, and $x_i = y_i$ for all $i \in I_s$ if $\abs{I_s} > 1$.
\end{definition}

\begin{figure}[htbp]
    \centering
   \begin{tabular}{c@{\hspace{1cm}}c}
     \scalebox{0.8}{
     \addtolength{\tabcolsep}{-3pt}
      \begin{tabular}{c | cccccccccc}
         & 1 & 2 & 3 & 4 & 5 & 6 & 7 & 8 & 9 & 10 \\ \hline
       $x$ & 0.85 & 0.58 & \textbf{0.42} & 0.53 & 0.60 & 0.78 & \textbf{0.12} & 0.27 & 0.92 & 0.31 \\
       $y$ & 0.85 & 0.58 & \textbf{0.65} & 0.53 & 0.60 & 0.78 & \textbf{0.90} & 0.27 & 0.92 & 0.31 \\ \hline
       $P$ & 9 & 1 & 6 & 5 & 2 & 4 & \circled{3} & 10 & 8 & \circled{7} \\
       $Q$ & 9 & \circled{7} & 1 & 6 & \circled{3} & 5 & 2 & 4 & 10 & 8 \\ \hline
     \end{tabular}}
   &
    \begin{tikzpicture}[baseline=(current bounding box.center),every node/.style={circle split,draw,scale=0.65,inner sep=1.5pt},scale=0.75]
    \node{$0.60$\nodepart{lower}\mbox{\Large$I_4$}}
      child{
        node{$0.31$\nodepart{lower}\mbox{\Large$I_2$}}
        child{node{$0.12$\nodepart{lower}\mbox{\Large$I_3$}}}
        child{node{$0.42$\nodepart{lower}\mbox{\Large$I_1$}}}
      }
      child{
        node{$0.85$\nodepart{lower}\mbox{\Large$I_5$}}
        child[missing]{}
        child{node{$0.92$\nodepart{lower}\mbox{\Large$I_6$}}}
      };
    \end{tikzpicture}
  \end{tabular}

   \vspace{0.4cm} $I_1 = \{3\}$, $I_2 = \{10,8\}$, $I_3 = \{7\}$, $I_4 = \{5,2,4\}$, $I_5 = \{1,6\}$, $I_6 = \{9\}$.
   \label{Fig:cover}
   \vspace*{13pt}
   \caption{An illustration of a $6$-cover $\mathcal{I} = \{I_1,\dots,I_6\}$ for some $x, y \in [0,1]^{10}$ and their corresponding permutations $P, Q$. The circled numbers correspond to the positions where $x$ and $y$ differ (these values must belong to singletons in the cover). The binary tree corresponds to $T_x^\mathcal{I}$ in $\ds(x,y,\mathcal{I})$.}
\end{figure}
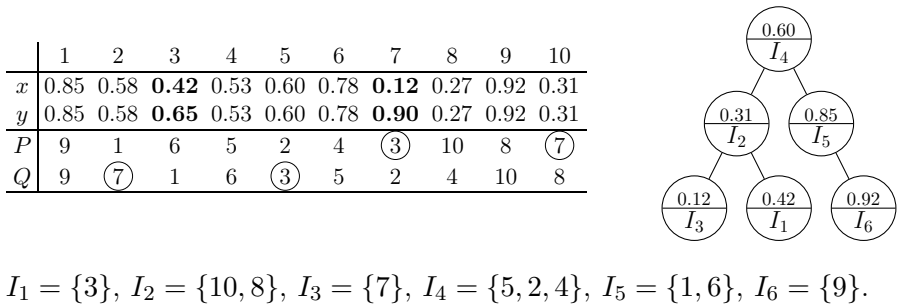

An example of a $6$-cover is given in Figure \ref{Fig:cover}. Observe that there always exists a cover of size at most $3k+1$ if the difference $e = x - y$ is $k$-sparse. Our implementations of $\sampld(x,e,\eps)$ will require to store a cover of $(x,x+e)$ approaching that size. Before explaining the reasons why a cover is useful, we describe the data structure $\ds(x,y,\mathcal{I})$.

\begin{definition}
  Given $x, y \in \R^n$ and a $k$-cover $\mathcal{I} = \{I_1, \dots, I_k\}$ of $(x,y)$, we define $\ds(x,y,\mathcal{I}) = \left(\ds(x),\ds(y),A_x^\mathcal{I},A_y^\mathcal{I},T_x^\mathcal{I},T_y^\mathcal{I}\right)$ to be the data structure made of the following elements: $\ds(x)$ and $\ds(y)$ (described in Definition \ref{Def:dsSimple}), two dynamic arrays $A_x^\mathcal{I}$ and $A_y^\mathcal{I}$ of size $k$ storing at position $s \in [k]$ the pairs $(\argmax_{i \in I_s} x_i, \allowbreak \argmin_{i \in I_s} x_i)$ and $(\argmax_{i \in I_s} y_i,\argmin_{i \in I_s} y_i)$ respectively, two self-balancing binary search trees $T_x^\mathcal{I}$ and $T_y^\mathcal{I}$ with a node for each $s \in [k]$ keyed by the value of $\max_{i \in I_s} x_i$ and $\max_{i \in I_s} y_i$ respectively.
\end{definition}

The next lemma is the crucial property established in \cite{CLSW17c} about $k$-covers. It shows that the coordinates $\gl(y)_i - \gl(x)_i$ of the subgradient difference have constant sign over any set $I_s$ of the cover when $y \geq x$ or $y \leq x$. In particular, the $\ell_1$-norm $\norm{\gl(y)_{I_s} - \gl(x)_{I_s}}_1$ can be deduced from the value of $\sum_{i \in I_s} \gl(y)_i - \gl(x)_i$ (we recall that $\gl(y)_{I_s} - \gl(x)_{I_s}$ is the subvector of $\gl(y) - \gl(x)$ made of the values at positions $i \in I_s$).

\begin{lemma}[\cite{CLSW17c}]
  \label{Lem:normCover}
  Consider $x, y \in [0,1]^n$ such that $y \geq x$ or $y \leq x$, and let $\{I_1, \dots, I_k\}$ be a $k$-cover of $(x,y)$. Then, for each $s \in [k]$, the coordinates $\gl(y)_i - \gl(x)_i$ have the same sign for all $i \in I_s$. In particular, $\abs{\sum_{i \in I_s} \gl(y)_i - \gl(x)_i} = \norm{\gl(y)_{I_s} - \gl(x)_{I_s}}_1$.
\end{lemma}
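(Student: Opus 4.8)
The target is to show that for a $k$-cover $\{I_1,\dots,I_k\}$ of $(x,y)$ with $y \geq x$ (the case $y \leq x$ is symmetric), the subgradient difference $g(y)_i - g(x)_i$ has constant sign on each block $I_s$. Let me think about how to prove this.

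The Lovász subgradient has the explicit form $g(x)_{P_i} = F(P[i]) - F(P[i-1])$, i.e. it's the marginal gain of adding element $P_i$ to the set of its predecessors in the sorted order. So $g(x)_j = F(\{k : x_k > x_j\} \cup \{j\}) - F(\{k : x_k > x_j\})$ up to tie-breaking. The key is that this is a marginal value $F(A \cup \{j\}) - F(A)$ where $A$ is the set of coordinates strictly "above" $j$.

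When $\abs{I_s} > 1$, the cover forces $x_i = y_i$ for all $i \in I_s$, so those coordinates don't move at all. The question is whether moving from $x$ to $y$ changes the subgradient *consistently* across the block. Here's the structure I'd exploit: since $y \geq x$, raising coordinates can only *promote* elements to earlier positions in the sorted order (elements whose values increased move up). The cover property says each $I_s$ maps to consecutive positions under both $P$ and $Q$ — so the blocks are "atomic" with respect to both orderings, and the relative order of the blocks is what changes.

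So the plan is: for a fixed block $I_s$, express $g(x)_i$ and $g(y)_i$ for $i \in I_s$ as marginals $F(A_i^x \cup \{i\}) - F(A_i^x)$ and $F(A_i^y \cup \{i\}) - F(A_i^y)$ where $A_i^x$ (resp. $A_i^y$) is the set of coordinates ranked strictly before $i$ under $P$ (resp. $Q$). The consecutiveness from the cover means that, for the block, the set of coordinates lying in *other* blocks that precede $I_s$ is the same for every $i \in I_s$ — call it $B_s^x$ under $P$ and $B_s^y$ under $Q$. Within-block contributions to $A_i^x$ and $A_i^y$ are identical (same internal order since $x = y$ on $I_s$ when $\abs{I_s}>1$). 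Hence for each $i \in I_s$, $A_i^y = (A_i^x \setminus B_s^x) \cup B_s^y$, and crucially the within-block part is common. Then $g(y)_i - g(x)_i = [F(A_i^y \cup \{i\}) - F(A_i^y)] - [F(A_i^x \cup \{i\}) - F(A_i^x)]$, and I want submodularity to control the sign.

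The main obstacle — and where I'd spend the real effort — is extracting a uniform sign from submodularity. The clean argument is a telescoping/chaining one: write the difference $g(y)_i - g(x)_i$ as a sum of marginal-of-marginal terms, each of the form $[F(C \cup \{e\} \cup \{i\}) - F(C \cup \{e\})] - [F(C \cup \{i\}) - F(C)]$ as we insert the elements of $B_s^y \setminus B_s^x$ (the blocks that moved ahead of $I_s$ going from $x$ to $y$) and remove those of $B_s^x \setminus B_s^y$. By submodularity each such second-difference is $\leq 0$ for an insertion and the sign is determined by whether an element is added to or removed from the prefix — and this sign assignment is \emph{the same for all $i \in I_s$} because the cover forces every $i$ in the block to see the same set of moved-past blocks. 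That uniformity is exactly what the consecutiveness condition buys, so the sign of $g(y)_i - g(x)_i$ is independent of $i \in I_s$. The final claim $\abs{\sum_{i \in I_s} g(y)_i - g(x)_i} = \norm{g(y)_{I_s} - g(x)_{I_s}}_1$ is then immediate, since all summands share one sign so the absolute value of the sum equals the sum of absolute values. I would double-check the tie-breaking convention from the consistency definition to make sure the "consecutive positions" property genuinely makes $B_s^x$ and the internal order well-defined and common across the block.
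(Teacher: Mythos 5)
Your setup is right and matches the paper's: consecutiveness of each block under both permutations, together with $x_i = y_i$ on non-singleton blocks (and the common index tie-breaking), makes the within-block order identical under $P$ and $Q$, so $g(x)_i$ and $g(y)_i$ are marginals of the same element $i$ on top of prefixes that differ only in their outside-block parts $B_s^x$ and $B_s^y$. But your concluding inference has a genuine gap. In the telescoping step you allow both insertions (elements of $B_s^y \setminus B_s^x$) and removals (elements of $B_s^x \setminus B_s^y$), and you conclude that the sign of $g(y)_i - g(x)_i$ is constant on $I_s$ because each individual second-difference has a sign that is the same for all $i \in I_s$. That does not follow: if insertion terms ($\leq 0$ by submodularity) and removal terms ($\geq 0$) coexist, the \emph{magnitudes} of the second differences still depend on $i$, so the negative terms can dominate for one $i \in I_s$ and the positive terms for another; uniform per-term signs do not yield a uniform sign for the sums. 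This is exactly where the hypothesis $y \geq x$ must be used more strongly than your remark that ``elements move up'': since the values on $I_s$ are unchanged and every other value can only increase, no element can move from ahead of the block to behind it, hence $B_s^x \subseteq B_s^y$ and the removal terms you are hedging against simply do not exist. (The paper notes immediately after the lemma that the monotonicity condition is crucial; without it the statement can fail, so any proof that appears to tolerate removals should be suspect.)

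Once you observe $B_s^x \subseteq B_s^y$, the chaining is unnecessary and the argument collapses to the paper's one-line proof: writing $I_s = \{P_{a_s}, \dots, P_{a_s + \ell_s}\} = \{Q_{a'_s}, \dots, Q_{a'_s + \ell_s}\}$ with the same internal order, the containment $P[a_s - 1] \subseteq Q[a'_s - 1]$ propagates to $P[a_s + \ell - 1] \subseteq Q[a'_s + \ell - 1]$ for every $0 \leq \ell \leq \ell_s$ (the added within-block elements coincide), and a single application of diminishing returns gives $F(P[a_s + \ell]) - F(P[a_s + \ell - 1]) \geq F(Q[a'_s + \ell]) - F(Q[a'_s + \ell - 1])$, i.e., $g(y)_i - g(x)_i \leq 0$ for all $i \in I_s$. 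Singleton blocks are trivial, and the identity $\abs[\big]{\sum_{i \in I_s} g(y)_i - g(x)_i} = \norm[\big]{g(y)_{I_s} - g(x)_{I_s}}_1$ is immediate once the sign is constant, as you say.
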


\begin{proof}
  Let $P$ and $Q$ denote the permutations consistent with $x$ and $y$ respectively. Consider $s \in [k]$ such that $\abs{I_s} > 1$ (the result is trivial when $\abs{I_s} = 1$). By definition of a $k$-cover, there exist three integers $a_s$, $a'_s$, $\ell_s$ such that $I_s = \{P_{a_s}, P_{a_s + 1}, \dots, P_{a_s + \ell_s}\} = \{Q_{a'_s}, Q_{a'_s + 1}, \dots, Q_{a'_s + \ell_s}\}$. Assume that $y \geq x$ (the case $y \leq x$ is symmetric). Since $x_i = y_i$ for all $i \in I_s$, we must have $P[a_s-1] \subseteq Q[a'_s-1]$. Thus, by the diminishing returns property of submodular functions, $F(P[a_s + \ell]) - F(P[a_s + \ell - 1]) \geq F(Q[a'_s + \ell]) - F(Q[a'_s + \ell - 1])$ for all $0 \leq \ell \leq \ell_s$. We conclude that $\gl(y)_i - \gl(x)_i \leq 0$ for all $i \in I_s$.
\end{proof}

Note that the condition $y \geq x$ or $y \leq x$ is crucial in the above result, which is why we impose $e \geq 0$ or $e \leq 0$ in Assumption \ref{Assp:two}. We now describe three useful operations that can be handled in logarithmic time using $\ds(x,y,\mathcal{I})$ and the above lemma. The first two operations originate from the work of \cite{CLSW17c} (the proofs are given for completeness), whereas the third one is new to this work (in \cite{CLSW17c}, the authors recompute the cover entirely at each update).

\begin{proposition}
  \label{Prop:dataOpe}
  Consider $x, y \in \R^n$ such that $x \geq y$ or $x \leq y$, and let $\mathcal{I} = \{I_1, \dots, I_k\}$ be a $k$-cover of $(x,y)$. Then, using $\ds(x,y,\mathcal{I})$, the following operations can be handled in time $\bo{\log(n) + \eo}$, $\bo{\log(n) \cdot \eo}$ and $\bo{\log n}$ respectively:
    \begin{itemize}
      \item \emph{(Subnorm)} Given $s \in [k]$, output $\norm{\gl(y)_{I_s} - \gl(x)_{I_s}}_1$.
      \item \emph{(Subsampling)} Given $s \in [k]$, sample $i \sim \dis{\gl(y)_{I_s} - \gl(x)_{I_s}}$.
      \item \emph{(Update)} Given a $1$-sparse vector $e \in \R^n$, update the data structure to $\ds(x+e,y,\mathcal{I}')$ where $\mathcal{I}'$ is a cover of $(x+e,y)$ of size at most $k + 3$.
    \end{itemize}
\end{proposition}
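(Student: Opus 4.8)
The plan is to reduce all three operations to a single telescoping identity for subgradient sums over permutation-consecutive ranges, combined with Lemma \ref{Lem:normCover}. Write $I_s = \{P_{a_s}, \dots, P_{a_s + \ell_s}\} = \{Q_{a'_s}, \dots, Q_{a'_s + \ell_s}\}$ for the ranges occupied by $I_s$ in the two permutations. Since $\gl(x)_{P_j} = F(P[j]) - F(P[j-1])$, the sum telescopes to $\sum_{i \in I_s} \gl(x)_i = F(P[a_s + \ell_s]) - F(P[a_s - 1])$, and likewise $\sum_{i \in I_s} \gl(y)_i = F(Q[a'_s + \ell_s]) - F(Q[a'_s - 1])$. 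The four boundary positions are the ranks in $P$ and $Q$ of the entries $\argmax_{i \in I_s} x_i, \argmin_{i \in I_s} x_i$ and $\argmax_{i \in I_s} y_i, \argmin_{i \in I_s} y_i$ stored in $A_x^\mathcal{I}[s]$ and $A_y^\mathcal{I}[s]$; each rank is recovered by one order-statistics query in the subtree-size-augmented trees $T_x, T_y$ in time $\bo{\log n}$. For Subnorm I would thus fetch these four positions ($\bo{\log n}$), issue the four oracle queries above ($\bo{\eo}$), form $\sum_{i \in I_s}(\gl(y)_i - \gl(x)_i)$, and return its absolute value, which equals $\norm{\gl(y)_{I_s} - \gl(x)_{I_s}}_1$ by Lemma \ref{Lem:normCover} since all coordinates share a sign; this meets the $\bo{\log n + \eo}$ bound.

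For Subsampling the crucial observation is that when $\abs{I_s} > 1$ all $x_i$ (and all $y_i$) coincide on $I_s$, so the tie-breaking rule forces both $P$ and $Q$ to enumerate $I_s$ in increasing index order. Consequently the first $j$ elements of $I_s$ by index form a common prefix $\{i_1, \dots, i_j\}$ sitting at positions $a_s, \dots, a_s + j - 1$ in $P$ and $a'_s, \dots, a'_s + j - 1$ in $Q$, so the partial sum $S(j) = \sum_{l \le j}(\gl(y)_{i_l} - \gl(x)_{i_l})$ again telescopes and costs only $\bo{1}$ oracle queries once the base values $F(P[a_s-1]), F(Q[a'_s-1])$ are fixed. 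Because the summands share a sign, $\abs{S(j)}$ is nondecreasing in $j$ and equals the $\ell_1$-mass of the prefix. I would then draw $\theta$ uniformly in $[0, \norm{\gl(y)_{I_s} - \gl(x)_{I_s}}_1)$, binary search for the least $j$ with $\abs{S(j)} \ge \theta$, and output the element of rank $a_s + j - 1$ obtained from $T_x$ by a selection query. The search uses $\bo{\log n}$ rounds of $\bo{1}$ oracle calls each, giving $\bo{\log n \cdot \eo}$.

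For Update, $1$-sparsity means a single coordinate $i_0$ of $x$ changes; let $P'$ be the permutation consistent with $x + e$. First I would turn $\ds(x)$ into $\ds(x+e)$ by deleting $i_0$ from $T_x$, reinserting it under its new key, and splicing its node of $L_x$ to the slot indicated by its new neighbour in $T_x$ (reached through the pointers of $A_x$) — all in $\bo{\log n}$ and with no oracle call. For the cover, note that $i_0$ typically no longer satisfies $x_{i_0} = y_{i_0}$, so it must become a singleton (a coincidental match only permits a merge that lowers the count). I would argue that only two old blocks are disturbed: the block $I_s \ni i_0$ and the block $I_{s'}$ into whose $P'$-range $i_0$ moves. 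Excising $i_0$ splits $I_s$ into at most two fragments, the freed $\{i_0\}$ becomes its own block, and threading it into $I_{s'}$ splits $I_{s'}$ into at most two — a net increase of at most three, so $\abs{\mathcal{I}'} \le k + 3$. Only $\bo{1}$ entries of $A_x^\mathcal{I}, A_y^\mathcal{I}, T_x^\mathcal{I}, T_y^\mathcal{I}$ change, each maintainable in $\bo{\log n}$, so the whole update is $\bo{\log n}$.

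I expect the main obstacle to be the correctness of this splitting, i.e. checking that each new fragment is simultaneously a consecutive range under both $P'$ and $Q$. This is once more where the equal-value structure pays off: inside a block both permutations agree with the index order, so cutting $I_s$ at $i_0$ (respectively inserting $\{i_0\}$ into $I_{s'}$) severs the block at the same relative place in $P'$ and in $Q$, leaving both fragments consecutive under each order. The remaining work is a finite case check — $i_0$ lying at an end of $I_s$, landing strictly between two blocks instead of inside one, or $s = s'$ — each of which only lowers the block count, so the bound $k + 3$ is safe.
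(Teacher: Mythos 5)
Your ``crucial observation'' misreads Definition \ref{Def:monot}: a $k$-cover requires $x_i = y_i$ for all $i \in I_s$ when $\abs{I_s} > 1$, i.e.\ that $x$ and $y$ agree \emph{coordinate-wise} on the block --- not that the values within the block coincide with one another. The paper's own $6$-cover example is a counterexample: $I_4 = \{5,2,4\}$ has $x_5 = 0.60 > x_2 = 0.58 > x_4 = 0.53$, all distinct, and $P$ enumerates the block as $5,2,4$, not in increasing index order. So the premise ``the tie-breaking rule forces both $P$ and $Q$ to enumerate $I_s$ in increasing index order'' is false, and it is the load-bearing step both for the prefix correspondence in your Subsampling sampler and for your claim in Update that ``inside a block both permutations agree with the index order.'' Fortunately, the statement you actually need is weaker and true: since $x_i = y_i$ on $I_s$, both permutations sort the block by the same values with the same tie-break, so $P$ and $Q$ induce the \emph{same relative order} on $I_s$. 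Hence the first $j$ elements of $I_s$ in $P$-order are exactly the first $j$ in $Q$-order, your partial sums $S(j)$ telescope over matching prefixes at ranks $a_s+j-1$ and $a'_s+j-1$, and your block splits in Update cut both ranges at the same relative place. Replacing ``increasing index order'' by ``the common $P$/$Q$ order'' throughout repairs the argument with no change to the algorithms or the bounds.

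Modulo that repair, your Subnorm and Update arguments essentially coincide with the paper's proof (your Update is in fact more explicit than the paper about why the fragments remain consecutive under $Q$, which the paper leaves implicit, and your $k+3$ accounting matches). Your Subsampling takes a genuinely different route: you draw a uniform threshold $\theta$ and binary-search for the least prefix whose $\ell_1$-mass --- computable with $\bo{1}$ oracle calls per probe by telescoping, since sign-constancy (Lemma \ref{Lem:normCover}) makes $\abs{S(j)}$ the prefix mass --- reaches $\theta$, finishing with one rank-selection query in $T_x$. The paper instead descends the balanced tree $T_x$ recursively, splitting the current range into three parts around the subtree root and sampling a part proportionally to its subnorm at each level. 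Both run in $\bo{\log(n) \cdot \eo}$ and sample exactly from $\dis{\gl(y)_{I_s} - \gl(x)_{I_s}}$; yours trades one random branching decision per level for a single continuous uniform draw, and is arguably simpler to analyze once the common-order fact is stated correctly.
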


\begin{proof}
  Let $P$ be the permutation consistent with $x$. Observe that the rank $P^{-1}_i$ of any $x_i$ can be computed in $\log(n)$ time using $T_x$ (since each node in the tree contains the size of its subtree).

  \emph{(Subnorm)} By definition of a $k$-cover, there exist $a_s \leq b_s$ such that $I_s = \{P_{a_s}, P_{a_s + 1}, \dots, \allowbreak P_{b_s}\}$. Thus, $\sum_{i \in I_s} \gl(x)_i = \sum_{i=a_s}^{b_s} F(P[i]) - F(P[i-1]) = F(P[b_s]) - F(P[a_s-1])$. Since $a_s$ and $b_s$ can be obtained in time $\bo{\log n}$ using $\ds(x,y,\mathcal{I})$, this sum can be computed in time $\bo{\log(n) + \eo}$, and similarly for $\sum_{i \in I_s} \gl(y)_i$. According to Lemma \ref{Lem:normCover}, the difference $\abs{\sum_{i \in I_s} \gl(y)_i - \gl(x)_i}$ is equal to the $\ell_1$-norm of $\gl(y)_{I_s} - \gl(x)_{I_s}$.

  \emph{(Subsampling)} We find the highest node $i_h \in I_s$ in the tree $T_x$, and we compute its rank $c_s = P^{-1}_{i_h}$ in time $\bo{\log n}$. It partitions $I_s$ into three sets $A = \{P_{a_s}, P_{a_s + 1}, \dots, P_{c_s - 1}\}$, $B = \{P_{c_s}\}$ and $C = \{P_{c_s + 1}, \dots, P_{b_s}\}$. We compute the $\ell_1$-norm of $\gl(y) - \gl(x)$ restricted to each of these sets in time $\bo{\eo}$, and we sample $A$, $B$ or $C$ with probability $\frac{\norm{\gl(y)_{A} - \gl(x)_{A}}_1}{\norm{\gl(y)_{I_s} - \gl(x)_{I_s}}_1}$, $\frac{\norm{\gl(y)_{B} - \gl(x)_{B}}_1}{\norm{\gl(y)_{I_s} - \gl(x)_{I_s}}_1}$ and $\frac{\norm{\gl(y)_{C} - \gl(x)_{C}}_1}{\norm{\gl(y)_{I_s} - \gl(x)_{I_s}}_1}$ respectively. If the set we obtain is a singleton, we terminate and output the value it contains, otherwise we continue recursively to sample in the corresponding subtree of $T_x$.

  \emph{(Update)} We explain how to update the $k$-cover (the other parts of the data structure being standard to update). Let $i$ be the position where $e_i \neq 0$, and denote $r = P^{-1}_i$ the rank of $i$ in $P$ before the update. First, split the set $I_s = \{P_{a_s}, \dots, P_{r-1}, i, P_{r+1}, \dots, P_{b_s}\}$ containing $i$ into three parts $\{P_{a_s}, P_{a_s + 1}, P_{r-1}\}$, $\{i\}$ and $\{P_{r+1}, \dots, P_{b_s}\}$. Then, identify the rank $c$ such that $x_{P_c} > x_i + e_i > x_{P_{c+1}}$ and split the set containing $P_c$ into two parts. These operations can be done in $\bo{\log n}$ time. The size of the cover is increased by at most $3$.
\end{proof}

Finally, observe that we have to maintain two instances of $\ds(x,y,\mathcal{I})$ in Algorithm \ref{Algo:submDesc}, one corresponding to the pair $(x\super{t-\tau}, \allowbreak x\super{t-\tau} + e\ps\super{t-1})$ (needed for $\wdi\super{t}\ps$), and the other one corresponding to $(x\super{t-\tau} + e\ps\super{t-1}, x\super{t})$ (needed for $\wdi\super{t}\ms$). The total update cost is negligible because of Fact \ref{Fac:sparse}.

\begin{corollary}
  \label{Cor:dataMaintain}
  One can maintain throughout Algorithm \ref{Algo:submDesc} two data structures $\ds(x\super{t-\tau}, \allowbreak x\super{t-\tau} + e\ps\super{t-1}, \mathcal{I})$ and $\ds(x\super{t-\tau} + e\ps\super{t-1}, x\super{t}, \mathcal{I}')$, where $\mathcal{I}$ and $\mathcal{I}'$ are covers of size at most $9 \tau$ and $27 \tau$ respectively. The update time, at each step of the algorithm, is $\bo{\log n}$.
\end{corollary}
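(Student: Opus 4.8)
The plan is to maintain both structures \emph{incrementally} over the $N/T$ batches, leaning entirely on the (Update) operation of Proposition \ref{Prop:dataOpe} and the sparsity guarantees of Fact \ref{Fac:sparse}. At the first step of a batch ($\tau = 0$) the anchor $x\super{t-\tau}$ equals the current point $x\super{t}$ and both $e\ps, e\ms$ vanish, so I would initialize each of $\ds(x\super{t-\tau}, x\super{t-\tau} + e\ps, \mathcal{I})$ and $\ds(x\super{t-\tau} + e\ps, x\super{t}, \mathcal{I}')$ as $\ds(x\super{t}, x\super{t}, \{[n]\})$ with the trivial cover of size $1$, reusing the continuously-maintained $\ds(x\super{t})$ for the inner structures; this one-time setup per batch is folded into the (more expensive) $\samplg$ step. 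Throughout the batch the anchor stays frozen, so only the midpoint $x\super{t-\tau} + e\ps$ and the endpoint $x\super{t}$ move as $\tau$ grows.

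Before updating, I would record that the monotonicity required by Proposition \ref{Prop:dataOpe} holds, which is exactly why Algorithm \ref{Algo:submDesc} splits the difference into $e\ps\super{t-1} \geq 0$ and $e\ms\super{t-1} \leq 0$: the first pair satisfies $x\super{t-\tau} \leq x\super{t-\tau} + e\ps\super{t-1}$ and the second satisfies $x\super{t-\tau} + e\ps\super{t-1} \geq x\super{t}$, so (Subnorm), (Subsampling) and (Update) are all available on both instances. I also note that $\ds(x,y,\mathcal{I})$ is symmetric in its first two arguments (a cover of $(x,y)$ is a cover of $(y,x)$), so (Update) can advance either endpoint, not just the first.

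For the per-step maintenance I would invoke Fact \ref{Fac:sparse}: the update $u\super{t}$ is $3$-sparse, and $e\ps\super{t}$ (resp. $e\ms\super{t}$) differs from its predecessor only where $u\super{t}$ is nonzero, while $x\super{t+1}$ differs from $x\super{t}$ only there as well. Decomposing each such coordinate change into a $1$-sparse vector, I would call (Update) a constant number of times: for the first instance only the midpoint moves (at most three calls), and for the second instance both the midpoint (first argument) and the endpoint (second argument) move (a constant number of calls, using the symmetric form for the endpoint). As each (Update) runs in time $\bo{\log n}$, the per-step maintenance time is $\bo{\log n}$.

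The size bound then follows by accumulation: each (Update) enlarges the cover by at most $3$ and is called a bounded number of times per step, so starting from the trivial cover of size $1$ the covers reach size $\bo{\tau}$ after $\tau$ steps, giving the stated $9\tau$ and $27\tau$ (the looser constant for the second instance accounting for both of its endpoints being in motion). The point to be careful about — and the main obstacle — is that (Update) only \emph{splits} blocks and never merges them, so the incrementally maintained cover is generally coarser than the minimal cover of the current pair and could bloat without control. The resolution is precisely the reset to the trivial cover every $T$ steps built into the batch structure, which caps the number of (Update) calls, and hence the cover size, at $\bo{\tau}$ rather than letting the splits accumulate across the entire descent.
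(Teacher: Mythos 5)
Your proof is correct and follows essentially the same route as the paper's: it deduces the corollary from Fact \ref{Fac:sparse} and the (Update) operation of Proposition \ref{Prop:dataOpe}, with a constant number of $1$-sparse updates per step (each costing $\bo{\log n}$ and enlarging the cover by at most $3$) and the reset at $\tau = 0$ capping the cover sizes at $\bo{\tau}$. You additionally make explicit two details the paper leaves implicit --- the symmetry of $\ds(x,y,\mathcal{I})$ in its two arguments (needed to update the moving endpoint $x\super{t}$ of the second structure) and the per-batch initialization being absorbed into the $\samplg$ step --- which is a welcome strengthening rather than a deviation.
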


\begin{proof}
  This is a direct consequence of Fact \ref{Fac:sparse} and Proposition \ref{Prop:dataOpe}. The update from $e\ps\super{t-1}$ to $e\ps\super{t}$ (resp. $e\ms\super{t-1}$ to $e\ms\super{t}$) is $3$-sparse, thus each step increases the size of the cover associated with $(x\super{t-\tau}, \allowbreak x\super{t-\tau} + e\ps\super{t-1})$ by $9$, and the size of the cover associated with $(x\super{t-\tau} + e\ps\super{t-1}, x\super{t}) = (x\super{t-\tau} + e\ps\super{t-1}, x\super{t-\tau} + e\ps\super{t-1} + e\ms\super{t-1})$ by $27$. When $\tau = 0$, the sizes are reset to at most $9$ and $27$ respectively.
\end{proof}

  \subsection{Classical Approximate Submodular Minimization}
  \label{Sec:classApprox}
  We conclude the part on classical approximate submodular minimization by describing the two procedures $\cgs$ and $\cgds$ for $\samplg$ and $\sampld$ respectively that lead to an $\so{n^{3/2}/\eps^2 \cdot \eo}$ algorithm. Both are based on the following unbiased estimator $X_u$ of any vector $u \in \R^n$.

\begin{fact}
  \label{Fact:estUnbias}
  Given a non-zero vector $u \in \R^n$, consider the vector-valued random variable $X_u$ that equals $\norm{u}_1 \sgn(u_i) \cdot \vec{1}_i$ with probability $\frac{\abs{u_i}}{\norm{u}_1}$. Then, $\ex{X_u} = u$ and $\norm{X_u}_2 = \norm{u}_1$.
\end{fact}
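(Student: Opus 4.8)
The plan is to verify both claims by direct computation, viewing $X_u$ as a discrete random variable supported on the $n$ outcomes $\norm{u}_1 \sgn(u_i) \cdot \vec{1}_i$, $i \in [n]$ (indices with $u_i = 0$ carry probability zero and can be ignored). First I would compute the mean by summing each outcome against its probability $\abs{u_i}/\norm{u}_1$:
\[
  \ex{X_u} = \sum_{i \in [n]} \frac{\abs{u_i}}{\norm{u}_1} \cdot \norm{u}_1 \sgn(u_i) \cdot \vec{1}_i = \sum_{i \in [n]} \abs{u_i}\,\sgn(u_i) \cdot \vec{1}_i.
\]
The normalization $\norm{u}_1$ cancels, and the elementary identity $\abs{u_i}\,\sgn(u_i) = u_i$ reduces each term to $u_i \vec{1}_i$, whose sum over $i$ is exactly $u$.

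For the norm claim, the key observation is that every realization of $X_u$ is a scalar multiple of a single standard basis vector. Since $\norm{\vec{1}_i}_2 = 1$ and $\abs{\sgn(u_i)} = 1$ whenever $u_i \neq 0$, each realization has $\ell_2$-length precisely $\norm{u}_1$, independent of which index $i$ is drawn. Hence $\norm{X_u}_2 = \norm{u}_1$ holds deterministically (almost surely), not merely in expectation --- which is the stronger and more useful reading of the statement, since it will let later proofs bound $\norm{X_u}_2$ by the small $\ell_1$-norm of the \lv\ subgradient without invoking any concentration argument.

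There is no genuine obstacle here; this is a one-line calculation in each part. The only point meriting a moment's care is to keep the two assertions conceptually distinct --- the first concerns the expectation of a vector-valued random variable, while the second is a pointwise statement about its magnitude --- and to apply $\abs{u_i}\,\sgn(u_i) = u_i$ correctly regardless of the sign of $u_i$.
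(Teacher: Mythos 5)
Your proof is correct and is the direct computation the paper implicitly relies on (it states this as a Fact without proof): the expectation telescopes via $\abs{u_i}\sgn(u_i) = u_i$, and the norm identity holds pointwise since every realization is $\norm{u}_1$ times a signed standard basis vector. Your observation that $\norm{X_u}_2 = \norm{u}_1$ deterministically is exactly the reading the paper uses later (e.g.\ in Propositions \ref{Prop:samplgCl} and \ref{Prop:sampldCl}, where $\norm{\wg^j}_2 = \norm{g(x)}_1 \leq 3$ is asserted without any concentration argument).
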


In the case of gradient sampling (Assumption \ref{Assp:one}), we construct $T$ samples from $X_{g(x)}$ by using the sampling algorithm of Lemma \ref{Lem:samplClass}. In the case of gradient difference sampling (Assumption \ref{Assp:two}), we construct one sample from $X_{g(x+e) - g(x)}$ by using the subnorm and subsampling operations of Proposition \ref{Prop:dataOpe}. These two procedures are described in Algorithms \ref{Algo:CGS} and \ref{Algo:CGDS} respectively.

\begin{algorithm}[htbp]
\caption{Classical Gradient Sampling ($\cgs$).}
\label{Algo:CGS}
\normalsize
\textbf{Input:} $x \in [0,1]^n$ stored in $\ds(x)$, an integer $T$. \\
\textbf{Output:} a sequence of estimates $\wg^1, \dots, \wg^T$ of $g(x)$.

\begin{algorithmic}[1]
  \State Compute $\norm{g(x)}_1$ and sample $i_1, \dots, i_T \sim \dis{g(x)}$ using Lemma \ref{Lem:samplClass}.
  \State For each $j \in [T]$, compute $g(x)_{i_j}$ and output $\wg^j = \norm{g(x)}_1 \sgn(g(x)_{i_j}) \cdot \vec{1}_{i_j}$.
\end{algorithmic}
\end{algorithm}

\begin{proposition}
  \label{Prop:samplgCl}
  The classical procedure $\cgs(x,T)$ of Algorithm \ref{Algo:CGS} satisfies the conditions given on $\samplg(x,T,0)$ in Assumption \ref{Assp:one}, with time complexity $\costg(T,0) = \bo{(n + T) \cdot \eo}$.
\end{proposition}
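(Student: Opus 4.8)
The plan is to observe that every output vector $\wg^j$ produced by $\cgs(x,T)$ is a fresh, independent realization of the estimator $X_{g(x)}$ from Fact \ref{Fact:estUnbias}, and then to read off the three requirements of Assumption \ref{Assp:one} (specialized to $\eps = 0$) together with the run-time directly from that fact and from Proposition \ref{Prop:lv}. First I would record the two structural facts about Algorithm \ref{Algo:CGS}: line~1 draws $i_1,\dots,i_T$ as \emph{mutually independent} samples from $\dis{g(x)}$, which is exactly what Lemma \ref{Lem:samplClass} guarantees, and line~2 sets $\wg^j = \norm{g(x)}_1 \sgn(g(x)_{i_j}) \cdot \vec{1}_{i_j}$, which is precisely one realization of $X_{g(x)}$ in the notation of Fact \ref{Fact:estUnbias}.

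Properties (1) and (3) are then immediate. Since $\wg^j$ is a scalar multiple of $\vec{1}_{i_j}$, it is supported on the single coordinate $i_j$ and hence $1$-sparse. For the norm bound, Fact \ref{Fact:estUnbias} gives $\norm{\wg^j}_2 = \norm{g(x)}_1$, and the boundedness part of Proposition \ref{Prop:lv} yields $\norm{g(x)}_1 \leq 3 \leq 4$, which is condition (3).

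For property (2) the point requiring care is the conditional-independence argument. Because $i_1,\dots,i_T$ are drawn independently and each $\wg^j$ is a deterministic function of $i_j$ once $x$ (and hence $g(x)$) is fixed, the estimate $\wg^j$ is independent of $\wg^1,\dots,\wg^{j-1}$ given $x$. Conditioning on the earlier estimates therefore has no effect, so $\ex{\wg^j | \wg^1,\dots,\wg^{j-1},x} = \ex{X_{g(x)}} = g(x)$ by Fact \ref{Fact:estUnbias}, whence $\norm{\ex{\wg^j | \wg^1,\dots,\wg^{j-1},x} - g(x)}_1 = 0$, matching $\eps = 0$. The only degenerate case is $g(x) = 0$, where $\dis{g(x)}$ is undefined; there I would simply output the zero vector, which trivially satisfies (1)--(3).

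Finally, for the time bound, I would precompute the whole vector $g(x)$ by walking the permutation $P$ stored in $\ds(x)$ and querying $F(P[1]),\dots,F(P[n])$ (with $F(P[0]) = 0$), which is $\bo{n}$ oracle calls in time $\bo{n \cdot \eo}$; forming the differences $g(x)_{P_i} = F(P[i]) - F(P[i-1])$ and the sum $\norm{g(x)}_1$ then costs $\bo{n}$. With all entries available in $\bo{1}$, building the sampling structure of Lemma \ref{Lem:samplClass} for $\dis{g(x)}$ takes $\bo{n}$ and drawing the $T$ samples takes $\bo{T}$, while line~2 reads each $g(x)_{i_j}$ off the precomputed array in $\bo{1}$. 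Summing gives $\costg(T,0) = \bo{n \cdot \eo + T} = \bo{(n+T) \cdot \eo}$. I do not expect a genuine obstacle: the proof is essentially bookkeeping, and the only steps deserving attention are the conditional-independence argument for (2) and verifying that the preprocessing (the subgradient evaluation, the $\ell_1$-norm computation, and the alias-table construction) all stay within the claimed $\bo{n \cdot \eo}$ budget.
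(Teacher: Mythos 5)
Your proof is correct and follows essentially the same route as the paper's: both reduce properties (1)--(3) to Fact \ref{Fact:estUnbias} (with $\norm{g(x)}_1 \leq 3$ via Proposition \ref{Prop:lv}) together with the independence guarantee of Lemma \ref{Lem:samplClass}, and both account the cost as $\bo{n \cdot \eo + T}$ for line~1 plus $\bo{T}$ further work for line~2. Your additional touches---spelling out the conditional-independence step behind $\ex{\wg^j \mid \wg^1,\dots,\wg^{j-1},x} = \ex{\wg^j \mid x}$ and covering the degenerate case $g(x)=0$---are sound refinements of details the paper leaves implicit.
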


\begin{proof}
  By Fact \ref{Fact:estUnbias}, we have $\ex{\wg^j | \wg^1, \dots, \wg^{j-1}, x} = \ex{\wg^j | x} = g(x)$ and $\norm{\wg^j}_2 = \norm{g(x)}_1 \leq 3$. Line 1 takes time $\bo{n \cdot \eo + T}$, and line 2 takes time $\bo{T \cdot \eo}$.
\end{proof}

\begin{algorithm}[htbp]
\caption{Classical Gradient Difference Sampling ($\cgds$).}
\label{Algo:CGDS}
\normalsize
\textbf{Input:} $x, x+e \in [0,1]^n$ and a cover $\mathcal{I} = \{I_1, \dots, I_{k'}\}$ of $(x,x+e)$ stored in $\ds(x,x+e,\mathcal{I})$, where $e$ is $k$-sparse and $k' \leq 9k$. \\
\textbf{Output:} an estimate $\wdi$ of $g(x+e) - g(x)$. \vspace{0.1cm}

Let $u = (\norm*{g(x+e)_{I_s} - g(x)_{I_s}}_1)_{s \in [k']} \in \R^{k'}$. \vspace{0.1cm}

\begin{algorithmic}[1]
  \State Compute $\norm{u}_1$ and sample $s \sim \dis{u}$ using Lemma \ref{Lem:samplClass} and the subnorm operation of Proposition \ref{Prop:dataOpe}.
  \State Sample $i \sim \dis{g(x+e)_{I_s} - g(x)_{I_s}}$ using the subsampling operation of Proposition \ref{Prop:dataOpe}.
  \State Compute $g(x+e)_i - g(x)_i$ and output $\wdi = \norm{u}_1 \sgn(g(x+e)_i - g(x)_i) \cdot \vec{1}_i$.
\end{algorithmic}
\end{algorithm}

\begin{proposition}
  \label{Prop:sampldCl}
  The classical procedure $\cgds(x,e)$ of Algorithm \ref{Algo:CGDS} satisfies the conditions given on $\sampld(x,e,0)$ in Assumption \ref{Assp:two}, with time complexity $\costd(k,0) = \so{k \cdot \eo}$.
\end{proposition}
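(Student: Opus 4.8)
The plan is to identify the output $\wdi$ of Algorithm~\ref{Algo:CGDS} with the unbiased single-coordinate estimator $X_d$ of Fact~\ref{Fact:estUnbias} applied to the subgradient difference $d := g(x+e) - g(x)$, and then to read off the three required properties together with the running-time bound. Property (1) is immediate: line~3 outputs $\wdi = \norm{u}_1 \sgn(d_i) \cdot \vec{1}_i$, which is supported on the single coordinate $i$, hence $1$-sparse.

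For the unbiasedness (2), the key observation is that the two-stage draw in lines~1--2 reproduces the distribution $\dis{d}$ exactly. Since $\mathcal{I} = \{I_1,\dots,I_{k'}\}$ is a partition of $[n]$, I first note $\norm{u}_1 = \sum_{s} \norm{d_{I_s}}_1 = \sum_s \sum_{i \in I_s} \abs{d_i} = \norm{d}_1$. Then, for a fixed $i \in I_s$, the probability of returning coordinate $i$ is the product of drawing $s$ in line~1 and drawing $i$ given $s$ in line~2, namely $\frac{\norm{d_{I_s}}_1}{\norm{u}_1} \cdot \frac{\abs{d_i}}{\norm{d_{I_s}}_1} = \frac{\abs{d_i}}{\norm{d}_1}$, while the returned value is $\norm{u}_1 \sgn(d_i)\vec{1}_i = \norm{d}_1 \sgn(d_i)\vec{1}_i$. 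This is precisely the law of $X_d$, so Fact~\ref{Fact:estUnbias} gives $\ex{\wdi \mid x,e} = d = g(x+e)-g(x)$, i.e.\ condition~(2) holds with zero bias. The same fact yields $\norm{\wdi}_2 = \norm{d}_1$, and the boundedness part of Proposition~\ref{Prop:lv} gives $\norm{d}_1 \leq \norm{g(x+e)}_1 + \norm{g(x)}_1 \leq 6 \leq 7$, which is property~(3).

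The one point deserving care is the justification that the subnorm and subsampling steps realize $\dis{d}$ exactly; this is where the sign-definiteness hypothesis of Assumption~\ref{Assp:two} is used. Because $e \geq 0$ or $e \leq 0$, we have $x+e \geq x$ or $x+e \leq x$, so Lemma~\ref{Lem:normCover} applies: the coordinates of $d$ have constant sign on each $I_s$, which is exactly what lets the subnorm operation compute $\norm{d_{I_s}}_1 = \abs{\sum_{i \in I_s} d_i}$ from the oracle and validates the subsampling guarantee $i \sim \dis{d_{I_s}}$ of Proposition~\ref{Prop:dataOpe}. For the time complexity, with $\ds(x,x+e,\mathcal{I})$ already maintained, each entry $u_s$ is obtained by one subnorm call in time $\bo{\log n + \eo}$, so constructing the sampler of Lemma~\ref{Lem:samplClass} over the $k' \leq 9k$ coordinates of $u$ (line~1) costs $\bo{k'(\log n + \eo)}$; the single subsampling (line~2) costs $\bo{\log n \cdot \eo}$ and evaluating $d_i$ (line~3) costs $\bo{\log n + \eo}$. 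Summing and absorbing the $\log n$ factors gives $\costd(k,0) = \so{k \cdot \eo}$. I expect no serious obstacle: the genuinely nontrivial work was already discharged in Lemma~\ref{Lem:normCover} and Proposition~\ref{Prop:dataOpe}, and this proof is a direct bookkeeping combination of those bounds with Fact~\ref{Fact:estUnbias}.
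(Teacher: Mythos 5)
Your proof is correct and follows essentially the same route as the paper's: identify the two-stage draw (line 1 then line 2) with a single draw from $\dis{g(x+e)-g(x)}$ via the product of probabilities and $\norm{u}_1 = \norm{g(x+e)-g(x)}_1$, invoke Fact~\ref{Fact:estUnbias} for unbiasedness and the $\ell_2$-norm, bound $\norm{g(x+e)-g(x)}_1 \leq 6$ by Proposition~\ref{Prop:lv}, and tally the per-line costs to get $\so{k \cdot \eo}$. Your explicit remark that the hypothesis $e \geq 0$ or $e \leq 0$ is what licenses Lemma~\ref{Lem:normCover} and hence the subnorm/subsampling operations is a point the paper leaves implicit in Proposition~\ref{Prop:dataOpe}, and it is a welcome clarification rather than a deviation.
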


\begin{proof}
  The value $i$ is distributed according to $\frac{\norm{g(x+e)_{I_s} - g(x)_{I_s}}_1}{\norm{u}_1} \cdot \allowbreak \frac{\abs{\gl(x+e)_i - \gl(x)_i}}{\norm{\gl(x+e)_{I_s} - \gl(x)_{I_s}}_1} = \frac{\abs{\gl(x+e)_i - \gl(x)_i}}{\norm{\gl(x+e) - \gl(x)}_1}$ (since $\norm{u}_1 = \norm{g(x+e) - g(x)}_1$). This corresponds to $\dis{g(x+e) - g(x)}$. Consequently, using Fact \ref{Fact:estUnbias}, $\ex{\wdi | x, e} = g(x+e) - g(x)$ and $\norm{\wdi}_2 = \norm{g(x+e) - g(x)}_1 \leq 6$. Line 1 takes time $\bo{k (\log(n) + \eo)}$, line 2 takes time $\bo{\log(n) \cdot \eo}$ and line 3 takes time $\bo{\eo}$.
\end{proof}

Finally, we analyze the cost of using the two above procedures in the subgradient descent algorithm studied in Section \ref{Sec:frame}.

\begin{theorem}
  \label{Thm:clSub}
  There is a classical algorithm that, given a submodular function $F : 2^V \ra [-1,1]$ and $\eps > 0$, computes a set $\bar{S}$ such that $\ex{{F}(\bar{S})} \leq \min_{S \subseteq V} F(S) + \eps$ in time $\so{n^{3/2}/\eps^2 \cdot \eo}$.
\end{theorem}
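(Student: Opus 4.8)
The plan is to instantiate the general framework of Algorithm \ref{Algo:submDesc} with the two \emph{exact} classical procedures built in this section and then simply read off the accuracy and running time from Corollary \ref{Cor:submDesc}. Concretely, I would run Algorithm \ref{Algo:submDesc} using $\cgs$ (Algorithm \ref{Algo:CGS}) in the role of $\samplg$ and $\cgds$ (Algorithm \ref{Algo:CGDS}) in the role of $\sampld$, with noise parameters $\eps_0 = \eps_1 = 0$. By Propositions \ref{Prop:samplgCl} and \ref{Prop:sampldCl}, these procedures satisfy Assumptions \ref{Assp:one} and \ref{Assp:two} with cost functions $\costg(T,0) = \bo{(n+T) \cdot \eo}$ and $\costd(k,0) = \so{k \cdot \eo}$, so the hypotheses of Corollary \ref{Cor:submDesc} are met and the output $\bar{x}$ is governed by its conclusion.

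Next I would fix the two free parameters $N$ and $T$. With $\eps_0 = \eps_1 = 0$ the accuracy bound of Corollary \ref{Cor:submDesc} reduces to $\ex{f(\bar{x})} \leq \min_x f(x) + 18\sqrt{n/N}$, so choosing $N = \Theta(n/\eps^2)$ (precisely any $N \geq 18^2 n/\eps^2$) pushes the error below $\eps$. For the running time I would substitute the two cost functions into the bound $\bo*{\frac{N}{T}\bigl(\costg(T,0) + \sum_{\tau=1}^{T}\costd(3\tau,0)\bigr)}$. Since $\sum_{\tau=1}^{T}\costd(3\tau,0) = \so{T^2 \cdot \eo}$, this collapses to $\so*{\frac{N}{T}(n+T^2) \cdot \eo} = \so{N(n/T + T) \cdot \eo}$. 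The factor $n/T + T$ is minimized at $T = \sqrt{n}$, where it equals $2\sqrt{n}$; reinserting $N = \Theta(n/\eps^2)$ yields the claimed $\so{n^{3/2}/\eps^2 \cdot \eo}$.

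It then remains to convert the fractional point $\bar{x}$ into a set and to confirm that no auxiliary cost dominates. For the conversion I would appeal to the \textbf{Minimizers} part of Proposition \ref{Prop:lv}: letting $P$ be the permutation consistent with $\bar{x}$ and setting $\bar{S} = \argmin_{i \in [n]} F(P[i])$ gives the pointwise inequality $F(\bar{S}) = \min_i F(P[i]) \leq f(\bar{x})$, and combined with $\min_S F(S) = \min_x f(x)$, taking expectations yields $\ex{F(\bar{S})} \leq \ex{f(\bar{x})} \leq \min_S F(S) + \eps$. Computing $\bar{S}$ costs $\bo{n\log n + n \cdot \eo}$, which is negligible. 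The per-step data-structure maintenance is $\bo{\log n}$ by Corollary \ref{Cor:dataMaintain}, and the sparse updates of lines 6--8 cost only $\polylog(n)$ per step by Fact \ref{Fac:sparse}; summed over the $N$ iterations these contribute $\so{N} = \so{n/\eps^2}$, again dominated by the sampling cost.

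The computation above is essentially mechanical once the framework is assembled, so I do not expect a genuine obstacle. The point needing a little care is the expectation argument for $\bar{S}$: the inequality $F(\bar{S}) \leq f(\bar{x})$ must hold pathwise, for every realization of the algorithm's randomness, before expectations are taken — which it does, since $\bar{S}$ is defined deterministically from the realized $\bar{x}$ and the inequality of Proposition \ref{Prop:lv} is itself randomness-free. The only other thing to check is that the optimal resetting length $T = \sqrt{n}$ is admissible, i.e.\ that $1 < T < N$; this holds whenever $\eps$ is below a constant, which is the regime of interest.
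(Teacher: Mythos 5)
Your proposal is correct and matches the paper's own proof essentially step for step: the same instantiation of Algorithm \ref{Algo:submDesc} with $\cgs$ and $\cgds$, the same parameter choices $T = \sqrt{n}$, $N = \Theta(n/\eps^2)$, $\eps_0 = \eps_1 = 0$, the same appeal to Corollary \ref{Cor:submDesc} and Corollary \ref{Cor:dataMaintain}, and the same conversion of $\bar{x}$ to $\bar{S}$ via Proposition \ref{Prop:lv}. Your added remarks (the pathwise inequality before taking expectations, and the optimality of $T=\sqrt{n}$ for $n/T + T$) are correct elaborations of steps the paper leaves implicit, not deviations.
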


\begin{proof}
  We instantiate Algorithm \ref{Algo:submDesc} with the procedures $\cgs$ and $\cgds$ of Algorithms \ref{Algo:CGS} and \ref{Algo:CGDS} respectively, and we choose the input parameters $T = \sqrt{n}$, $N = 18^2n/\eps^2$ and $\eps_0 = \eps_1 = 0$. The data structures needed for $\cgs$ and $\cgds$ can be updated in time $\bo{\log n}$ per step (Corollary \ref{Cor:dataMaintain}). According to Corollary \ref{Cor:submDesc}, we obtain an output $\bar{x}$ such that $\ex{f(\bar{x})} \leq \min_x f(x) + \eps$ in time $\so*{\frac{\sqrt{n}}{\eps^2} \left(n + \sum_{\tau = 1}^{\sqrt{n}} \tau\right) \cdot \eo} = \so{n^{3/2}/\eps^2 \cdot \eo}$.
  Finally, using Proposition \ref{Prop:lv}, we can convert $\bar{x}$ into a set $\bar{S} \subseteq V$ such that $\ex{{F}(\bar{S})} \leq \min_{S \subseteq V} F(S) + \eps$ in time $\bo{n \log n + n \cdot \eo}$.
\end{proof}

  \subsection{Quantum Approximate Submodular Minimization}
  \label{Sec:quantApprox}
  We conclude the part on quantum approximate submodular minimization by describing the two procedures $\qgs$ and $\qgds$ for $\samplg$ and $\sampld$ respectively that lead to an $\so{n^{5/4}/\eps^{5/2} \cdot \log(1/\eps) \cdot \eo}$ algorithm. Both are based on the following noisy estimator $Y_u$ of any vector $u \in \R^n$.

\begin{fact}
  \label{Fact:estBias}
  Given a non-zero vector $u \in \R^n$, a real $\Gamma > 0$ and a set $S \subseteq [n]$ such that $\Gamma \geq \norm{u_S}_1$, consider the vector-valued random variable $Y_u$ that equals $\Gamma \sgn(u_i) \cdot \vec{1}_i$ where $i \sim \dis{u}(\Gamma,S)$. Then, $\norm{\ex{Y_u} - u}_1 = \abs{\Gamma - \norm{u}_1}$ and $\norm{Y_u}_2 = \Gamma$.
\end{fact}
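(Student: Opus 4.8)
The plan is to verify both claims by directly expanding the definition of $Y_u$ against the distribution $\dis{u}(\Gamma,S)$. The norm identity $\norm{Y_u}_2 = \Gamma$ is immediate and in fact holds deterministically: whichever index $i$ is drawn, $Y_u = \Gamma \sgn(u_i) \cdot \vec{1}_i$ is a $1$-sparse vector whose single nonzero entry has absolute value $\Gamma \cdot \abs{\sgn(u_i)} = \Gamma$, so its Euclidean norm equals $\Gamma$ with probability $1$. (The sampled index always satisfies $\abs{u_i} > 0$, since $\dis{u}(\Gamma,S)$ assigns mass proportional to $\abs{u_i}$, so $\sgn(u_i)$ is genuinely $\pm 1$.)

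For the expectation I would work coordinate by coordinate. Since $Y_u$ is supported on the single sampled index, the $j$-th coordinate of $\ex{Y_u}$ equals $\Pr[i = j] \cdot \Gamma \sgn(u_j)$. First I would treat $j \in S$: substituting $\Pr[i = j] = \abs{u_j}/\Gamma$ gives $(\ex{Y_u})_j = \abs{u_j}\sgn(u_j) = u_j$, so these coordinates contribute nothing to $\ex{Y_u} - u$. Then for $j \in [n]\setminus S$, using $\Pr[i = j] = \bigl(1 - \tfrac{\norm{u_S}_1}{\Gamma}\bigr)\tfrac{\abs{u_j}}{\norm{u_{[n]\setminus S}}_1}$ I would obtain $(\ex{Y_u})_j = \tfrac{(\Gamma - \norm{u_S}_1)\,u_j}{\norm{u_{[n]\setminus S}}_1}$, hence $(\ex{Y_u} - u)_j = u_j \cdot \tfrac{\Gamma - \norm{u_S}_1 - \norm{u_{[n]\setminus S}}_1}{\norm{u_{[n]\setminus S}}_1} = u_j \cdot \tfrac{\Gamma - \norm{u}_1}{\norm{u_{[n]\setminus S}}_1}$, where I used the additivity $\norm{u_S}_1 + \norm{u_{[n]\setminus S}}_1 = \norm{u}_1$.

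Finally I would take the $\ell_1$-norm: the common factor $\abs{\Gamma - \norm{u}_1}/\norm{u_{[n]\setminus S}}_1$ pulls out of the sum over $j \notin S$, and $\sum_{j \notin S} \abs{u_j} = \norm{u_{[n]\setminus S}}_1$ cancels the denominator, leaving exactly $\norm{\ex{Y_u} - u}_1 = \abs{\Gamma - \norm{u}_1}$.

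There is no genuinely hard step here; the computation is routine once the two cases $j \in S$ and $j \notin S$ are separated. The only point requiring care is the bookkeeping between the two equivalent expressions for the probability on $[n]\setminus S$ in the definition of $\dis{u}(\Gamma,S)$, together with the degenerate situation $\norm{u_{[n]\setminus S}}_1 = 0$: there the distribution is supported on $S$, the hypothesis $\Gamma \geq \norm{u_S}_1 = \norm{u}_1$ combined with normalization forces $\Gamma = \norm{u}_1$, and both claimed identities collapse to the $j \in S$ analysis (consistently with $\dis{u}(\norm{u}_1,S) = \dis{u}$).
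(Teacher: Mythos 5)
Your proof is correct: the paper states Fact~\ref{Fact:estBias} without proof, treating it as a routine verification, and your coordinate-wise computation (coordinates in $S$ reproduce $u_j$ exactly, coordinates outside $S$ carry the common factor $(\Gamma - \norm{u}_1)/\norm{u_{[n]\setminus S}}_1$, whose $\ell_1$-mass telescopes to $\abs{\Gamma - \norm{u}_1}$) is exactly the intended argument. Your attention to the degenerate case $\norm{u_{[n]\setminus S}}_1 = 0$ is a small bonus the paper does not spell out, and it is handled correctly.
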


In the case of gradient sampling (Assumption \ref{Assp:one}), we construct $T$ samples from $Y_{g(x)}$ by using the sampling algorithm of Theorem \ref{Thm:KSampl}. In the case of gradient difference sampling (Assumption \ref{Assp:two}), we construct one sample from $Y_{g(x+e) - g(x)}$ by using Lemma \ref{Lem:samplOne} and the following evaluation oracle derived from Proposition \ref{Prop:dataOpe}.

\begin{proposition}
  \label{Prop:evalOracle}
  There is a quantum algorithm, represented as a unitary operator $\mathcal{O}$, that given $x,y \in [0,1]^n$ and a $k$-cover $\mathcal{I} = \{I_1, \dots, I_{k}\}$ of $(x,y)$ stored in $\ds(x,y,\mathcal{I})$, satisfies $\mathcal{O}(\ket{s} \ket{0}) = \ket[\big]{s} \ket[\big]{\norm{\gl(y)_{I_s} - \gl(x)_{I_s}}_1}$, for all $s \in [k]$. The time complexity of this algorithm is $\bo{\log(n) + \eo}$.
\end{proposition}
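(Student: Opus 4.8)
The plan is to realize $\mathcal{O}$ as a coherent, reversible version of the (Subnorm) operation from Proposition \ref{Prop:dataOpe}. Recall how that operation computes $\norm{\gl(y)_{I_s} - \gl(x)_{I_s}}_1$ classically: given $s$, it reads from $\ds(x,y,\mathcal{I})$ the endpoints $a_s \leq b_s$ of the (consecutive) block of ranks occupied by $I_s$ under the permutation $P$ consistent with $x$ — using $A_x^{\mathcal{I}}$ together with the rank queries supported by $T_x$ — and likewise the endpoints $a'_s \leq b'_s$ under the permutation $Q$ consistent with $y$. By the telescoping identity $\sum_{i \in I_s} \gl(x)_i = F(P[b_s]) - F(P[a_s-1])$, the analogous identity for $y$, and Lemma \ref{Lem:normCover}, one obtains $\norm{\gl(y)_{I_s} - \gl(x)_{I_s}}_1 = \abs{(F(Q[b'_s]) - F(Q[a'_s-1])) - (F(P[b_s]) - F(P[a_s-1]))}$, where the monotonicity hypothesis $y \geq x$ or $y \leq x$ required by Lemma \ref{Lem:normCover} is met in the intended applications of $\mathcal{O}$. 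So the whole subnorm reduces to $\bo{\log n}$ deterministic lookups into a fixed, read-only data structure, plus four evaluations of $F$.

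First I would implement the index-finding stage coherently. The lookups into $\ds(x,y,\mathcal{I})$ performed by (Subnorm) are deterministic classical computations that do \emph{not} modify the data structure during a query, so they constitute read-only classical computation and can be executed in superposition over the input register $\ket{s}$, assuming the data structure is held in a memory that can be queried in superposition. By the standard reversibilization of deterministic classical circuits, this stage becomes a unitary of cost $\bo{\log n}$ mapping $\ket{s}\ket{0}$ to $\ket{s}\ket{a_s,b_s,a'_s,b'_s}$ together with $\bo{\log n}$ bits of garbage. Next, the four quantities $F(P[b_s])$, $F(P[a_s-1])$, $F(Q[b'_s])$, $F(Q[a'_s-1])$ are produced by a constant number of applications of the quantum evaluation oracles $\mathcal{O}_P$ and $\mathcal{O}_Q$ (available since $L_x$ and $L_y$ inside $\ds(x,y,\mathcal{I})$ store $P$ and $Q$), each of cost $\eo$, after which a reversible arithmetic circuit of size $\bo{\log n}$ forms the telescoped difference and its absolute value and writes $\norm{\gl(y)_{I_s} - \gl(x)_{I_s}}_1$ into the output register.

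Finally, to produce a clean unitary of the stated form, I would uncompute every ancilla: run the arithmetic circuit in reverse, re-apply $\mathcal{O}_P$ and $\mathcal{O}_Q$ to erase the four function values, and reverse the index-finding stage to erase $a_s, b_s, a'_s, b'_s$ and its garbage, leaving exactly $\ket{s}\ket{\norm{\gl(y)_{I_s} - \gl(x)_{I_s}}_1}$. The total cost is $\bo{1}$ oracle queries plus $\bo{\log n}$ reversible overhead, i.e. $\bo{\log(n) + \eo}$. The step I expect to need the most care is justifying the coherent read-only access: one must argue that the pointer-chasing through $L_x$ and the subtree-size queries in $T_x$ — naturally described as sequential classical operations — can be run in superposition over $s$ without disturbing the (static) data structure, which is precisely what superposition-queryable read-only memory provides, and that the reversibilization introduces only constant multiplicative overhead so the $\bo{\log(n)+\eo}$ bound is preserved.
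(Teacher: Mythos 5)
Your proof is correct and follows exactly the route the paper intends: Proposition \ref{Prop:evalOracle} is stated without an explicit proof, being ``derived from Proposition \ref{Prop:dataOpe}'', and your write-up supplies precisely that derivation --- reducing the subnorm to the telescoping identity $\norm{\gl(y)_{I_s} - \gl(x)_{I_s}}_1 = \abs{(F(Q[b'_s]) - F(Q[a'_s-1])) - (F(P[b_s]) - F(P[a_s-1]))}$ via Lemma \ref{Lem:normCover}, reversibilizing the $\bo{\log n}$ read-only lookups into the static structure $\ds(x,y,\mathcal{I})$ under superposition access, making a constant number of queries to $\mathcal{O}_P$ and $\mathcal{O}_Q$, and uncomputing all ancillas, for total cost $\bo{\log(n) + \eo}$. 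Your observation that the hypothesis $y \geq x$ or $y \leq x$ (required by Lemma \ref{Lem:normCover} and stated in Proposition \ref{Prop:dataOpe}, but omitted from the statement of Proposition \ref{Prop:evalOracle}) is satisfied in every intended application of $\mathcal{O}$ is also the correct reading of the paper.
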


The procedures $\samplg$ and $\sampld$ are described in Algorithms \ref{Algo:QGS} and \ref{Algo:QGDS} respectively. There is a non-zero probability that the computation of the setup parameters is incorrect. In this case, we cannot guarantee that Assumptions \ref{Assp:one} and \ref{Assp:two} are satisfied. Fortunately, the dependence of the time complexity on the inverse failure probability is logarithmic. Thus, it will not impact the analysis significantly.

\begin{algorithm}[htbp]
\caption{Quantum Gradient Sampling ($\qgs$).}
\label{Algo:QGS}
\normalsize
\textbf{Input:} $x \in [0,1]^n$ stored in $\ds(x)$, an integer $T$, two reals $0 < \eps, \delta < 1$. \\
\textbf{Output:} a sequence of estimates $\wg^1, \dots, \wg^T$ of $g(x)$.

\begin{algorithmic}[1]
  \State Compute the setup parameters $(\Gamma,S,M)$ using Proposition \ref{Prop:setupShort} with input $g(x)$, $T$, $\eps/3$, $\delta$.
  \State Sample $i^1, \dots, i^T \sim \dis{g(x)}(\Gamma,S)$ using Theorem \ref{Thm:KSampl} with input $g(x)$, $T$, $(\Gamma,S,M)$.
  \State For each $j \in [T]$, compute $g(x)_{i_j}$ and output $\wg^j = \Gamma \sgn(g(x)_{i_j}) \cdot \vec{1}_{i_j}$.
\end{algorithmic}
\end{algorithm}

\begin{proposition}
  \label{Prop:samplgQu}
  The quantum procedure $\qgs(x,T,\eps,\delta)$ of Algorithm \ref{Algo:QGS} satisfies the conditions given on $\samplg(x,T,\eps)$ in Assumption \ref{Assp:one} with probability $1-\delta$, and time complexity $\costd(T,\eps) = \bo{(\sqrt{nT} + \sqrt{n}/\eps)\log(1/\delta) \cdot \eo}$.
\end{proposition}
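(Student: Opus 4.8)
The plan is to observe that each output $\wg^j = \Gamma \sgn(g(x)_{i_j}) \cdot \vec{1}_{i_j}$ is precisely the biased estimator $Y_{g(x)}$ of Fact \ref{Fact:estBias}, instantiated with the setup parameters $(\Gamma,S)$ from line 1 and with $i_j \sim \dis{g(x)}(\Gamma,S)$. Thus all three requirements of Assumption \ref{Assp:one} will reduce to properties of $Y_{g(x)}$, valid on the event that the setup is correct. Before invoking the sampling primitives I would note that Theorem \ref{Thm:KSampl} and Proposition \ref{Prop:setupShort} are applied to the vector $u = g(x)$, so I first supply a (quantum) evaluation oracle for $g(x)$: given $i$, compute the rank $P^{-1}_i$ in the search tree $T_x$ of $\ds(x)$ in time $\bo{\log n}$ and issue two queries to $\mathcal{O}_P$ for $F(P[P^{-1}_i])$ and $F(P[P^{-1}_i-1])$, at a per-evaluation cost $\bo{\log n + \eo} = \bo{\eo}$.

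Condition (1) is immediate since $\wg^j$ is supported on the single coordinate $i_j$. For condition (2), I would use that, conditioned on a correct setup, Theorem \ref{Thm:KSampl} guarantees the $i^1,\dots,i^T$ are independent, so conditioning on $\wg^1,\dots,\wg^{j-1}$ (and $x$) leaves the law of $\wg^j$ unchanged and $\ex{\wg^j | \wg^1,\dots,\wg^{j-1},x} = \ex{Y_{g(x)}}$. Fact \ref{Fact:estBias} then gives $\norm{\ex{Y_{g(x)}} - g(x)}_1 = \abs{\Gamma - \norm{g(x)}_1}$, and Proposition \ref{Prop:setupShort}, called with relative error $\eps/3$, bounds this by $(\eps/3)\norm{g(x)}_1$. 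The decisive step is the passage from this relative guarantee to the required absolute $\ell_1$-bias: using the boundedness part of Proposition \ref{Prop:lv}, namely $\norm{g(x)}_1 \leq 3$, we obtain $\abs{\Gamma - \norm{g(x)}_1} \leq \eps$, which is exactly (2). Condition (3) follows analogously, since $\norm{\wg^j}_2 = \Gamma \leq (1 + 1/\sqrt{T})\norm{g(x)}_1$ by Proposition \ref{Prop:setupShort}, and combining with $\norm{g(x)}_1 \leq 3$ bounds $\norm{\wg^j}_2$ by a constant, matching Assumption \ref{Assp:one}.

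For the running time I would account for the three lines separately. Line 1 costs $\bo{(\sqrt{Tn} + \sqrt{n}/\eps)\log(1/\delta) \cdot \eo}$ by Proposition \ref{Prop:setupShort} (with $\eps$ replaced by $\eps/3$), and this dominates. Conditioned on the setup being correct, its output satisfies exactly the hypotheses of Theorem \ref{Thm:KSampl}, namely $\Gamma \geq \norm{g(x)_S}_1$, $\abs{\Gamma - \norm{g(x)}_1} \leq \norm{g(x)}_1/\sqrt{T}$, $S = \{i : \abs{g(x)_i} \geq \Gamma/T\}$ and $M = \norm{g(x)_{[n]\setminus S}}_\infty$, so line 2 draws the $i^j$ in expected time $\bo{\sqrt{Tn} \cdot \eo}$, while line 3 costs $\bo{T \cdot \eo}$, absorbed into the sampling cost since $T < n$. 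Summing yields the claimed $\costg(T,\eps) = \bo{(\sqrt{nT} + \sqrt{n}/\eps)\log(1/\delta) \cdot \eo}$.

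The main obstacle I anticipate is not any individual estimate but the bookkeeping that reconciles the probabilistic and the expected-time guarantees. Proposition \ref{Prop:setupShort} only certifies $(\Gamma,S,M)$ with probability $1-\delta$, so the identities of Fact \ref{Fact:estBias}, and hence conditions (2)--(3), hold only on that event; I would therefore phrase the conclusion as holding with probability $1-\delta$, exactly as stated. The deterministic abort built into Algorithm \ref{Algo:setup} is what permits bounding the setup time unconditionally, so that the overall complexity survives the appending of the merely expected-time sampler of Theorem \ref{Thm:KSampl}. The single point requiring genuine care is the transfer of the relative error $\eps/3$ into the absolute bias $\eps$, which is precisely where the boundedness $\norm{g(x)}_1 \leq 3$ of the \lv\ subgradient is indispensable.
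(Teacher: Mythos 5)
Your proposal follows the paper's own route in all essentials: condition on the event that the setup parameters of Proposition \ref{Prop:setupShort} are valid, invoke Theorem \ref{Thm:KSampl} for the $T$ independent samples from $\dis{g(x)}(\Gamma,S)$, read off the bias and norm from Fact \ref{Fact:estBias}, convert the relative error $\eps/3$ into an absolute bias via $\norm{g(x)}_1 \leq 3$ from Proposition \ref{Prop:lv}, and sum the three line costs with line 1 dominating. Your explicit construction of the evaluation oracle for $g(x)$ from $\ds(x)$ (rank lookup in $T_x$ plus two queries to $\mathcal{O}_P$) is a detail the paper leaves implicit, and it is correct. Two steps, however, do not go through as written.

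First, for condition (3) you bound $\norm{\wg^j}_2 = \Gamma \leq (1+1/\sqrt{T})\norm{g(x)}_1$ and declare this ``a constant.'' Assumption \ref{Assp:one} demands the specific bound $\norm{\wg^j}_2 \leq 4$, and for $T = 2$ your estimate gives $(1+1/\sqrt{2})\cdot 3 > 5$. You should instead use the other branch of the $\min$ in Proposition \ref{Prop:setupShort}: since $\qgs$ calls the setup with relative error $\eps/3$ and $\eps < 1$, one has $\Gamma \leq (1+\eps/3)\norm{g(x)}_1 \leq 4$, which is the paper's bound. The constant is not cosmetic: Theorem \ref{Thm:gradSeq} derives $\norm{\wg\super{t}}_2 \leq 18$ precisely as $4 + 7 + 7$. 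Second, your claim that, given a correct setup, ``conditioning on $\wg^1,\dots,\wg^{j-1}$ leaves the law of $\wg^j$ unchanged'' is not literally true: the $i^j$ are independent only \emph{conditioned on} $(\Gamma,S,M)$, and each earlier estimate $\wg^{j'}$ in fact reveals $\Gamma$ as the magnitude of its nonzero entry, so the conditional law of $\wg^j$ given the earlier estimates is a mixture over the posterior of the setup parameters; the equality $\ex{\wg^j \mid \wg^1,\dots,\wg^{j-1},x} = \ex{Y_{g(x)}}$ with a fixed $(\Gamma,S)$ is unjustified. The paper closes this by the law of total expectation: because the bias bound $\abs{\Gamma - \norm{g(x)}_1} \leq \eps$ holds uniformly over all valid $(\Gamma,S,M)$ on the success event, averaging over their posterior preserves it. Since your bound is uniform in exactly this sense, the fix is one line, but the step as stated is a gap rather than a proof.
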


\begin{proof}
  Let us denote $\gd$ the event that the setup parameters $(\Gamma,S,M)$ computed at line $1$ of the algorithm are valid (i.e. they satisfy the properties given in Proposition \ref{Prop:setupShort}). We have $\pb{\gd} \geq 1-\delta$. According to Theorem \ref{Thm:KSampl}, if $\gd$ holds, then $i^1, \dots, i^T$ are $T$ independent samples from $\dis{g(x)}(\Gamma,S)$. Consequently, using Fact \ref{Fact:estBias}, we have $\norm{\ex{\wg^j | x, (\Gamma, S, M), \gd} - g(x)}_1 = \abs{\Gamma - \norm{g(x)}_1} \leq (\eps/3) \norm{g(x)}_1 \leq \eps$ and $\norm{\wg^j}_2 \leq (1+\eps/3) \norm{g(x)}_1 \leq 4$. Moreover, since $\wg^1, \dots, \wg^T$ are independent \emph{conditioned on} $(\Gamma,S,M)$, by the law of total expectation $\norm{\ex{\wg^j | x, \wg^1, \dots, \wg^{j-1}, \gd} - g(x)}_1 = \norm[\big]{\ex[\big]{\ex{\wg^j | x, (\Gamma, S, M), \gd} | \wg^1, \dots, \wg^{j-1}, \gd} - g(x)}_1 \leq \eps$. Finally, line 1 takes time $\bo{(\sqrt{nT} + \sqrt{n}/\eps)\log(1/\delta) \cdot \eo}$, line 2 takes time $\bo{\sqrt{nT} \cdot \eo}$, and line 3 takes time $\bo{T \cdot \eo}$.
\end{proof}

\begin{algorithm}[htbp]
\caption{Quantum Gradient Difference Sampling ($\qgds$).}
\label{Algo:QGDS}
\normalsize
\textbf{Input:} $x, x+e \in [0,1]^n$ and a cover $\mathcal{I} = \{I_1, \dots, I_{k'}\}$ of $(x,x+e)$ stored in $\ds(x,x+e,\mathcal{I})$, where $e$ is $k$-sparse and $k' \leq 9k$, two reals $0 < \eps, \delta < 1$. \\
\textbf{Output:} an estimate $\wdi$ of $g(x+e) - g(x)$. \vspace{0.1cm}

Let $u = (\norm*{g(x+e)_{I_s} - g(x)_{I_s}}_1)_{s \in [k']} \in \R^{k'}$. \vspace{0.1cm}

\begin{algorithmic}[1]
  \State Compute $\norm{u}_{\infty}$ with success probability $1-\delta/2$, using D\"urr-H{\o}yer's algorithm \cite{DH96p} and Proposition \ref{Prop:evalOracle}. Denote the result by $M$.
  \State Compute an estimate $\Gamma$ of $\norm{u}_1$ with relative error $\eps/6$ and success probability $1-\delta/2$, using Lemma \ref{Lem:amplEst} and Proposition \ref{Prop:evalOracle}.
  \State Sample $s \sim \dis{u}$ using Lemma \ref{Lem:samplOne} on input $u$ and $M$.
  \State Sample $i \sim \dis{g(x+e)_{I_s} - g(x)_{I_s}}$ using the subsampling operation of Proposition \ref{Prop:dataOpe}.
  \State Compute $g(x+e)_i - g(x)_i$ and output $\wdi = \Gamma \sgn(g(x+e)_i - g(x)_i) \cdot \vec{1}_i$.
\end{algorithmic}
\end{algorithm}

\begin{proposition}
  \label{Prop:sampldQu}
  The quantum procedure $\qgds(x,e,\eps,\delta)$ of Algorithm \ref{Algo:QGDS} satisfies the conditions given on $\sampld(x,e,\eps)$ in Assumption \ref{Assp:two} with probability $1-\delta$, and time complexity $\costg(k,\eps) = \so{\sqrt{k}/\eps \cdot \log(1/\delta) \cdot \eo}$.
\end{proposition}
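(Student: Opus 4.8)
The plan is to mirror the sampling-correctness argument of Proposition~\ref{Prop:sampldCl} and the estimated-normalization argument of Proposition~\ref{Prop:samplgQu}, with the bias controlled through Fact~\ref{Fact:estBias}. Write $v = \gl(x+e) - \gl(x)$, so that $u = \bigl(\norm{v_{I_s}}_1\bigr)_{s \in [k']}$ and $\norm{u}_1 = \sum_s \norm{v_{I_s}}_1 = \norm{v}_1$ because $\mathcal{I}$ partitions $[n]$. Let $\gd$ be the event that D\"urr-H{\o}yer in step~1 and the Amplitude Estimation in step~2 both succeed; a union bound over their $\delta/2$ failure probabilities gives $\pb{\gd} \geq 1-\delta$. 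Conditioned on $\gd$ the value $M = \norm{u}_{\infty}$ is correct, so Lemma~\ref{Lem:samplOne} draws $s$ from the exact distribution $\dis{u}$ (probability $\norm{v_{I_s}}_1/\norm{v}_1$) in step~3, and the subsampling operation of Proposition~\ref{Prop:dataOpe} draws $i \in I_s$ with probability $\abs{v_i}/\norm{v_{I_s}}_1$ in step~4. Composing the two levels, $i$ is distributed exactly as $\dis{v}$, and crucially its distribution does not depend on the value of the estimate $\Gamma$.

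With the distribution of $i$ in hand, I would verify the three conditions of Assumption~\ref{Assp:two} conditioned on $\gd$. Condition~(1) is immediate since $\wdi = \Gamma \sgn(v_i)\vec{1}_i$ has a single nonzero coordinate. For condition~(2), applying Fact~\ref{Fact:estBias} to $v$ with the computed $\Gamma$ and the empty set (so that $\dis{v}(\Gamma,\varnothing) = \dis{v}$) gives $\norm{\ex{\wdi | x,e,\Gamma,M,\gd} - v}_1 = \abs{\Gamma - \norm{v}_1}$. Since step~2 guarantees relative error $\eps/6$ and $\norm{v}_1 \leq \norm{\gl(x+e)}_1 + \norm{\gl(x)}_1 \leq 6$ by the boundedness part of Proposition~\ref{Prop:lv}, this is at most $(\eps/6)\norm{v}_1 \leq \eps$; the law of total expectation then passes the bound to the conditioning on $(x,e,\gd)$ alone, exactly as in Proposition~\ref{Prop:samplgQu}. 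For condition~(3), $\norm{\wdi}_2 = \Gamma \leq (1 + \eps/6)\norm{v}_1 \leq 7$.

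For the run-time I would bound each line using $k' \leq 9k$ and the $\bo{\log n + \eo}$ cost of the evaluation oracle of Proposition~\ref{Prop:evalOracle}. Step~1 runs D\"urr-H{\o}yer over the $k'$ coordinates of $u$ in $\so{\sqrt{k}\log(1/\delta)\cdot\eo}$; conditioned on $M = \norm{u}_{\infty}$ being correct we have $\sqrt{k'M/\norm{u}_1} \leq \sqrt{k'}$, so step~2 costs $\so{\sqrt{k}/\eps \cdot \log(1/\delta)\cdot\eo}$ by Lemma~\ref{Lem:amplEst} and step~3 costs $\so{\sqrt{k}\cdot\eo}$ by Lemma~\ref{Lem:samplOne}, while steps~4 and~5 cost $\bo{\log(n)\cdot\eo}$. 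The dominant contribution is step~2, giving $\costd(k,\eps) = \so{\sqrt{k}/\eps \cdot \log(1/\delta)\cdot\eo}$.

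The main obstacle is the conditioning bookkeeping rather than any new calculation: one must isolate the event $\gd$ on which $(\Gamma,M)$ are valid, verify that the two-level sampling producing $i \sim \dis{v}$ is genuinely independent of $\Gamma$ (so the bias is exactly $\abs{\Gamma - \norm{v}_1}$ and not entangled with the index distribution), and confirm that the precondition $M \geq \norm{u}_{\infty}$ required by Lemmas~\ref{Lem:samplOne} and~\ref{Lem:amplEst} is inherited from the success of step~1. Once these dependencies are spelled out, the three conditions and the cost bound follow from the routine substitutions above.
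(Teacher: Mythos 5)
Your proposal is correct and follows essentially the same route as the paper's proof: condition on the event $\gd$ that the D\"urr--H{\o}yer and Amplitude Estimation subroutines succeed, observe that the two-level sampling then produces $i \sim \dis{g(x+e)-g(x)}$ exactly (using $\norm{u}_1 = \norm{g(x+e)-g(x)}_1$), invoke Fact~\ref{Fact:estBias} to get bias $\abs{\Gamma - \norm{g(x+e)-g(x)}_1} \leq (\eps/6)\norm{g(x+e)-g(x)}_1 \leq \eps$ and $\norm{\wdi}_2 = \Gamma \leq 7$, and bound the cost line by line with step~2 dominating. Your explicit remarks --- applying Fact~\ref{Fact:estBias} with $S = \varnothing$, noting that the index distribution is independent of $\Gamma$, and spelling out $\norm{g(x+e)-g(x)}_1 \leq 6$ via Proposition~\ref{Prop:lv} --- only make explicit what the paper leaves implicit.
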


\begin{proof}
  Let us denote $\gd$ the event that $\Gamma$ and $M$ are valid, i.e. $\abs{\Gamma - \norm{g(x+e) - g(x)}_1} \leq (\eps/6) \norm{g(x+e) - g(x)}_1$ and $M = \norm{u}_{\infty}$. We have $\pb{\gd} \geq 1-\delta$. According to Lemma \ref{Lem:samplOne}, if $\gd$ holds, then $s$ is sampled from $\dis{u}$. In this case, the value $i$ computed at line 4 is distributed according to $\frac{\norm{g(x+e)_{I_s} - g(x)_{I_s}}_1}{\norm{u}_1} \cdot \frac{\abs{\gl(x+e)_i - \gl(x)_i}}{\norm{\gl(x+e)_{I_s} - \gl(x)_{I_s}}_1} = \frac{\abs{\gl(x+e)_i - \gl(x)_i}}{\norm{\gl(x+e) - \gl(x)}_1}$ (since $\norm{u}_1 = \norm{g(x+e) - g(x)}_1$). This corresponds to $\dis{g(x+e) - g(x)}$. Consequently, using Fact \ref{Fact:estBias}, $\norm{\ex{\wdi | x, e, \Gamma, \gd} - (g(x+e) - g(x))}_1 = \norm[\big]{\frac{\Gamma}{\norm{g(x+e) - g(x)}_1}(g(x+e) - g(x)) - (g(x+e) - g(x))}_1 = \abs{\Gamma - \norm{g(x+e) - g(x)}_1} \leq \eps/6 \norm{g(x+e) - g(x)}_1 \leq \eps$. Thus, $\norm{\ex{\wdi | x, e, \gd} - (g(x+e) - g(x))}_1 \leq \eps$. Moreover, $\norm{\wdi}_2 = \Gamma \leq (1+\eps/6) \norm{g(x+e) - g(x)}_1 \leq 7$. Finally, line 1 takes time $\so{\sqrt{k} \cdot \log(1/\delta) \cdot \eo}$, line 2 takes time $\so{\sqrt{k}/\eps \cdot \log(1/\delta) \cdot \eo}$, line 3 takes time $\bo{\sqrt{k} \cdot \eo}$, and lines 4 and 5 take time $\so{\eo}$.
\end{proof}

Finally, we analyze the cost of using the two above procedures in the subgradient descent algorithm studied in Section \ref{Sec:frame}. Unlike in the previous section, the time complexities $\costg$ and $\costd$ of $\qgs$ and $\qgds$ depend on the accuracy $\eps$. Consequently, it is more efficient to combine $\qgs$ with the classical procedure $\cgds$ when $n^{-1/2} \leq \eps \leq n^{-1/6}$, and to use Theorem \ref{Thm:clSub} when $\eps \leq n^{-1/2}$.

\begin{theorem}
  \label{Thm:quSub}
  There is a quantum algorithm that, given a submodular function $F : 2^V \ra [-1,1]$ and $\eps > 0$, computes a set $\bar{S}$ such that $\ex{{F}(\bar{S})} \leq \min_{S \subseteq V} F(S) + \eps$ in time $\so{n^{5/4}/\eps^{5/2} \cdot \log(\frac{1}{\eps}) \cdot \eo}$.
\end{theorem}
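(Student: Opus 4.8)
The plan is to mirror the proof of Theorem~\ref{Thm:clSub}: instantiate the subgradient descent of Algorithm~\ref{Algo:submDesc}, but now driven by the quantum procedures $\qgs$ (Proposition~\ref{Prop:samplgQu}) and $\qgds$ (Proposition~\ref{Prop:sampldQu}) in place of their classical counterparts, and then invoke Corollary~\ref{Cor:submDesc} to control both the error and the running time. To push the error $18\sqrt{n/N} + \eps_0 + 2\eps_1$ below $\eps$, I would take $N = \Theta(n/\eps^2)$ and $\eps_0 = \eps_1 = \Theta(\eps)$. The returned point $\bar x$ then satisfies $\ex{f(\bar x)} \leq \min_x f(x) + \eps$, and Proposition~\ref{Prop:lv} converts it into a set $\bar S$ with $\ex{F(\bar S)} \leq \min_S F(S) + \eps$ at the negligible extra cost $\bo{n\log n + n\cdot\eo}$; the data structures feeding $\qgs$ and $\qgds$ are maintained incrementally in $\bo{\log n}$ per step by Corollary~\ref{Cor:dataMaintain}, contributing only $\so{n/\eps^2}$ in total, which is always dominated.

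I would then substitute $\costg(T,\eps_0) = \bo{(\sqrt{nT}+\sqrt{n}/\eps)\cdot\eo}$ and $\sum_{\tau=1}^{T}\costd(3\tau,\eps_1) = \so{T^{3/2}/\eps\cdot\eo}$ (suppressing the $\log(1/\delta)$ factors) into the run-time $\frac{N}{T}\bigl(\costg(T,\eps_0)+\sum_\tau \costd(3\tau,\eps_1)\bigr)$ of Corollary~\ref{Cor:submDesc}. With $N=\Theta(n/\eps^2)$ this reduces, up to polylogs, to $\frac{n^{3/2}}{\eps^2\sqrt T} + \frac{n^{3/2}}{\eps^3 T} + \frac{n\sqrt T}{\eps^3}$. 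Balancing the first and third terms gives $T = \Theta(n^{1/2}\eps)$, at which both equal $\frac{n^{5/4}}{\eps^{5/2}}$, while the middle term becomes $\frac{n}{\eps^4}$; the latter is dominated exactly when $\eps \geq n^{-1/6}$. This settles the regime $\eps \geq n^{-1/6}$ with the claimed bound.

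For smaller $\eps$ the quantum difference sampling is no longer advantageous, so I would split into two further cases, as foreshadowed before the statement. When $n^{-1/2}\leq \eps \leq n^{-1/6}$, I would keep $\qgs$ but swap in the \emph{classical} $\cgds$ (Proposition~\ref{Prop:sampldCl}), which is exact ($\eps_1 = 0$) with $\costd(k,0) = \so{k\cdot\eo}$, so $\sum_\tau \costd(3\tau,0) = \so{T^2\cdot\eo}$; the run-time becomes $\frac{n^{3/2}}{\eps^2\sqrt T} + \frac{n^{3/2}}{\eps^3 T} + \frac{nT}{\eps^2}$, and the choice $T = \Theta(n^{1/4}/\eps^{1/2})$ makes the last two terms equal $\frac{n^{5/4}}{\eps^{5/2}}$ while the first, $\frac{n^{11/8}}{\eps^{7/4}}$, is dominated precisely when $\eps \leq n^{-1/6}$. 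When $\eps \leq n^{-1/2}$, I would simply invoke the classical algorithm of Theorem~\ref{Thm:clSub}: its cost $\so{n^{3/2}/\eps^2\cdot\eo}$ is at most $\so{n^{5/4}/\eps^{5/2}\cdot\eo}$ exactly in this range. One verifies in passing that each prescribed $T$ lies in $(1,n)$, so all instantiations are legitimate.

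The step I expect to require the most care is the probabilistic bookkeeping rather than the arithmetic. Corollary~\ref{Cor:submDesc} is stated for procedures meeting Assumptions~\ref{Assp:one} and~\ref{Assp:two} deterministically, whereas $\qgs$ and $\qgds$ meet them only with probability $1-\delta$ per call. The plan is to run every subroutine with failure parameter $\delta = \Theta(\eps/N)$ and union bound over the $\bo{N}$ calls, so that with probability $1-\bo{\eps}$ \emph{all} of them are valid and the noisy-subgradient analysis of Theorem~\ref{Thm:gradSeq} applies; since $F \in [-1,1]$, the rare failure event perturbs $\ex{F(\bar S)}$ by at most $\bo{\eps}$, which a constant-factor adjustment of the error budget absorbs. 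The delicate point here is that conditioning on the all-valid event can correlate the per-step estimates, so the conditional version of the descent bound must be argued with care; I would handle this by clipping outputs to the norm bounds of Assumptions~\ref{Assp:one}--\ref{Assp:two} so that invalid steps still contribute bounded, hence controllable, noise. This choice of $\delta$ inflates each quantum subroutine by $\log(1/\delta) = \bo{\log(n/\eps)} = \bo{\log n + \log(1/\eps)}$; the $\log n$ is hidden by $\so{}$, leaving exactly the stated $\log(1/\eps)$ overhead.
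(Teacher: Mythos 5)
Your proposal is correct and follows essentially the same route as the paper's proof: the same instantiation of Algorithm~\ref{Algo:submDesc} with $\qgs$/$\qgds$ and parameters $T=\Theta(\eps\sqrt{n})$ for $\eps\geq n^{-1/6}$, the same switch to $\qgs$ with the classical $\cgds$ and $T=\Theta(n^{1/4}/\eps^{1/2})$ for smaller $\eps$, and the same union bound with $\delta=\Theta(\eps/N)$ exploiting $F\in[-1,1]$ to absorb the failure event, yielding the $\log(1/\eps)$ overhead. The only cosmetic differences are that you carve out an explicit third case $\eps\leq n^{-1/2}$ via Theorem~\ref{Thm:clSub} (which the paper's prose suggests but its written proof folds into the $\eps\leq n^{-1/6}$ case) and that you flag, with a clipping fix, the conditioning-on-$\gd$ subtlety that the paper passes over silently; both are sound refinements rather than deviations.
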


\begin{proof}
  We distinguish two cases, depending on the value of $\eps$. First, if $\eps \geq n^{-1/6}$, we instantiate Algorithm \ref{Algo:submDesc} with the quantum procedures $\qgs$ and $\qgds$ of Algorithms \ref{Algo:QGS} and \ref{Algo:QGDS} respectively. We choose the input parameters $T = \eps \sqrt{n}$, $N = 72^2n/\eps^2$, $\eps_0 = \eps/4$, $\eps_1 = \eps/8$ and $\delta = \eps/(8N)$. According to Corollary \ref{Cor:submDesc} and Corollary \ref{Cor:dataMaintain}, the run-time is $\so[\Big]{\frac{N}{T} \left(\sqrt{nT} + \frac{\sqrt{n}}{\eps} + \sum_{\tau = 1}^{T} \frac{\sqrt{\tau}}{\eps}\right) \cdot \allowbreak \log(1/\delta) \cdot \eo} = \so{n^{5/4}/\eps^{5/2} \cdot \log(1/\eps) \cdot \eo}$. Let us denote $\gd$ the event that \emph{all} calls to $\qgs$ and $\qgds$ in Algorithm \ref{Algo:submDesc} are correct (i.e. they satisfy Assumptions \ref{Assp:one} and \ref{Assp:two}). By the union bound, $\pb{\gd} \geq 1 - 2N\delta \geq 1 - \eps/4$. Moreover, according to Corollary \ref{Cor:submDesc}, the output $\bar{x}$ satisfies $\ex{f(\bar{x}) | \gd} \leq \fl(\xopt) + 3\eps/4$, where $\xopt \in \argmin_x f(x)$. Consequently, $\ex{f(\bar{x})} = \pb{\gd} \cdot \ex{f(\bar{x}) | \gd} + (1 - \pb{\gd}) \cdot \ex{f(\bar{x}) | \overline{\gd}} \leq 1 \cdot (\fl(\xopt) + 3\eps/4) + \eps/4 \cdot 1 \leq \fl(\xopt) + \eps$. Finally, using Proposition \ref{Prop:lv}, we can convert $\bar{x}$ into a set $\bar{S} \subseteq V$ such that $\ex{{F}(\bar{S})} \leq \min_{S \subseteq V} F(S) + \eps$ in time $\bo{n \log n + n \cdot \eo}$.

  If $\eps \leq n^{-1/6}$, we instantiate Algorithm \ref{Algo:submDesc} with the quantum procedure $\qgs$ of Algorithm \ref{Algo:QGS} and the classical procedure $\cgds$ of Algorithm \ref{Algo:CGDS}. We choose the input parameters $T = n^{1/4}/\eps^{1/2}$, $N = 54^2n/\eps^2$, $\eps_0 = \eps/3$, $\eps_1 = 0$ and $\delta = \eps/(3N)$. The run-time is $\so[\Big]{\frac{N}{T} \left(\left(\sqrt{nT} + \frac{\sqrt{n}}{\eps}\right)\log\frac{1}{\delta} + \sum_{\tau = 1}^{T} \tau\right) \cdot \eo} = \so{n^{5/4}/\eps^{5/2} \cdot \log(1/\eps) \cdot \eo}$. The proof of correctness is similar to the above paragraph.
\end{proof}

\section*{Acknowledgements}

The work of P.R., A.R. and M.S. was supported by the Singapore National Research Foundation, the Prime Minister's Office, Singapore and the Ministry of Education, Singapore under the Research Centres of Excellence programme under research grant R 710-000-012-135. The work of Y.H. and M.S. was supported by the QuantERA ERA-NET Cofund project QuantAlgo and the ANR project ANR-18-CE47-0010 QUDATA. This work was partly done while Y.H. was visiting CQT.


\newpage
\bibliography{Bibliography}

\newcommand{\etalchar}[1]{$^{#1}$}
\begin{thebibliography}{AGGW20}

\bibitem[Aar15]{Aar15j}
S.~Aaronson.
\newblock Read the fine print.
\newblock {\em Nature Physics}, 11(4):291--293, 2015.

\bibitem[AG19a]{vAG19c}
J.~van Apeldoorn and A.~Gily{\'{e}}n.
\newblock Improvements in quantum {SDP}-solving with applications.
\newblock In {\em Proceedings of the 46th International Colloquium on Automata,
  Languages, and Programming (ICALP)}, pages 99:1--99:15, 2019.

\bibitem[AG19b]{vAG19p}
J.~van Apeldoorn and A.~Gily{\'{e}}n.
\newblock Quantum algorithms for zero-sum games, 2019.
\newblock {\tt \href{http://arxiv.org/pdf/1904.03180v1.pdf}{arXiv:1904.03180v1}
  [quant-ph]}.

\bibitem[AGGW20]{vAGGdW20j}
J.~van Apeldoorn, A.~Gily{\'{e}}n, S.~Gribling, and R.~de Wolf.
\newblock Convex optimization using quantum oracles.
\newblock {\em {Quantum}}, 4:220, 2020.

\bibitem[AK16]{AK16j}
S.~Arora and S.~Kale.
\newblock A combinatorial, primal-dual approach to semidefinite programs.
\newblock {\em Journal of the ACM}, 63(2):1--35, 2016.

\bibitem[ALS20]{ALS20c}
B.~Axelrod, Y.~P. Liu, and A.~Sidford.
\newblock Near-optimal approximate discrete and continuous submodular function
  minimization.
\newblock In {\em Proceedings of the 31st Symposium on Discrete Algorithms
  (SODA)}, pages 837--853, 2020.

\bibitem[Bac13]{Bac13j}
F.~Bach.
\newblock Learning with submodular functions: {A} convex optimization
  perspective.
\newblock {\em Foundations and Trends in Machine Learning}, 6(2-3):145--373,
  2013.

\bibitem[BBHT98]{BBHT98j}
M.~Boyer, G.~Brassard, P.~H{\o}yer, and A.~Tapp.
\newblock Tight bounds on quantum searching.
\newblock {\em Fortschritte der Physik}, 46(4-5):493--505, 1998.

\bibitem[BFS87]{BFS87b}
P.~Bratley, B.~L. Fox, and L.~E. Schrage.
\newblock {\em A Guide to Simulation}.
\newblock Springer-Verlag, second edition, 1987.

\bibitem[BHMT02]{BHMT02j}
G.~Brassard, P.~H{\o}yer, M.~Mosca, and A.~Tapp.
\newblock Quantum amplitude amplification and estimation.
\newblock {\em Contemporary Mathematics}, 305:53--74, 2002.

\bibitem[BS17]{BS17c}
F.~G. S.~L. Brand{\~{a}}o and K.~M. Svore.
\newblock Quantum speed-ups for solving semidefinite programs.
\newblock In {\em Proceedings of the 58th Symposium on Foundations of Computer
  Science (FOCS)}, pages 415--426, 2017.

\bibitem[BVZ01]{BVZ01j}
Y.~Boykov, O.~Veksler, and R.~Zabih.
\newblock Fast approximate energy minimization via graph cuts.
\newblock {\em IEEE Transactions on Pattern Analysis and Machine Intelligence},
  23(11):1222--1239, 2001.

\bibitem[BWP{\etalchar{+}}17]{BWP+17j}
J.~Biamonte, P.~Wittek, N.~Pancotti, P.~Rebentrost, N.~Wiebe, and S.~Lloyd.
\newblock Quantum machine learning.
\newblock {\em Nature}, 549(7671):195--202, 2017.

\bibitem[CCLW18]{CCLW18p}
S.~Chakrabarti, A.~M. Childs, T.~Li, and X.~Wu.
\newblock Quantum algorithms and lower bounds for convex optimization, 2018.
\newblock {\tt \href{http://arxiv.org/pdf/1809.01731v3.pdf}{arXiv:1809.01731v3}
  [quant-ph]}.

\bibitem[CHI{\etalchar{+}}18]{CHI+18j}
C.~Ciliberto, M.~Herbster, A.~D. Ialongo, M.~Pontil, A.~Rocchetto, S.~Severini,
  and L.~Wossnig.
\newblock Quantum machine learning: {A} classical perspective.
\newblock {\em Proceedings of the Royal Society A: Mathematical, Physical and
  Engineering Sciences}, 474(2209):20170551, 2018.

\bibitem[CHW12]{CHW12j}
K.~L. Clarkson, E.~Hazan, and D.~P. Woodruff.
\newblock Sublinear optimization for machine learning.
\newblock {\em Journal of the ACM}, 59(5):1--49, 2012.

\bibitem[CLSW17]{CLSW17c}
D.~Chakrabarty, Y.~T. Lee, A.~Sidford, and S.~C.-w. Wong.
\newblock Subquadratic submodular function minimization.
\newblock In {\em Proceedings of the 49th Symposium on Theory of Computing
  (STOC)}, pages 1220--1231, 2017.

\bibitem[Cun85]{Cun85j}
W.~H. Cunningham.
\newblock On submodular function minimization.
\newblock {\em Combinatorica}, 5(3):185--192, 1985.

\bibitem[Dev86]{Dev86b}
L.~Devroye.
\newblock {\em Non-Uniform Random Variate Generation}.
\newblock Springer-Verlag, 1986.

\bibitem[DH96]{DH96p}
C.~D{\"{u}}rr and P.~H{\o}yer.
\newblock A quantum algorithm for finding the minimum, 1996.
\newblock {\tt
  \href{http://arxiv.org/pdf/quant-ph/9607014v2.pdf}{arXiv:quant-ph/9607014v2}}.

\bibitem[Duc18]{Duc18b}
J.~C. Duchi.
\newblock Introductory lectures on stochastic optimization.
\newblock In {\em The Mathematics of Data}, volume~25 of {\em IAS/Park City
  Mathematics Series}, pages 99--185. American Mathematical Society, 2018.

\bibitem[Fuj80]{Fuj80j}
S.~Fujishige.
\newblock Lexicographically optimal base of a polymatroid with respect to a
  weight vector.
\newblock {\em Mathematics of Operations Research}, 5(2):186--196, 1980.

\bibitem[GAW19]{GAW19c}
A.~Gily{\'{e}}n, S.~Arunachalam, and N.~Wiebe.
\newblock Optimizing quantum optimization algorithms via faster quantum
  gradient computation.
\newblock In {\em Proceedings of the 30th Symposium on Discrete Algorithms
  (SODA)}, pages 1425--1444, 2019.

\bibitem[GK95]{GK95j}
M.~D. Grigoriadis and L.~G. Khachiyan.
\newblock A sublinear-time randomized approximation algorithm for matrix games.
\newblock {\em Operations Research Letters}, 18(2):53--58, 1995.

\bibitem[GLS81]{GLS81j}
M.~Gr{\"{o}}tschel, L.~Lov{\'{a}}sz, and A.~Schrijver.
\newblock The ellipsoid method and its consequences in combinatorial
  optimization.
\newblock {\em Combinatorica}, 1(2):169--197, 1981.

\bibitem[GLS88]{GLS88b}
M.~Gr{\"{o}}tschel, L.~Lov{\'{a}}sz, and A.~Schrijver.
\newblock {\em Geometric Algorithms and Combinatorial Optimization}, volume~2
  of {\em Algorithms and Combinatorics}.
\newblock Springer, 1988.

\bibitem[Gro00]{Gro00j}
L.~K. Grover.
\newblock Synthesis of quantum superpositions by quantum computation.
\newblock {\em Physical Review Letters}, 85(6):1334--1337, 2000.

\bibitem[GS78]{GS78c}
L.~J. Guibas and R.~Sedgewick.
\newblock A dichromatic framework for balanced trees.
\newblock In {\em Proceedings of the 19th Symposium on Foundations of Computer
  Science (FOCS)}, pages 8--21, 1978.

\bibitem[Har08]{Har08d}
N.~J.~A. Harvey.
\newblock {\em Matchings, Matroids and Submodular Functions}.
\newblock PhD thesis, Massachusetts Institute of Technology, 2008.

\bibitem[HK12]{HK12j}
E.~Hazan and S.~Kale.
\newblock Online submodular minimization.
\newblock {\em Journal of Machine Learning Research}, 13(1):2903--2922, 2012.

\bibitem[Hoc01]{Hoc01j}
D.~S. Hochbaum.
\newblock An efficient algorithm for image segmentation, {Markov} random fields
  and related problems.
\newblock {\em Journal of the ACM}, 48(4):686--701, 2001.

\bibitem[IB13]{IB13c}
R.~Iyer and J.~A. Bilmes.
\newblock Submodular optimization with submodular cover and submodular knapsack
  constraints.
\newblock In {\em Proceedings of the 26th International Conference on Neural
  Information Processing Systems (NIPS)}, pages 2436--2444, 2013.

\bibitem[IFF01]{IFF01j}
S.~Iwata, L.~Fleischer, and S.~Fujishige.
\newblock A combinatorial strongly polynomial algorithm for minimizing
  submodular functions.
\newblock {\em Journal of the ACM}, 48(4):761--777, 2001.

\bibitem[JB11]{JB11c}
S.~Jegelka and J.~A. Bilmes.
\newblock Online submodular minimization for combinatorial structures.
\newblock In {\em Proceedings of the 28th International Conference on Machine
  Learning (ICML)}, pages 345--352, 2011.

\bibitem[Jor05]{Jor05j}
S.~P. Jordan.
\newblock Fast quantum algorithm for numerical gradient estimation.
\newblock {\em Physical Review Letters}, 95(5):050501, 2005.

\bibitem[KC10]{KC10c}
A.~Krause and V.~Cevher.
\newblock Submodular dictionary selection for sparse representation.
\newblock In {\em Proceedings of the 27th International Conference on Machine
  Learning (ICML)}, pages 567--574, 2010.

\bibitem[LB10]{LB10p}
H.~Lin and J.~A. Bilmes.
\newblock An application of the submodular principal partition to training data
  subset selection, 2010.
\newblock NIPS Workshop on Discrete Optimization in Machine Learning (DISCML).

\bibitem[LCW19]{LCW19c}
T.~Li, S.~Chakrabarti, and X.~Wu.
\newblock Sublinear quantum algorithms for training linear and kernel-based
  classifiers.
\newblock In {\em Proceedings of the 36th International Conference on Machine
  Learning (ICML)}, pages 3815--3824, 2019.

\bibitem[Lov82]{Lov82c}
L.~Lov{\'{a}}sz.
\newblock Submodular functions and convexity.
\newblock In {\em Proceedings of the 11th International Symposium on
  Mathematical Programming (ISMP)}, pages 235--257, 1982.

\bibitem[LSW15]{LSW15c}
Y.~T. Lee, A.~Sidford, and S.~C.-w. Wong.
\newblock A faster cutting plane method and its implications for combinatorial
  and convex optimization.
\newblock In {\em Proceedings of the 56th Symposium on Foundations of Computer
  Science (FOCS)}, pages 1049--1065, 2015.

\bibitem[Nar09]{Nar09b}
H.~Narayanan.
\newblock {\em Submodular Functions and Electrical Networks}.
\newblock North-Holland, 2009.

\bibitem[NKA11]{NKA11c}
K.~Nagano, Y.~Kawahara, and K.~Aihara.
\newblock Size-constrained submodular minimization through minimum norm base.
\newblock In {\em Proceedings of the 28th International Conference on Machine
  Learning (ICML)}, pages 977--984, 2011.

\bibitem[QS95]{QS95c}
M.~Queyranne and A.~S. Schulz.
\newblock Scheduling unit jobs with compatible release dates on parallel
  machines with nonstationary speeds.
\newblock In {\em Proceedings of the 4th Conference on Integer Programming and
  Combinatorial Optimization (IPCO)}, pages 307--320, 1995.

\bibitem[Sch00]{Sch00j}
A.~Schrijver.
\newblock A combinatorial algorithm minimizing submodular functions in strongly
  polynomial time.
\newblock {\em Journal of Combinatorial Theory Series B}, 80(2):346--355, 2000.

\bibitem[SLSB19]{SLSB19j}
Y.~R. Sanders, G.~H. Low, A.~Scherer, and D.~W. Berry.
\newblock Black-box quantum state preparation without arithmetic.
\newblock {\em Physical Review Letters}, 122(2):020502, 2019.

\bibitem[Vos91]{Vos91j}
M.~D. Vose.
\newblock A linear algorithm for generating random numbers with a given
  distribution.
\newblock {\em IEEE Transactions on Software Engineering}, 17(9):972--975,
  1991.

\bibitem[Wal74]{Wal74j}
A.~J. Walker.
\newblock New fast method for generating discrete random numbers with arbitrary
  frequency distributions.
\newblock {\em Electronics Letters}, 10(8):127--128, 1974.

\bibitem[Wol76]{Wol76j}
P.~Wolfe.
\newblock Finding the nearest point in a polytope.
\newblock {\em Mathematical Programming}, 11(1):128--149, 1976.

\end{thebibliography}

\appendix

  \section{Counterexample to the Gradient Sampling in \cite{CLSW17c}}
  \label{App:counter}
  Let $n=2$, let $F:2^{[2]}\rightarrow [-1,1]$ be a submodular function and let $f:[0,1]^2\rightarrow \R$ be its \lv\ extension. By definition (see Section \ref{Sec:Prelim}), the \lv\ subgradient of $f$ at $x = (x_1,x_2) \in [0,1]^2$ is
  \[
    g(x_1,x_2) =
      \begin{cases}
        \big(F(\{1\})-F(\varnothing),\,F(\{1,2\})-F(\{1\})\big) & \text{if } x_1\ge x_2\\
        \big(F(\{1,2\})-F(\{2\}),\,F(\{2\})-F(\varnothing)\big) & \text{if } x_1 < x_2.
      \end{cases}
  \]

Now let us consider a particular submodular function $F$. Let $F(\varnothing) = 0$, $F(\{1\})=-\frac{1}{2}$, $F(\{2\}) = 0$, $F(\{1,2\})=-1$, and therefore
  \[
    g(x_1,x_2) =
      \begin{cases}
        (-\frac{1}{2},\,-\frac{1}{2}) & \text{if } x_1\ge x_2\\
        (-1,\,0) & \text{if } x_1 < x_2.
      \end{cases}
  \]

The stochastic subgradient descent starts at $x\super{0}=(0,0)$, for which we have $g(x\super{0})=(-\frac{1}{2},-\frac{1}{2})$ and $\|g(x^{(0)})\|_1=1$. Hence, with probability $1/2$ each, we either have $\wg\super{0}=(-1,0)$ or $\wg\super{0}=(0,-1)$. Since $x\super{1} := \argmin_{x \in [0,1]^n} \norm{x - (x\super{0} - \eta \wg\super{0})}_2$ (for some step size parameter $0 < \eta < 1$), it follows that $x\super{1}=(\eta,0)$ or $x\super{1}=(0,\eta)$, respectively. Suppose the latter is the case: $\wg\super{0}=(0,-1)$ and $x\super{1}=(0,\eta)$. Then, the random variable $\wdi\super{1}$ is an estimate of the subgradient difference $d\super{1}=g(x\super{1})-g(x\super{0})=(-\frac{1}{2},\frac{1}{2})$, namely, $\ex{\wdi\super{1} | x\super{1}}=d\super{1}$. Now, observe that the random variable $\wg\super{1}=\wg\super{0}+\wdi\super{1}$ satisfies
  \[
  \ex{\wg\super{1}|x\super{1}} =
  \ex{\wg\super{0}|x\super{1}} + \ex{\wdi\super{1}|x\super{1}} =
  \wg\super{0} + d\super{1} = (-1/2,\,-1/2) \ne (-1,0) = g(x\super{1}).
  \]
Hence, the procedure in \cite{CLSW17c} that returns $\wg\super{t}$ is not a valid subgradient oracle. We note that $x\super{1}=(\eta,0)$ would have led to the same error. This problem can be fixed by defining $\wg\super{1}=\wwg\super{0}+\wdi\super{1}$ where $\wwg\super{0}$ is a second estimate of $g(x\super{0})$ \emph{independent} of $\wg\super{0}$, for instance $\wwg\super{0}=(-1,0)$ or $\wwg\super{0}=(0,-1)$ with probability $1/2$ each. In this case, we have $\ex{\wg\super{1} | x\super{1}} = g\super{0} + d\super{1} = g(x\super{1})$.


\end{document}